\newcommand{\myvec}[1]{{\boldsymbol#1}}
\newcommand{\Arg}{{\rm{Arg}}}
\newtheorem{remark}{Remark}[section]
\newtheorem{theorem}{Theorem}[section]
\newtheorem{proposition}{Proposition}[section]
\newtheorem{corollary}{Corollary}[section]
\begin{document}
\title{An extended charge-current formulation of the electromagnetic
  transmission problem} 
\author{Johan Helsing\thanks{Centre for Mathematical Sciences, Lund
    University, Sweden}~~and Anders Karlsson\thanks{Electrical and
    Information Technology, Lund University, Sweden}} 

\date{February 15, 2020} 
\maketitle

\begin{abstract}
  A boundary integral equation formulation is presented for the
  electromagnetic transmission problem where an incident
  electromagnetic wave is scattered from a bounded dielectric object.
  The formulation provides unique solutions for all combinations of
  wavenumbers in the closed upper half-plane for which Maxwell's
  equations have a unique solution. This includes the challenging
  combination of a real positive wavenumber in the outer region and an
  imaginary wavenumber inside the object. The formulation, or variants
  thereof, is particularly suitable for numerical field evaluations as
  confirmed by examples involving both smooth and non-smooth objects.
\end{abstract}

\section{Introduction}
\label{sec:introduction}

This work is about transmission problems. A simply connected
homogeneous isotropic object is located in a homogeneous isotropic
exterior region. A time harmonic incident wave, generated in the
exterior region, is scattered from the object. The aim is to evaluate
the fields in the interior and exterior regions.

We present boundary integral equation (BIE) formulations for the
solution of the scalar Helmholtz and the electromagnetic Maxwell
transmission problems. We show that our integral equations have unique
solutions for all wavenumbers $k_1$ of the exterior domain and $k_2$
of the object with $0\leq\Arg(k_1),\Arg(k_2)<\pi$, and for which the
partial differential equation (PDE) formulations of the two problems
have unique solutions. As we understand it, there is no other BIE
formulation of the electromagnetic problem known to the computational
electromagnetics community that can guarantee unique solutions for the
wavenumber combination
\begin{equation}
\Arg(k_1)=0\,, \quad \Arg(k_2)=\pi/2\,, \quad \mbox{and} \quad
k_2^2/k_1^2\ne -1\,. 
\label{eq:plasmonic}
\end{equation}

We refer to the combination \eqref{eq:plasmonic} as the {\it plasmonic
  condition} since it enables discrete quasi-electrostatic surface
plasmons in smooth, infinitesimally small, objects \cite{TzarSihv18},
continuous spectra of quasi-electrostatic surface plasmons in
non-smooth objects \cite{HelsPerf18}, and undamped surface plasmon
waves along planar surfaces \cite[Appendix I]{Raeth88}. Wavenumbers
with $\Arg(k_1)=0$ and $\pi/4<\Arg(k_2)\leq\pi/2$ are of special
interest in the areas of nano-optics and metamaterials because in this
range weakly damped surface plasmons in subwavelength objects and
weakly damped dynamic surface plasmon waves in objects of any size can
occur. These phenomena become increasingly pronounced, and useful in
applications, as $\Arg(k_2)$ approaches
$\pi/2$~\cite{Homola08,LuTsGuHo19}. It is important to have uniqueness
under the plasmonic condition, despite that there are no known
materials that satisfy this condition exactly, since non-uniqueness
implies spurious resonances that deteriorate the accuracy of the
numerical solution also for $\Arg(k_1)=0$, $\pi/4<\Arg(k_2)<\pi/2$.

It is relatively easy to find a BIE formulation of the scalar
transmission problem since one has access to the fundamental solution
to the scalar Helmholtz equation. It remains to make sure that the
boundary conditions are satisfied and that the solution is unique. To
find a BIE formulation of the electromagnetic transmission problem,
based on the same fundamental solution, is harder. Apart from
satisfying the boundary conditions and uniqueness one also has to make
sure that the solution satisfies Maxwell's equations. Otherwise the
two problems are very similar.

Our BIE formulation of the scalar problem is a modification of the
formulation in \cite[Section 4.2]{KleiMart88}. While our formulation
guarantees unique solutions under the plasmonic condition, provided
that the object surface is smooth, the formulation in \cite[Section
4.2]{KleiMart88} does not.

Our BIE formulation of the electromagnetic problem is a further
development of the classic formulation by Müller, \cite[Section
23]{Muller69}. In \cite{MautHarr77} it is shown that the Müller
formulation has unique solutions for $0\leq
\Arg(k_1),\Arg(k_2)<\pi/2$, but as shown in \cite{HelsKarl19}, it may
have spurious resonances under the plasmonic condition. The Müller
formulation has four unknown scalar surface densities, related to the
equivalent electric and magnetic surface current densities, and that
leads to dense-mesh/low-frequency breakdown in field evaluations.
Despite these shortcomings, the Müller formulation has been frequently
used. Its advantages are emphasized in a recent paper
\cite{LaiOneil19} on scattering from axisymmetric objects where
accurate solutions are obtained away from the low-frequency limit.

One way to overcome low-frequency breakdown in the Müller formulation
is to increase the number of unknown densities from four to six by
adding the equivalent electric and magnetic surface charge densities
\cite{HelsKarl17,TaskOija06,VicGreFer18}. The charge densities can be
introduced in two ways, leading to two types of formulations. The
first type is decoupled charge-current formulations, where the charge
densities are introduced after the BIE has been solved. The other type
is coupled charge-current formulations, where the charge densities are
present from the start. Unfortunately, both types of formulations can
give rise to new complications such as spurious resonances and
near-resonances. Several formulations in the literature ignore these
complications, but in \cite{VicGreFer18} a stable formulation is
presented. In line with all other formulations in literature,
uniqueness in \cite{VicGreFer18} is not guaranteed under the plasmonic
condition.

The main result of the present work is our extended charge-current BIE
formulation of the electromagnetic transmission problem where two
additional surface densities, related to electric and magnetic volume
charge densities, are introduced. The formulation is given by the
representation~(\ref{eq:repEH}) and the system~(\ref{eq:EHsys}) below.
The formulation is free from low-frequency breakdown and it provides
unique solutions also under the plasmonic condition. Just like the
Müller- and charge-current formulations it is a direct formulation,
meaning that the surface densities are related to boundary limits of
fields, or derivatives of fields. This is in contrast to indirect
formulations \cite{EpsGreNei13,EpsGreNei19,KresRoac78,VicGreFer18},
where the surface densities lack immediate physical interpretation.
Albeit somewhat more numerically expensive than competing
formulations, our new formulation enables high achievable accuracy and
since it is more robust this should outweigh the disadvantage of
having eight densities. From a broader perspective one can say that
our paper, and many other papers
\cite{HelsKarl17,LaiOneil19,MautHarr77,Muller69,TaskOija06,VicGreFer18},
use integral representations of the electric and magnetic fields for
modeling. It is also possible to start with representations of scalar
and vector potentials and antipotentials
\cite{EpsGreNei13,EpsGreNei19,LiFuShank18}.

The paper is organized as follows: Section \ref{sec:pre} introduces
notation and definitions common to the scalar and the electromagnetic
problems. The scalar problem and two closely related homogeneous
problems, to be used in a uniqueness proof, are defined in
Section~\ref{sec:scalarprob}. Scalar integral representations
containing two surface densities are introduced in
Section~\ref{sec:intrepA}. Section~\ref{sec:inteqA} proposes a system
of BIEs for these densities. This system contains two free parameters
and, as seen in Section~\ref{sec:uniqueA}, unique solutions are
guaranteed by giving them proper values. Section~\ref{sec:eval}
concerns the evaluation of near fields. The procedure for finding BIEs
for the scalar problem is then adapted to the electromagnetic problem,
defined along with two auxiliary homogeneous problems in
Section~\ref{sec:EMprob}. Integral representations of electric and
magnetic fields in terms of eight scalar surface densities are given
in Section~\ref{sec:intrepC} and a corresponding system of BIEs is
proposed in Section~\ref{sec:inteqC}. This BIE system contains four
free parameters and again, as shown in Section~\ref{sec:uniqueC},
unique solutions are guaranteed by choosing them properly.
Section~\ref{sec:twoD} presents reduced two-dimensional (2D) versions
of the electromagnetic BIE system whose purpose is to facilitate
initial tests and comparisons. Section~\ref{sec:geomdisc} reviews test
domains and discretization techniques and Section~\ref{sec:numex}
presents numerical examples, including what we believe is the first
high-order accurate computation of a surface plasmon wave on a
non-smooth three-dimensional (3D) object.

Appendix A presents boundary values of integral representations.
Appendix B and C derive conditions for our representations of the
electric and magnetic fields to satisfy Maxwell's equations. In
Appendix D a set of points $(\Arg(k_1),\Arg(k_2))$ is identified for
which the electromagnetic problem has at most one solution.

\begin{figure}[t]
\centering
\includegraphics[height=50mm]{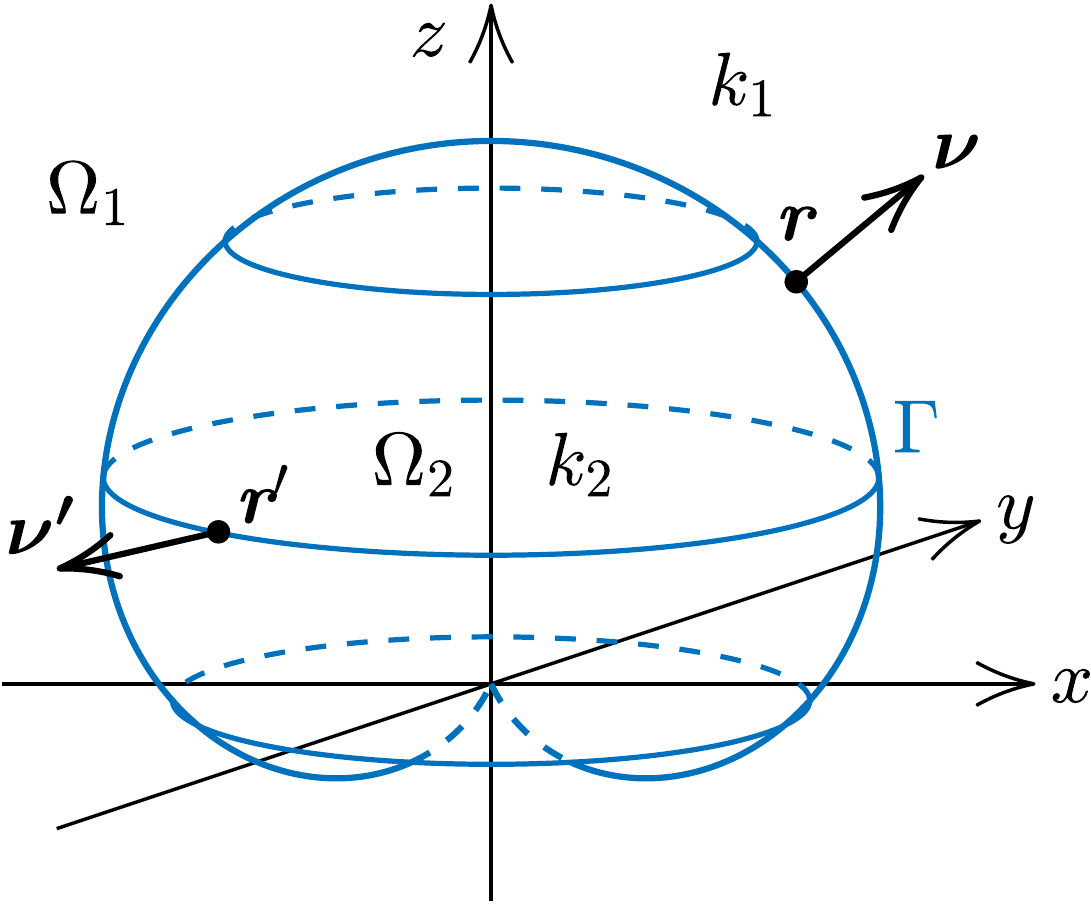}
\caption{\sf Geometry in $\mathbb{R}^3$. Inside $\Gamma$
  the volume is $\Omega_2$ and the wavenumber $k_2$. Outside $\Gamma$
  the volume is $\Omega_1$ and the wavenumber $k_1$. The outward unit
  normal is $\myvec \nu$ at $\myvec r$ and $\myvec \nu'$ at $\myvec
  r'$.}
\label{fig:amoeba0}
\end{figure}

\section{Notation}
\label{sec:pre}

Let $\Omega_2$ be a bounded volume in $\mathbb{R}^3$ with a smooth
closed surface $\Gamma$ and simply connected unbounded exterior
$\Omega_1$. The outward unit normal at position $\myvec r$ on $\Gamma$
is $\myvec\nu$. We consider time-harmonic fields with time dependence
$e^{-{\rm i}t}$, where the angular frequency is scaled to one. The
relation between time-dependent fields $F(\myvec r,t)$ and complex
fields $F(\myvec r)$ is
\begin{equation}
F(\myvec r,t)=\Re{\rm e}\left\{F(\myvec r)e^{-{\rm i}t}\right\}. 
\label{eq:timedep}
\end{equation}
The volumes $\Omega_1$ and $\Omega_2$ are homogeneous with wavenumbers
$k_1$ and $k_2$. See Figure \ref{fig:amoeba0}, which depicts a
non-smooth $\Gamma$ that is used later in numerical examples. An
incident field is generated by a source somewhere in $\Omega_1$.

\subsection{Layer potentials and boundary integral operators}
\label{sec:laypot}

The fundamental solution to the scalar Helmholtz equation is taken to
be
\begin{equation}
\Phi_k(\myvec r,\myvec r')=
\frac{e^{{\rm i}k\lvert\myvec r-\myvec r'\rvert}}
{4\pi\lvert\myvec r-\myvec r'\rvert}\,,\quad
\myvec r,\myvec r'\in\mathbb{R}^3\,.
\label{eq:fund}
\end{equation} 
Two scalar layer potentials are defined in terms of a general surface
density $\sigma$ as
\begin{equation}
\begin{split}
S_k\sigma(\myvec r)&=2\int_{\Gamma}\Phi_{k}(\myvec r,\myvec r')
\sigma(\myvec r')\,{\rm d}\Gamma'\,,
\quad\myvec r\in\Omega_1\cup\Omega_2\,,\\
K_k\sigma(\myvec r)&=2\int_{\Gamma}(\partial_{\nu'}\Phi_{k})
(\myvec r,\myvec r')\sigma(\myvec r')\,{\rm d}\Gamma'\,,
\quad\myvec r\in\Omega_1\cup\Omega_2\,,
\end{split}
\label{eq:STk}
\end{equation}
where ${\rm d}\Gamma$ is an element of surface area,
$\partial_{\nu'}=\myvec\nu'\cdot\nabla'$, and
$\myvec\nu'=\myvec\nu(\myvec r')$. We use~(\ref{eq:STk}) also for
$\myvec r\in\Gamma$, in which case $S_k$ and $K_k$ are viewed as
boundary integral operators. Further, we need the operators $K_k^{\rm
  A}$ and $T_k$, defined by
\begin{equation}
\begin{split}
K^{\rm A}_k\sigma(\myvec r)&=2\int_{\Gamma}
(\partial_\nu\Phi_{k})(\myvec r,\myvec r')\sigma(\myvec r')
  \,{\rm d}\Gamma'\,,
\quad\myvec r\in\Gamma\,,\\
T_k\sigma(\myvec r)&=2\int_{\Gamma}
(\partial_\nu\partial_{\nu'}\Phi_{k})
(\myvec r,\myvec r')\sigma(\myvec r')\,{\rm d}\Gamma'\,,
\quad\myvec r\in\Gamma,
\end{split}
\end{equation}
and where $T_k\sigma$ is to be understood in the Hadamard finite-part
sense. We also need the vector-valued layer potentials
\begin{equation}
\begin{split}
{\cal S}_k\myvec \sigma(\myvec r)&=2\int_\Gamma
\Phi_{k}(\myvec r,\myvec r')\myvec \sigma(\myvec r')\,{\rm d}\Gamma'\,,
\quad\myvec r\in\Omega_1\cup\Omega_2\,,\\
\boldsymbol{\cal N}_k\sigma(\myvec r)&=2\int_\Gamma
\nabla\Phi_{k}(\myvec r,\myvec r')\sigma(\myvec r')\,{\rm d}\Gamma'\,,
\quad\myvec r\in\Omega_1\cup\Omega_2\,,\\
{\cal R}_k\myvec \sigma(\myvec r)&=2\int_{\Gamma}\nabla\Phi_{k}
(\myvec r,\myvec r')\times\myvec \sigma(\myvec r')\,{\rm d}\Gamma'\,,
\quad\myvec r\in\Omega_1\cup\Omega_2\,,
\end{split}
\label{eq:cSNR}
\end{equation}
with corresponding operators ${\cal S}_k$, $\boldsymbol{\cal N}_k$,
and ${\cal R}_k$ for $\myvec r\in\Gamma$. The notation
\begin{equation}
\tilde S_{k}={\rm i}k_1 S_k\,,\quad
\tilde{\cal S}_{k}={\rm i}k_1 {\cal S}_k\,,
\label{eq:barS}
\end{equation}
will be used for brevity.

The fundamental solution $\Phi_k$ and the operators $S_k$, $K_k$,
$K^{\rm A}_k$, and $T_k$ are identical to the corresponding constructs
in \cite[Eqs.~(2.1) and (3.8)--(3.11)]{ColtKres98}. The potentials of
\eqref{eq:cSNR} correspond to the potentials in \cite[Eqs.~(3)
and~(9)]{TaskOija06}, scaled with a factor of two.

\subsection{Limits of layer potentials}
\label{sec:limits}

It is convenient to introduce the notation
\begin{equation}
\begin{split}
A^+(\myvec r^\circ)&=\lim_{\Omega_1\ni\myvec r\to \myvec r^\circ}A(\myvec r)\,,
\quad\myvec r^\circ\in\Gamma\,,\\
A^-(\myvec r^\circ)&=\lim_{\Omega_2\ni\myvec r\to \myvec r^\circ}A(\myvec r)\,,
\quad\myvec r^\circ\in\Gamma\,,
\end{split}
\end{equation}
for limits of a function $A(\myvec r)$ as
$\Omega_1\cup\Omega_2\ni\myvec r\to\myvec r^\circ\in\Gamma$. For
compositions of operators and functions, square brackets $[\cdot]$
indicate parts where limits are taken. In this notation, results from
classical potential theory on limits of layer potentials
include~\cite[Theorem~3.1]{ColtKres98}
and~\cite[Theorem~2.23]{ColtKres83}
\begin{equation}
\begin{split}
[S_k\sigma]^\pm(\myvec r)&=S_k\sigma(\myvec r)\,,\quad \myvec r\in\Gamma\,,\\
[K_k\sigma]^\pm(\myvec r)&=\pm\sigma(\myvec r)+K_k\sigma(\myvec r)\,,
\quad \myvec r\in\Gamma\,,\\
\myvec\nu\cdot[\nabla S_k\sigma]^\pm(\myvec r)&=
\mp\sigma(\myvec r)+K_k^{\rm A}\sigma(\myvec r)\,,
\quad \myvec r\in\Gamma\,,\\
\myvec\nu\cdot[\nabla K_k\sigma]^\pm(\myvec r)&=
T_k\sigma(\myvec r)\,,\quad\myvec r\in\Gamma\,.
\end{split}
\label{eq:SKATlim}
\end{equation}
See also~\cite[Theorem 5.46]{KirsHett15} for statements on the second
and fourth limit of~(\ref{eq:SKATlim}) in a more modern function-space
setting.

The layer potentials of~(\ref{eq:cSNR}) have limits
\begin{equation}
\begin{split}
[{\cal S}_k\myvec\sigma]^\pm(\myvec r)&=
{\cal S}_k\myvec\sigma(\myvec r)\,,\quad \myvec r\in\Gamma\,,\\
\myvec\nu\cdot[\boldsymbol{\cal N}_k\sigma]^\pm(\myvec r)&=
\mp\sigma(\myvec r)+\myvec\nu\cdot\boldsymbol{\cal N}_k\sigma(\myvec r)\,,
\quad \myvec r\in\Gamma\,,\\
\myvec\nu\times[{\cal R}_k\myvec\sigma]^\pm(\myvec r)&=
\pm\myvec\sigma(\myvec r)+\myvec\nu\times{\cal R}_k\myvec\sigma(\myvec r)\,,
\quad \myvec r\in\Gamma\,.
\end{split}
\end{equation}

\section{Scalar transmission problems}
\label{sec:scalarprob}

We present three scalar transmission problems called problem {\sf A},
problem ${\sf A}_0$, and problem ${\sf B}_0$. Problem {\sf A} is the
problem of main interest. Problem ${\sf A}_0$ and ${\sf B}_0$ are
needed in proofs.

\subsection{Problem {\sf A} and ${\sf A}_0$}
\label{sec:A0}

The transmission problem {\sf A} reads: Given an incident field
$U^{\rm in}$, generated in $\Omega_1$, find the total field $U(\myvec
r)$, $\myvec r\in\Omega_1\cup\Omega_2$, which, for a complex jump
parameter $\kappa$ and for wavenumbers $k_1$ and $k_2$ such that
\begin{equation}
0\leq\Arg(k_1), \Arg(k_2)<\pi\,,
\label{eq:k1k2}
\end{equation}
solves
\begin{equation}
\begin{split}
\Delta U(\myvec r)+k_1^2U(\myvec r)&=0\,,\quad\myvec r\in\Omega_1\,,\\
\Delta U(\myvec r)+k_2^2U(\myvec r)&=0\,,\quad\myvec r\in\Omega_2\,,
\end{split}
\label{eq:helmA12}
\end{equation}
except possibly at an isolated point in $\Omega_1$ where the source of
$U^{\rm in}$ is located, subject to the boundary conditions
\begin{align}
U^+(\myvec r)&=U^-(\myvec r)\,,\quad\myvec r\in\Gamma\,,
\label{eq:BCA1}\\
\kappa\myvec\nu \cdot[\nabla U]^+(\myvec r)&=
\myvec\nu\cdot[\nabla U]^-(\myvec r)\,,\quad \myvec r\in\Gamma\,,
\label{eq:BCA2}\\
\left(\partial_{\hat{\myvec r}}-{\rm i}k_1\right)
U^{\rm sc}(\myvec r)&=o\left(\lvert\myvec r\rvert^{-1}\right),
\quad\lvert\myvec r\rvert\rightarrow\infty\,.
\label{eq:BCA3}
\end{align}
Here $\hat{\myvec r}=\myvec r/|\myvec r|$, the scattered field $U^{\rm
  sc}$ is source free in $\Omega_1$ and given by
\begin{equation}
U(\myvec r)=U^{\rm in}(\myvec r)+U^{\rm sc}(\myvec r)\,,
\quad\myvec r\in\Omega_1\,,
\end{equation}
and the incident field satisfies
\begin{equation}
\Delta U^{\rm in}(\myvec r)+k_1^2U^{\rm in}(\myvec r)=0\,, 
\quad\myvec r\in\mathbb{R}^3\,,
\end{equation}
except at the possible isolated source point in $\Omega_1$.

The homogeneous version of problem {\sf A}, that is problem {\sf A}
with $U^{\rm in}$=0, is referred to as problem ${\sf A}_0$.

\subsection{Problem ${\sf B}_0$}\label{sec:B0}

The transmission problem ${\sf B}_0$ reads: Find $W(\myvec r)$,
$\myvec r\in\Omega_1\cup\Omega_2$, which, for a complex jump parameter
$\alpha$ and for wavenumbers $k_1$ and $k_2$ such that~(\ref{eq:k1k2})
holds, solves
\begin{equation}
\begin{split}
\Delta W(\myvec r)+k_2^2W(\myvec r)&=0\,,\quad\myvec r\in\Omega_1\,,\\
\Delta W(\myvec r)+k_1^2W(\myvec r)&=0\,,\quad\myvec r\in\Omega_2\,,
\end{split}
\label{eq:helmB12}
\end{equation}
subject to the boundary conditions
\begin{align}
W^+(\myvec r)&=W^-(\myvec r)\,,
\quad \myvec r\in\Gamma\,,
\label{eq:BCB1}\\
\alpha\myvec\nu\cdot[\nabla W]^+(\myvec r)&=
\myvec\nu\cdot[\nabla W]^-(\myvec r)\,,
\quad \myvec r\in\Gamma\,,
\label{eq:BCB2}\\
\left(\partial_{\hat{\myvec r}}-{\rm i}k_2\right)
W(\myvec r)&=o\left(\lvert\myvec r\rvert^{-1}\right),
\quad\lvert\myvec r\rvert\rightarrow\infty\,.
\label{eq:BCB3}
\end{align}

\subsection{Uniqueness and existence}
\label{sec:uniqueexist}

We now review uniqueness theorems by Kress and Roach~\cite{KresRoac78}
and Kleinman and Martin~\cite{KleiMart88} for solutions to problem
{\sf A}, along with corollaries for problem ${\sf A}_0$ and ${\sf
  B}_0$. Propositions and corollaries apply only under conditions on
$k_1$, $k_2$, $\kappa$, and $\alpha$ that are more restrictive than
those of~(\ref{eq:k1k2}). Conjugation of complex quantities is
indicated with an overbar symbol.

\begin{proposition}
Assume that $(\ref{eq:k1k2})$ holds. Let in addition $k_1$, $k_2$,
$\kappa$, $\kappa^{-1}$ $\in \mathbb{C}\backslash 0$ be such that
\begin{equation}
\begin{split}
\Arg(k_1^2\bar k_2^2\kappa)&=
\left\{
\begin{array}{lll}
0   & \text{if} & \Re{\rm e}\{k_1\}\Re{\rm e}\{k_2\}\ge 0\,,\\
\pi & \text{if} & \Re{\rm e}\{k_1\}\Re{\rm e}\{k_2\}<0 \,,
\end{array}
\right.
\\
\Arg(k_2)&\ne 0\quad \text{if}\quad \Arg(k_1)=\pi/2\,.
\end{split}
\label{eq:KRunique}
\end{equation}
Then problem {\sf A} has at most one solution.
\label{prop:KR}
\end{proposition}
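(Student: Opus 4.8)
The plan is to prove the contrapositive in energy form: it suffices to show that the homogeneous problem ${\sf A}_0$ admits only $U\equiv 0$, since the difference of two solutions of problem {\sf A} solves ${\sf A}_0$. So let $U$ solve ${\sf A}_0$ and abbreviate $E_j=\int_{\Omega_j}\lvert\nabla U\rvert^2\,{\rm d}V$ and $M_j=\int_{\Omega_j}\lvert U\rvert^2\,{\rm d}V$, with $\Omega_1$ truncated to $\Omega_1\cap B_R$ for a large ball $B_R$. First I would apply Green's first identity separately in $\Omega_2$ and in $\Omega_1\cap B_R$, using $\Delta U=-k_j^2U$. On $\Gamma$ the boundary conditions $U^+=U^-$ and $\kappa\,\myvec\nu\cdot[\nabla U]^+=\myvec\nu\cdot[\nabla U]^-$ let me express both surface integrals through the single functional $a=\int_\Gamma \bar U^+\,\myvec\nu\cdot[\nabla U]^+\,{\rm d}\Gamma$, yielding $E_2-k_2^2M_2=\kappa a$ and $E_1-k_1^2M_1=F_R-a$, where $F_R=\int_{\partial B_R}\bar U\,\partial_r U\,{\rm d}S$ is the far-field flux. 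Eliminating $a$ produces the master identity
\[
\kappa F_R=\kappa\bigl(E_1-k_1^2M_1\bigr)+\bigl(E_2-k_2^2M_2\bigr).
\]

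The algebraic heart of the argument is to choose a multiplier that renders the imaginary part of this identity sign-definite, and this is precisely where hypothesis~$(\ref{eq:KRunique})$ enters. Multiplying the master identity by $\bar\kappa\,\bar k_1^2$ and taking imaginary parts, the coefficient of $M_1$ becomes $-\lvert\kappa\rvert^2\lvert k_1\rvert^4$, which is real, while the coefficient of $M_2$ is $-\bar\kappa\,\bar k_1^2 k_2^2$, whose argument equals $-\Arg(k_1^2\bar k_2^2\kappa)$. The condition $\Arg(k_1^2\bar k_2^2\kappa)\in\{0,\pi\}$ makes this coefficient real as well, so both $L^2$-mass terms drop out of the imaginary part and one is left with
\[
\lvert\kappa\rvert^2\,\Im{\rm m}\bigl\{\bar k_1^2 F_R\bigr\}
=-\lvert\kappa\rvert^2\,\Im{\rm m}\{k_1^2\}\,E_1+\Im{\rm m}\bigl\{\bar\kappa\,\bar k_1^2\bigr\}\,E_2,
\]
an identity relating the far-field flux to the nonnegative gradient energies $E_1,E_2$ alone.

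To finish I would control $F_R$ by the radiation condition: when $\Im{\rm m}\{k_1\}>0$ the radiating field decays exponentially and $F_R\to0$, whereas for real $k_1>0$ Rellich's lemma gives $\Im{\rm m}\{\bar k_1^2F_R\}\to k_1^2\lim_R\int_{\partial B_R}\lvert U\rvert^2\,{\rm d}S\ge0$. A sign analysis of the two coefficients on the right, with the case split dictated by the sign of $\Re{\rm e}\{k_1\}\Re{\rm e}\{k_2\}$ in~$(\ref{eq:KRunique})$, then forces $E_1=E_2=0$. This endgame is clean: if $\nabla U$ vanishes then $U$ is constant on each of $\Omega_1,\Omega_2$, and since $k_1,k_2\ne0$ the Helmholtz equations $k_j^2U=0$ force those constants to be zero, so $U\equiv0$ throughout (alternatively, $U\equiv0$ in $\Omega_1$ gives zero Cauchy data on $\Gamma$ via the transmission conditions, and Holmgren's theorem propagates this into $\Omega_2$).

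The main obstacle is the phase bookkeeping in this sign analysis, not the identities themselves. The term $\Im{\rm m}\{k_1^2\}$ multiplying $E_1$ vanishes exactly in the degenerate directions $\Arg(k_1)=0$ and $\Arg(k_1)=\pi/2$, so in those cases the argument cannot rely on $E_1$ and must instead extract definiteness from the Rellich contribution to $F_R$ (for real $k_1$) or from the surviving $E_2$ coefficient. Verifying that, in each admissible combination of $\Arg(k_1),\Arg(k_2),\Arg(\kappa)$, the two real coefficients carry signs compatible with the sign of $\Im{\rm m}\{\bar k_1^2F_R\}$ is the delicate step; the supplementary clause $\Arg(k_2)\ne0$ when $\Arg(k_1)=\pi/2$ is exactly the hypothesis needed to exclude the one residual configuration in which every surviving coefficient degenerates and the energies are not otherwise pinned to zero.
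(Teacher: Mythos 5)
Your strategy is the right one, and it is essentially the argument behind the result the paper actually invokes: the paper's ``proof'' of Proposition~\ref{prop:KR} is a citation of \cite[Theorem~3.1]{KresRoac78} together with the added clause $\Arg(k_2)\ne 0$ if $\Arg(k_1)=\pi/2$ to repair a flaw in that reference. Your Green's-identity bookkeeping is correct: the master identity $\kappa F_R=\kappa(E_1-k_1^2M_1)+(E_2-k_2^2M_2)$ and the observation that multiplying by $\bar\kappa\bar k_1^2$ makes the $M_2$ coefficient real exactly when $\Arg(k_1^2\bar k_2^2\kappa)\in\{0,\pi\}$ are both sound, as is the treatment of $F_R$ via exponential decay (for $\Im{\rm m}\{k_1\}>0$) and Rellich's lemma (for real $k_1$).

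There is, however, a genuine gap, and it sits precisely where you defer the work. First, your framework only ever takes the \emph{imaginary} part of the multiplied identity, in which both mass terms $M_1,M_2$ drop out. But the hypothesis~(\ref{eq:KRunique}) does not merely demand that $\Arg(k_1^2\bar k_2^2\kappa)$ be $0$ or $\pi$; it prescribes \emph{which} of the two, keyed to the sign of $\Re{\rm e}\{k_1\}\Re{\rm e}\{k_2\}$. That dichotomy is a sign condition on the (real) coefficient of $M_2$ and can only be exploited through the \emph{real} part of the identity. An argument that discards the $M_j$ terms cannot use the full hypothesis and cannot close the degenerate configurations: e.g.\ for $\Arg(k_1)\in\{0,\pi/2\}$ the $E_1$ coefficient $\Im{\rm m}\{k_1^2\}$ vanishes, and for $\Arg(\kappa k_1^2)\in\{0,\pi\}$ the $E_2$ coefficient vanishes too, so the imaginary part reads $0=0$ and all the definiteness must come from the real part, where $M_1$ and $M_2$ reappear with the signs the hypothesis is designed to control. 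Second, this deferred case analysis is not a routine verification: it is exactly the step at which Kress and Roach erred (the configuration $\Arg(k_1)=\pi/2$, $\Arg(k_2)=0$, $\Arg(\kappa)=\pi$ satisfies their conditions yet admits nontrivial homogeneous solutions on the sphere), which is why the proposition carries the supplementary clause at all. You correctly sense that this clause kills ``the one residual configuration,'' but asserting that without running the real-part sign analysis is assuming the conclusion of the only nontrivial part of the proof. To complete the argument you need to take both real and imaginary parts (or a suitable complex multiple) and exhibit, case by case in $\Arg(k_1)$, a combination in which the coefficients of $E_1,M_1,E_2,M_2$ are all of one sign and the flux contribution of the other.
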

\begin{proof}
  This is~\cite[Theorem~3.1]{KresRoac78}, supplemented with a
  condition to compensate for a minor flaw in the proof. The original
  conditions in~\cite[Theorem~3.1]{KresRoac78} permit combinations of
  $k_1$, $k_2$, and $\kappa$ for which problem {\sf A} has nontrivial
  homogeneous solutions. Examples can be found with $\Arg(k_1)=\pi/2$,
  $\Arg(k_2)=0$, and $\Arg(\kappa)=\pi$, using the example for the
  sphere in~\cite[p.~1434]{KresRoac78}.
\end{proof}

\begin{proposition}
Assume that $(\ref{eq:k1k2})$ holds. Let in addition $k_1$, $k_2$,
$\kappa$, $\kappa^{-1}$ $\in \mathbb{C}\backslash 0$ be such that
\begin{equation}
0\leq\Arg(k_1\kappa)\leq \pi\quad \text{and}\quad 
0\leq\Arg(\bar k_1 k_2^2\bar\kappa)\leq\pi\,.
\label{eq:crit2KL}
\end{equation}
Then problem {\sf A} has at most one solution.
\label{prop:KM}
\end{proposition}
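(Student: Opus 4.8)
The plan is to prove uniqueness by showing that the homogeneous problem~${\sf A}_0$, obtained by setting $U^{\rm in}=0$, admits only the trivial solution. First I would fix a ball $B_R$ of large radius $R$ containing $\Omega_2$ and apply Green's first identity separately in $\Omega_2$ and in the truncated exterior $\Omega_1\cap B_R$. Writing $u=U^+=U^-$ and $p=\myvec\nu\cdot[\nabla U]^+$ on $\Gamma$, and abbreviating the nonnegative energies by $D_j=\int_{\Omega_j}|\nabla U|^2\,{\rm d}V$ and $N_j=\int_{\Omega_j}|U|^2\,{\rm d}V$, the interior identity reads $D_2-k_2^2N_2=\kappa J$, where $J=\int_\Gamma\bar u\,p\,{\rm d}\Gamma$ and the jump condition~\eqref{eq:BCA2} has been used to replace $\myvec\nu\cdot[\nabla U]^-$ by $\kappa p$. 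The exterior identity, after letting $R\to\infty$ and invoking the radiation condition~\eqref{eq:BCA3} in the standard way, reads $D_1-k_1^2N_1={\rm i}k_1L-J$, where $L=\lim_{R\to\infty}\int_{|\myvec r|=R}|U|^2\,{\rm d}S\ge0$.

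Next I would eliminate the uncontrolled boundary flux $J$ by forming $\kappa$ times the exterior identity plus the interior identity, which yields the single complex relation $\kappa D_1-\kappa k_1^2N_1+D_2-k_2^2N_2={\rm i}\kappa k_1L$. The decisive step is to multiply this relation by $w=\bar k_1\bar\kappa$ and take the imaginary part. Using $\bar k_1\bar\kappa\kappa=\bar k_1|\kappa|^2$ and $\bar k_1\bar\kappa\kappa k_1=|k_1|^2|\kappa|^2$, the right-hand side becomes $|k_1|^2|\kappa|^2L$, while the coefficients on the left come out as $-|\kappa|^2\Im{\rm m}\{k_1\}$ for $D_1$, $-|k_1|^2|\kappa|^2\Im{\rm m}\{k_1\}$ for $N_1$, $-\Im{\rm m}\{k_1\kappa\}$ for $D_2$, and $-\Im{\rm m}\{\bar k_1k_2^2\bar\kappa\}$ for $N_2$. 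Under~\eqref{eq:k1k2} one has $\Im{\rm m}\{k_1\}\ge0$, and the two hypotheses of~\eqref{eq:crit2KL} are precisely $\Im{\rm m}\{k_1\kappa\}\ge0$ and $\Im{\rm m}\{\bar k_1k_2^2\bar\kappa\}\ge0$. Hence every coefficient on the left is nonpositive while every quantity $D_j,N_j\ge0$, so the left-hand side is $\le0$; since the right-hand side $|k_1|^2|\kappa|^2L\ge0$, both sides must vanish. In particular $L=0$, and when $\Im{\rm m}\{k_1\}>0$ one also gets $D_1=0$.

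It then remains to promote $L=0$ to $U\equiv0$. When $\Im{\rm m}\{k_1\}>0$ the vanishing of $D_1$ already forces $U$ to be constant in $\Omega_1$, and the decay at infinity makes that constant zero. When $k_1$ is real, $L=0$ means the far-field pattern of $U$ vanishes, so Rellich's lemma together with unique continuation for the Helmholtz equation gives $U\equiv0$ in $\Omega_1$. In either case $U^+=0$ and $\myvec\nu\cdot[\nabla U]^+=0$ on $\Gamma$, whence the transmission conditions~\eqref{eq:BCA1} and~\eqref{eq:BCA2} make the interior Cauchy data of $U$ on $\Gamma$ vanish as well; Holmgren's theorem then yields $U\equiv0$ in $\Omega_2$. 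This reproduces the uniqueness statement of Kleinman and Martin once their transmission parameter is matched to $\kappa$.

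The step I expect to be the main obstacle is finding the single multiplier $w=\bar k_1\bar\kappa$ that makes both inequalities in~\eqref{eq:crit2KL} act simultaneously: each condition controls the sign of exactly one interior energy, and it is not obvious a priori that one complex factor can render all four coefficients sign-definite at once. A secondary technical point is that for real $k_1$ the exterior energies $D_1,N_1$ need not converge; this is harmless because the chosen $w$ makes their coefficients vanish identically, so the limiting identity involves only $D_2$, $N_2$, and $L$. The passage from $L=0$ back to $U\equiv0$ is then routine once Rellich's lemma and unique continuation are available.
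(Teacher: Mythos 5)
Your argument is correct and is essentially a reconstruction of the proof that the paper invokes by reference: the paper's entire proof of this proposition is the citation to the uniqueness theorem of Kleinman and Martin \cite{KleiMart88}, and that theorem is proved by exactly the Green's-identity/energy argument you give, with the multiplier $\bar k_1\bar\kappa$ converting the two hypotheses of \eqref{eq:crit2KL} into nonpositive coefficients for $D_2$ and $N_2$ while \eqref{eq:k1k2} handles $D_1$ and $N_1$. The same energy relation reappears in the paper's own Appendix~D, equation \eqref{eq:unablau}, where it is recycled for the electromagnetic problems, so your route is the one the authors themselves rely on; your closing steps (Rellich's lemma plus unique continuation for real $k_1$, Holmgren for the interior) are the standard ones and are adequate at the level of a sketch.
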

\begin{proof}
This is the uniqueness theorem in~\cite[p.~309]{KleiMart88}.
\end{proof}

The conditions~(\ref{eq:KRunique}) intersect with the
conditions~(\ref{eq:crit2KL}). If any of these sets of conditions
holds, then we say that \textit{the conditions of
  Proposition~\ref{prop:KR} or~\ref{prop:KM} hold}. These conditions
are sufficient for our purposes but, as pointed out in
\cite[p.~1434]{KresRoac78}, uniqueness can be established for a wider
range of conditions.

\begin{corollary}
  If the conditions of Proposition~\ref{prop:KR} or~\ref{prop:KM}
  hold, then problem ${\sf A}_0$ has only the trivial solution
  $U(\myvec r)=0$.
\label{cor:uniqueA0}
\end{corollary}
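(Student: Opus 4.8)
The plan is to reduce the corollary directly to the uniqueness statements of Propositions~\ref{prop:KR} and~\ref{prop:KM}. By definition problem~${\sf A}_0$ is precisely problem~{\sf A} with $U^{\rm in}=0$, so every solution of~${\sf A}_0$ is a solution of problem~{\sf A} for the particular incident field $U^{\rm in}=0$. The key point is that a uniqueness result (``at most one solution'') combined with the trivial existence of the zero solution forces that zero solution to be the only one.

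First I would verify that $U(\myvec r)\equiv 0$ is a solution of problem~${\sf A}_0$. With $U^{\rm in}=0$ there is no source point in $\Omega_1$, so both Helmholtz equations~\eqref{eq:helmA12} hold throughout $\Omega_1$ and $\Omega_2$ for the zero function; the transmission conditions~\eqref{eq:BCA1} and~\eqref{eq:BCA2} are satisfied since both sides vanish identically; and since $U^{\rm sc}=U=0$ in $\Omega_1$, the radiation condition~\eqref{eq:BCA3} holds trivially. Hence the trivial solution always exists.

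Next, under the stated hypotheses we are assuming exactly the conditions of Proposition~\ref{prop:KR} or~\ref{prop:KM}, so problem~{\sf A} has at most one solution for any admissible incident field, in particular for $U^{\rm in}=0$. Combining the existence of the trivial solution with this uniqueness, I conclude that $U(\myvec r)\equiv 0$ is the only solution of problem~${\sf A}_0$, which is the assertion.

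There is essentially no analytic obstacle here, since the statement is a formal corollary of the two propositions; the hard work lies entirely in establishing the uniqueness results themselves, which are quoted from~\cite{KresRoac78} and~\cite{KleiMart88}. The only point requiring care is the bookkeeping that the uniqueness conclusion for problem~{\sf A}, phrased for a fixed incident field, is being applied with that field set to zero, so that problem~${\sf A}_0$ inherits uniqueness directly from its definition as the homogeneous version of problem~{\sf A}.
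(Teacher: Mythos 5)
Your argument is correct and is exactly the reasoning the paper leaves implicit (the corollary is stated without a written proof): problem ${\sf A}_0$ is problem {\sf A} with $U^{\rm in}=0$, the zero field is trivially a solution, and the ``at most one solution'' conclusion of Proposition~\ref{prop:KR} or~\ref{prop:KM} then forces it to be the only one. Nothing further is needed.
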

  
\begin{remark}
  In Ref.~\cite{KleiMart88}, the condition~(\ref{eq:k1k2}) is not
  directly included in the formulation of what corresponds to our
  problem {\sf A}. Instead, the condition $0\leq\Arg(k_1)<\pi$ is
  added for the problem to have at most one solution and
  $0\leq\Arg(k_2)<\pi$ is added for the existence of a unique
  solution.
\end{remark}

\begin{proposition}
Assume that
\begin{equation}
\begin{split}
\Arg(\bar k_1^2 k_2^2\alpha)&=
\left\{
\begin{array}{lll}
0   & \text{if} & \Re{\rm e}\{k_1\}\Re{\rm e}\{k_2\}\ge 0\,,\\
\pi & \text{if} & \Re{\rm e}\{k_1\}\Re{\rm e}\{k_2\}<0 \,,
\end{array}
\right.
\label{eq:KRuniqueB0}
\\
\Arg(k_1)&\ne 0\quad \text{if}\quad \Arg(k_2)=\pi/2\,,
\end{split}
\end{equation}
or
\begin{equation}
0\leq\Arg(k_2\alpha)\leq\pi\quad \text{and}\quad 
0\leq\Arg(k_1^2\bar k_2\bar\alpha)\leq\pi\,,
\label{eq:crit2KLB0}
\end{equation}
holds. Then problem ${\sf B}_0$ has only the trivial solution
$W(\myvec r)=0$.
\label{prop:uniqueB0}
\end{proposition}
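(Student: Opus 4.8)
The plan is to exploit a structural symmetry: problem ${\sf B}_0$ is nothing but problem ${\sf A}_0$ with the two wavenumbers interchanged and the jump parameter renamed. First I would lay the two problems side by side. Problem ${\sf A}_0$, the homogeneous version of problem {\sf A}, seeks $U$ with $\Delta U+k_1^2U=0$ in $\Omega_1$ and $\Delta U+k_2^2U=0$ in $\Omega_2$, subject to $U^+=U^-$ and $\kappa\myvec\nu\cdot[\nabla U]^+=\myvec\nu\cdot[\nabla U]^-$ on $\Gamma$, together with the radiation condition $(\partial_{\hat{\myvec r}}-{\rm i}k_1)U=o(|\myvec r|^{-1})$. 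Comparing term by term with~\eqref{eq:helmB12}--\eqref{eq:BCB3}, problem ${\sf B}_0$ is verbatim the same once one swaps $k_1\leftrightarrow k_2$ everywhere, both in the interior and exterior Helmholtz equations and in the Sommerfeld condition, and replaces $\kappa$ by $\alpha$. Thus the substitution $(k_1,k_2,\kappa)\mapsto(k_2,k_1,\alpha)$ sets up a bijection between solutions of problem ${\sf B}_0$ and solutions of problem ${\sf A}_0$ for the substituted parameters.

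With this correspondence in hand, Corollary~\ref{cor:uniqueA0} does all the work: problem ${\sf B}_0$ admits only $W=0$ whenever the conditions of Proposition~\ref{prop:KR} or~\ref{prop:KM} hold \emph{after} the substitution. The remaining task is then a translation of hypotheses. Feeding $(k_2,k_1,\alpha)$ into the Kress--Roach condition~\eqref{eq:KRunique} sends $\Arg(k_1^2\bar k_2^2\kappa)$ to $\Arg(\bar k_1^2k_2^2\alpha)$, leaves the sign test $\Re{\rm e}\{k_1\}\Re{\rm e}\{k_2\}$ unchanged, and turns the exceptional clause into ``$\Arg(k_1)\ne0$ if $\Arg(k_2)=\pi/2$'', which is exactly the content of~\eqref{eq:KRuniqueB0}. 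Similarly, substituting into the Kleinman--Martin pair~\eqref{eq:crit2KL} produces $0\le\Arg(k_2\alpha)\le\pi$ together with $0\le\Arg(k_1^2\bar k_2\bar\alpha)\le\pi$, which is~\eqref{eq:crit2KLB0}. Either hypothesis of the proposition therefore reproduces one of the two substituted uniqueness conditions, and the trivial solution follows.

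There is no real analytic obstacle here, as the argument is a symmetry plus bookkeeping. The only points I would check carefully are that the case split governed by $\Re{\rm e}\{k_1\}\Re{\rm e}\{k_2\}$ is invariant under $k_1\leftrightarrow k_2$, which holds by commutativity of the product so that the $0$-versus-$\pi$ dichotomy is preserved, and that the complex conjugations are transported correctly through the swap so that the transformed arguments land precisely on the right-hand sides written in~\eqref{eq:KRuniqueB0} and~\eqref{eq:crit2KLB0}.
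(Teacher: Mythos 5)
Your proposal is correct and follows exactly the paper's own argument: interchange $k_1$ and $k_2$, replace $\kappa$ by $\alpha$ in Propositions~\ref{prop:KR} and~\ref{prop:KM}, and invoke Corollary~\ref{cor:uniqueA0}; your verification that the substituted conditions land precisely on~\eqref{eq:KRuniqueB0} and~\eqref{eq:crit2KLB0} is the same bookkeeping the paper leaves implicit.
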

\begin{proof}
  Interchange $k_1$ and $k_2$ and replace $\kappa$ by $\alpha$ in
  Proposition~\ref{prop:KR} and~\ref{prop:KM}. Then use
  Corollary~\ref{cor:uniqueA0}.
\end{proof}
If any of the sets of conditions~(\ref{eq:KRuniqueB0})
or~(\ref{eq:crit2KLB0}) holds we say that \textit{the conditions of
  Proposition~\ref{prop:uniqueB0} hold}.

\begin{figure}[t]
\centering
\includegraphics[height=50mm]{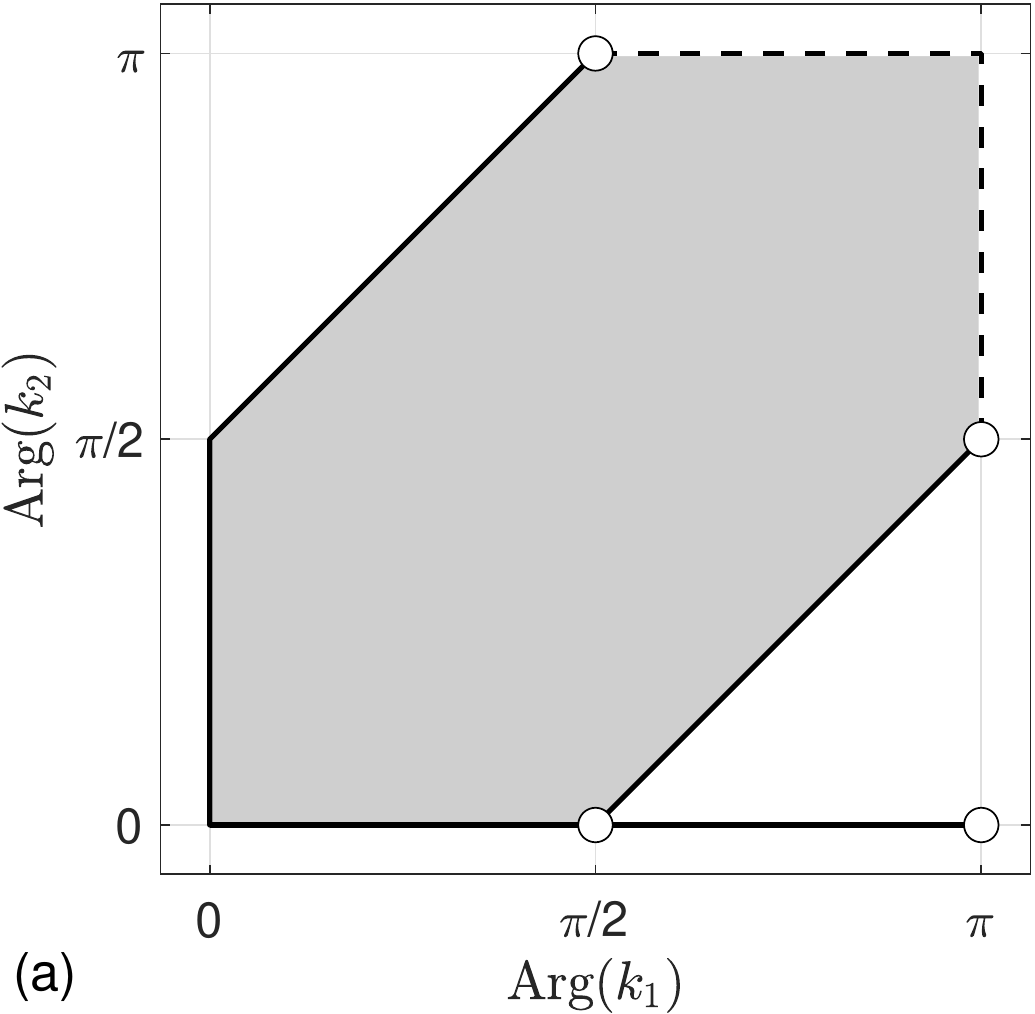}
\hspace*{5mm}
\includegraphics[height=50mm]{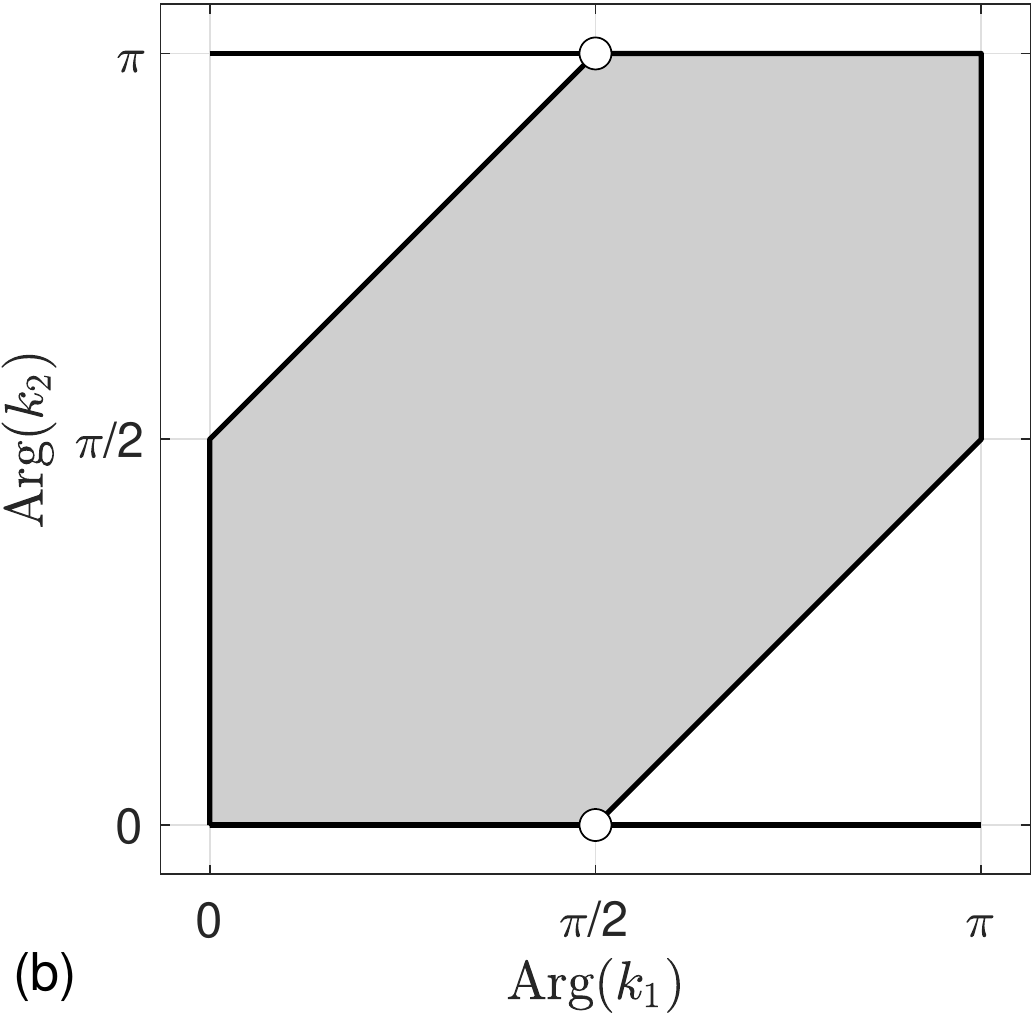}
\caption{\sf In each image, the gray region and the solid black lines 
  constitute a set of points $(\Arg(k_1),\Arg(k_2))$ for which, when
  $\kappa=k_2^2/k_1^2$, problem {\sf A} has at most one solution and
  problem ${\sf A}_0$ only has the trivial solution. Dashed lines and
  circles are not included: (a) a set of points obtained using
  techniques from \cite{KleiMart88,KresRoac78}; (b) the set of points
  discussed in the second paragraph of
  Section~\ref{sec:uniqueA0kappa}.}
\label{fig:hexagon}
\end{figure}

\subsection{Uniqueness and existence when $\kappa=k_2^2/k_1^2$}
\label{sec:uniqueA0kappa}

The parameter value $\kappa=k_2^2/k_1^2$ is relevant for the
electromagnetic transmission problem. By using similar techniques as
in \cite{KleiMart88,KresRoac78} one can show that when
$\kappa=k_2^2/k_1^2$ and $(\Arg(k_1),\Arg(k_2))$ is in the set of
points of Figure~\ref{fig:hexagon}(a), then problem {\sf A} has at
most one solution and problem ${\sf A}_0$ has only the trivial
solution $U(\myvec r)=0$.

We also mention that stronger results, including existence results,
are available for problem {\sf A} with~(\ref{eq:k1k2}) extended to
$0\leq\Arg(k_1),\Arg(k_2)\leq\pi$. Using methods from~\cite{Axels06},
developed for the more general Dirac equations, one can prove that
problem {\sf A}, with $\kappa=k_2^2/k_1^2$, has at most one solution
in finite energy norm for $(\Arg(k_1),\Arg(k_2))$ in the set of points
of Figure~\ref{fig:hexagon}(b). Furthermore, such a solution exists in
Lipschitz domains given that $\kappa\notin [-c_\Gamma,-1/c_\Gamma]$,
where $c_\Gamma\ge 1$ is a geometry-dependent constant which assumes
the value $c_\Gamma=1$ for smooth
$\Gamma$~\cite[Proposition~5.2]{HelsRose20}.

\section{Integral representations for problem {\sf A}}
\label{sec:intrepA}

We introduce two fields
\begin{align}
U_1(\myvec r)&= \frac{1}{2}K_{k_1}\mu(\myvec r)
               -\frac{1}{2}S_{k_1}\varrho(\myvec r)
               +U^{\rm in}(\myvec r)\,,\quad \myvec r\in\Omega_1\cup\Omega_2\,,
\label{eq:U1}\\
U_2(\myvec r)&=-\frac{1}{2}K_{k_2}\mu(\myvec r)
               +\frac{\kappa}{2}S_{k_2}\varrho(\myvec r)\,,
\quad\myvec r\in\Omega_1\cup\Omega_2\,,
\label{eq:U2}
\end{align}
where $\mu$ and $\varrho$ are unknown surface densities. The relations
in Section~\ref{sec:limits} give limits of $U_1(\myvec r)$ and
$U_2(\myvec r)$ at $\myvec r\in\Gamma$
\begin{align}
U_1^\pm(\myvec r)&=\pm\frac{1}{2}\mu(\myvec r)
             +\frac{1}{2}K_{k_1}\mu(\myvec r)
             -\frac{1}{2}S_{k_1}\varrho(\myvec r)
             +U^{\rm in}(\myvec r)\,,
\label{eq:U1pm}\\
U_2^\pm(\myvec r)&=\mp\frac{1}{2}\mu(\myvec r)
             -\frac{1}{2}K_{k_2}\mu(\myvec r)
             +\frac{\kappa}{2}S_{k_2}\varrho(\myvec r)\,.
\label{eq:U2pm}
\end{align}
Limits for the normal derivatives of $U_1(\myvec r)$ and $U_2(\myvec
r)$ at $\myvec r\in\Gamma$ are
\begin{align}
\myvec\nu\cdot[\nabla U_1]^\pm(\myvec r)&=\pm\frac{1}{2}\varrho(\myvec r)
             +\frac{1}{2}T_{k_1}\mu(\myvec r)
             -\frac{1}{2}K_{k_1}^{\rm A}\varrho(\myvec r)
             +\myvec\nu\cdot\nabla U^{\rm in}(\myvec r)\,,
\label{eq:gU1pm}\\
\myvec\nu\cdot[\nabla U_2]^\pm(\myvec r)&=\mp\frac{\kappa}{2}\varrho(\myvec r)
             -\frac{1}{2}T_{k_2}\mu(\myvec r)
             +\frac{\kappa}{2}K_{k_2}^{\rm A}\varrho(\myvec r)\,.
\label{eq:gU2pm}
\end{align}

We now construct the ansatz
\begin{equation}
U(\myvec r)=\left\{
\begin{array}{ll}
U_1(\myvec r)\,, & \myvec r\in\Omega_1\,,\\
U_2(\myvec r)\,, & \myvec r\in\Omega_2\,,
\end{array}
\right.
\label{eq:U}
\end{equation}
for the solution to problem {\sf A}. The fundamental solution
\eqref{eq:fund} makes $U$ of~(\ref{eq:U}) automatically satisfy the
PDEs of~(\ref{eq:helmA12}) and the radiation condition
\eqref{eq:BCA3}. It remains to determine $\mu$ and $\varrho$ to ensure
that the boundary conditions~(\ref{eq:BCA1}) and (\ref{eq:BCA2}) are
satisfied.

\section{Integral equations for problem {\sf A}}
\label{sec:inteqA}
 
We propose the system of second-kind integral equations on $\Gamma$
\begin{equation}
\begin{bmatrix}I-\beta_1(K_{k_1}-c_1K_{k_2})&\beta_1(S_{k_1}-c_1\kappa S_{k_2})\\
-\beta_2(T_{k_1}-c_2\kappa^{-1}T_{k_2})&I+\beta_2(K_{k_1}^{\rm A}-c_2K_{k_2}^{\rm A})
\end{bmatrix}
\begin{bmatrix}
\mu\\
\varrho
\end{bmatrix}=
2\begin{bmatrix}
\beta_1U^{\rm in}\\
\beta_2\partial_\nu U^{\rm in}
\end{bmatrix}
\label{eq:kombi}
\end{equation}
for the determination of $\mu$ and $\varrho$. Here $I$ is the identity and
\begin{equation}
\beta_i=(1+c_i)^{-1},\quad i=1,2\,,
\end{equation}
where $c_1$ and $c_2$ are two free parameters such that
\begin{equation}
c_i\neq -1,0\,,\quad i=1,2\,.
\label{eq:ci}
\end{equation}

We now prove that a solution $\{\mu,\varrho\}$ to~(\ref{eq:kombi}),
under certain conditions and via $U$ of~(\ref{eq:U}), represents a
solution to problem {\sf A}. Since $U$ of~(\ref{eq:U})
satisfies~(\ref{eq:helmA12}) and~(\ref{eq:BCA3}) for any
$\{\mu,\varrho\}$, it remains to show that $\{\mu,\varrho\}$
from~(\ref{eq:kombi}) makes $U$ satisfy~(\ref{eq:BCA1}) and
(\ref{eq:BCA2}). For this we need to show that, under certain
conditions, $U_1$ of (\ref{eq:U1}) is zero in $\Omega_2$ and $U_2$ of
(\ref{eq:U2}) is zero in $\Omega_1$. We introduce the auxiliary field
\begin{equation}
W(\myvec r)=\left\{
\begin{array}{ll}
         U_2(\myvec r)\,, & \myvec r\in\Omega_1\,,\\
-c_1^{-1}U_1(\myvec r)\,, & \myvec r\in\Omega_2\,.
\end{array}
\right.
\label{eq:W}
\end{equation}

The field $W$ of~(\ref{eq:W}), with $\{\mu,\varrho\}$
from~(\ref{eq:kombi}) and $U_1$ and $U_2$ from~(\ref{eq:U1})
and~(\ref{eq:U2}), is the unique solution to problem ${\sf B}_0$ with
$\alpha=c_2/(c_1\kappa)$ provided that the conditions of
Proposition~\ref{prop:uniqueB0} hold. This is so since $W$, by
construction, satisfies~(\ref{eq:helmB12}) and~(\ref{eq:BCB3}).
Furthermore, the boundary conditions~(\ref{eq:BCB1})
and~(\ref{eq:BCB2}) are satisfied. This can be checked by substituting
$U_1^-$ of \eqref{eq:U1pm} and $U_2^+$ of \eqref{eq:U2pm} into
\eqref{eq:BCB1}, and $\myvec\nu\cdot[\nabla U_1]^-$ of
\eqref{eq:gU1pm} and $\myvec\nu\cdot[\nabla U_2]^+$ of
\eqref{eq:gU2pm} into \eqref{eq:BCB2}, and using \eqref{eq:kombi}. As
a consequence, according to Proposition~\ref{prop:uniqueB0}, we have
\begin{equation}
W(\myvec r)=0\,,\quad\myvec r\in\Omega_1\cup\Omega_2\,.
\label{eq:W0}
\end{equation}

Several useful results for $\myvec r\in\Gamma$ follow
from~(\ref{eq:W}) and~(\ref{eq:W0})
\begin{align}
U_1^-(\myvec r)&=0\,,
\label{eq:01m}\\
U_2^+(\myvec r)&=0\,,
\label{eq:02p}\\
\myvec\nu\cdot[\nabla U_1]^-(\myvec r)&=0\,,
\label{eq:g01m}\\
\myvec\nu\cdot[\nabla U_2]^+(\myvec r)&=0\,.
\label{eq:g02p}
\end{align}
Now, from~(\ref{eq:U1pm}) and~(\ref{eq:01m}), and from~(\ref{eq:U2pm})
and~(\ref{eq:02p})
\begin{align}
U_1^+(\myvec r)&=\mu(\myvec r)\,,\quad\myvec r\in\Gamma\,,
\label{eq:mu1}\\
U_2^-(\myvec r)&=\mu(\myvec r)\,,\quad\myvec r\in\Gamma\,.
\label{eq:mu2}
\end{align}
Similarly, from~(\ref{eq:gU1pm}) and~(\ref{eq:g01m}), and
from~(\ref{eq:gU2pm}) and~(\ref{eq:g02p})
\begin{align}
\myvec\nu\cdot[\nabla U_1]^+(\myvec r)&=\varrho(\myvec r)\,,
\quad\myvec r\in\Gamma\,,
\label{eq:rho1}\\
\kappa^{-1}\myvec\nu\cdot[\nabla U_2]^-(\myvec r)&=
\varrho(\myvec r)\,,\quad\myvec r\in\Gamma\,.
\label{eq:rho2}
\end{align}
It is now easy to see that~(\ref{eq:BCA1}) and~(\ref{eq:BCA2}) are
satisfied and we conclude:
\begin{theorem}
  Assume that $\{k_1,k_2,\alpha=c_2/(c_1\kappa)\}$ is such that the
  conditions of Proposition~\ref{prop:uniqueB0} hold. Then a solution
  $\{\mu,\varrho\}$ to~(\ref{eq:kombi}) represents, via~(\ref{eq:U}),
  a solution also to problem {\sf A}. Furthermore, (\ref{eq:U})
  and~(\ref{eq:kombi}) correspond to a direct integral equation
  formulation of problem {\sf A} with $\mu$ and $\varrho$ linked to
  limits of $U$ and $\nabla U$ via~(\ref{eq:mu1})--(\ref{eq:rho2}).
\label{thm:exA}
\end{theorem}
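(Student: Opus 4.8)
The plan is to exploit the fact that the ansatz $U$ of~\eqref{eq:U} already satisfies the Helmholtz equations~\eqref{eq:helmA12} and the radiation condition~\eqref{eq:BCA3} for \emph{any} pair of densities $\{\mu,\varrho\}$, so that only the two transmission conditions~\eqref{eq:BCA1} and~\eqref{eq:BCA2} remain to be enforced. The idea is to show that when $\{\mu,\varrho\}$ solves~\eqref{eq:kombi}, the representation pieces vanish in the wrong subdomains, that is $U_1\equiv 0$ in $\Omega_2$ and $U_2\equiv 0$ in $\Omega_1$. Once this is known, the two-sided jump relations~\eqref{eq:U1pm}--\eqref{eq:gU2pm} collapse to the trace identities~\eqref{eq:mu1}--\eqref{eq:rho2}, from which the boundary conditions follow by inspection: $U^+=U_1^+=\mu=U_2^-=U^-$ gives~\eqref{eq:BCA1}, while $\kappa\,\myvec\nu\cdot[\nabla U]^+=\kappa\varrho=\myvec\nu\cdot[\nabla U]^-$ gives~\eqref{eq:BCA2}.

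To obtain the vanishing I would package $U_2$ on $\Omega_1$ and $-c_1^{-1}U_1$ on $\Omega_2$ into the single auxiliary field $W$ of~\eqref{eq:W} and argue that $W$ solves problem~${\sf B}_0$ with jump parameter $\alpha=c_2/(c_1\kappa)$. That $W$ satisfies the interior equations~\eqref{eq:helmB12} and the radiation condition~\eqref{eq:BCB3} is immediate from the construction, since $U_2$ carries the wavenumber $k_2$ while $U_1$ carries $k_1$, and $U_2$ is source free and outgoing in $\Omega_1$. The substantive step is to verify the homogeneous transmission conditions~\eqref{eq:BCB1} and~\eqref{eq:BCB2} for $W$. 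For~\eqref{eq:BCB1} I would substitute $U_2^+$ of~\eqref{eq:U2pm} and $U_1^-$ of~\eqref{eq:U1pm} and check that the requirement $c_1U_2^++U_1^-=0$ is precisely the first row of~\eqref{eq:kombi} after multiplication by $1/(2\beta_1)=(1+c_1)/2$; for~\eqref{eq:BCB2} I would substitute $\myvec\nu\cdot[\nabla U_2]^+$ of~\eqref{eq:gU2pm} and $\myvec\nu\cdot[\nabla U_1]^-$ of~\eqref{eq:gU1pm} and check that $(c_2/\kappa)\,\myvec\nu\cdot[\nabla U_2]^++\myvec\nu\cdot[\nabla U_1]^-=0$ is the second row of~\eqref{eq:kombi} after multiplication by $1/(2\beta_2)$, the choice $\alpha=c_2/(c_1\kappa)$ being exactly what makes the two rows align. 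This matching of coefficients — tracking the factors $\beta_i=(1+c_i)^{-1}$, the powers of $\kappa$, and the sign conventions of the jump relations — is the main, if essentially routine, obstacle, and it is the only place where the specific structure of~\eqref{eq:kombi} enters.

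With~\eqref{eq:BCB1} and~\eqref{eq:BCB2} established, $W$ is a genuine solution of problem~${\sf B}_0$, and the hypothesis that $\{k_1,k_2,\alpha=c_2/(c_1\kappa)\}$ meets the conditions of Proposition~\ref{prop:uniqueB0} forces $W\equiv 0$ on $\Omega_1\cup\Omega_2$. Reading off the traces of $W$ then yields the four vanishing relations~\eqref{eq:01m}--\eqref{eq:g02p}, and feeding these back into~\eqref{eq:U1pm}--\eqref{eq:gU2pm} produces the linking relations~\eqref{eq:mu1}--\eqref{eq:rho2}. These simultaneously close the argument for~\eqref{eq:BCA1}--\eqref{eq:BCA2} and establish the directness claim, namely that $\mu=U^+$ and $\varrho=\myvec\nu\cdot[\nabla U]^+$ are the exterior Dirichlet and Neumann traces of the reconstructed field. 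I would stress one point about scope: the theorem asserts only that \emph{a} solution of~\eqref{eq:kombi} yields a solution of problem~{\sf A}, so no statement about solvability or invertibility of~\eqref{eq:kombi} is needed at this stage. The genuine delicacy is therefore deferred to the accompanying uniqueness analysis, where for the intended value $\kappa=k_2^2/k_1^2$ one must verify that the free parameters $c_1,c_2$ can steer $\alpha=c_2/(c_1\kappa)$ into the admissible region of Proposition~\ref{prop:uniqueB0}.
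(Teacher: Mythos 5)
Your proposal is correct and follows essentially the same route as the paper: introduce the auxiliary field $W$ of~(\ref{eq:W}), verify via the jump relations~(\ref{eq:U1pm})--(\ref{eq:gU2pm}) that the rows of~(\ref{eq:kombi}) are exactly the statements $c_1U_2^++U_1^-=0$ and $(c_2/\kappa)\myvec\nu\cdot[\nabla U_2]^++\myvec\nu\cdot[\nabla U_1]^-=0$, i.e.\ that $W$ solves problem ${\sf B}_0$ with $\alpha=c_2/(c_1\kappa)$, invoke Proposition~\ref{prop:uniqueB0} to get $W\equiv 0$, and read off~(\ref{eq:01m})--(\ref{eq:rho2}) to conclude. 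Your explicit coefficient check of the two rows (including the factors $1/(2\beta_i)$) is carried out correctly and matches what the paper leaves as ``can be checked by substituting \ldots and using~(\ref{eq:kombi})''.
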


\section{Unique solution to problem {\sf A} from~(\ref{eq:kombi})}
\label{sec:uniqueA}

We use the Fredholm alternative to prove that, under certain
conditions, the system~(\ref{eq:kombi}) has a unique solution
$\{\mu,\varrho\}$ and that this solution represents, via~(\ref{eq:U}),
the unique solution to problem {\sf A}. Three conditions are referred
to with roman numerals
\begin{itemize}
\item[(i)] $c_2=\kappa$ and~(\ref{eq:ci}) holds.
\item[(ii)] $k_1$, $k_2$, and $\kappa$ make the conditions of
  Proposition~\ref{prop:KR} or~\ref{prop:KM} hold or, if
  $\kappa=k_2^2/k_1^2$, $(\Arg(k_1),\Arg(k_2))$ is in the set of
  points of Figure~\ref{fig:hexagon}(a).
\item[(iii)] $\{k_1,k_2,\alpha=c_2/(c_1\kappa)\}$ makes the conditions
  of Proposition~\ref{prop:uniqueB0} hold.
\end{itemize}

We start with the observation that (\ref{eq:kombi}) is a Fredholm
second-kind integral equation with compact (differences of) operators
when condition (i) holds and $\Gamma$ is smooth. Then the Fredholm
alternative can be applied to~(\ref{eq:kombi}). Let $\mu_0$ and
$\varrho_0$ be solutions to the homogeneous version
of~(\ref{eq:kombi}). Let $U_{10}$, $U_0$, and $W_0$ be the
fields~(\ref{eq:U1}), (\ref{eq:U}), and~(\ref{eq:W}) with $\mu=\mu_0$
and $\varrho=\varrho_0$. From Section~\ref{sec:inteqA} we know that
$W_0=0$ if (iii) holds. We shall now prove that also $U_0=0$ and, from
that, $\mu_0=0$ and $\varrho_0=0$.

It follows from Theorem~\ref{thm:exA}, which requires (iii), that
$\{\mu_0,\varrho_0\}$ represents a solution to problem ${\sf A}_0$.
If (ii) holds, then $U_0=0$ according to Corollary~\ref{cor:uniqueA0}.
It then follows that $U_{10}=0$ in $\Omega_1$ so that $U_{10}^+=0$ and
$[\nabla U_{10}]^+=0$. Then $\mu_0=0$ and $\varrho_0=0$
from~(\ref{eq:mu1}) and~(\ref{eq:rho1}). Now, from the Fredholm
alternative, the system (\ref{eq:kombi}) has a unique solution
$\{\mu,\varrho\}$. By Theorem~\ref{thm:exA} this solution represents a
solution to problem {\sf A}. If problem {\sf A} has at most one
solution, which requires (ii), this solution to problem {\sf A} is
unique and we conclude:
\begin{theorem}
  Assume that conditions (i), (ii), (iii) hold. Then the
  system~(\ref{eq:kombi}) has a unique solution $\{\mu,\varrho\}$
  which represents the unique solution to problem {\sf A}.
\label{thm:exunA}
\end{theorem}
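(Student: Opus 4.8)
The plan is to invoke the Fredholm alternative, so the first task is to certify that the system~(\ref{eq:kombi}) really is of Fredholm second kind with a compact correction to the identity. On a smooth $\Gamma$ the single-layer operators $S_{k_i}$ and the double-layer operators $K_{k_i}$, $K^{\rm A}_{k_i}$ are all compact, so the only dangerous entry is the hypersingular combination $T_{k_1}-c_2\kappa^{-1}T_{k_2}$ in the $(2,1)$ position. Here condition~(i), $c_2=\kappa$, is precisely what forces $c_2\kappa^{-1}=1$, and since the leading hypersingular part of $T_k$ depends only on the geometry and not on the wavenumber $k$, the difference $T_{k_1}-T_{k_2}$ is a smoothing of the singular kernels and hence compact. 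I expect this compactness verification to be the main obstacle, because it is the only place where both the smoothness of $\Gamma$ and the exact value $c_2=\kappa$ are genuinely needed; once it is in place, the rest of the argument is bookkeeping.

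Granting the Fredholm structure, unique solvability of~(\ref{eq:kombi}) reduces to showing that its homogeneous version admits only the trivial solution. So I would take a homogeneous solution $\{\mu_0,\varrho_0\}$ and build the associated fields $U_{10}$, $U_0$, and $W_0$ from~(\ref{eq:U1}), (\ref{eq:U}), and~(\ref{eq:W}). The argument of Section~\ref{sec:inteqA}, which rests on Proposition~\ref{prop:uniqueB0} and hence on condition~(iii), already shows that $W_0$ solves problem ${\sf B}_0$ and is therefore identically zero. This delivers the boundary relations~(\ref{eq:01m})--(\ref{eq:g02p}), now read with $U^{\rm in}=0$.

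Next I would feed these relations into Theorem~\ref{thm:exA}, valid under~(iii), to conclude that $\{\mu_0,\varrho_0\}$ represents, via~(\ref{eq:U}), a solution $U_0$ to the homogeneous problem ${\sf A}_0$. Condition~(ii) then activates Corollary~\ref{cor:uniqueA0}, forcing $U_0=0$ throughout $\Omega_1\cup\Omega_2$. In particular $U_{10}=0$ in $\Omega_1$, so $U_{10}^+=0$ and $\myvec\nu\cdot[\nabla U_{10}]^+=0$; the trace formulas~(\ref{eq:mu1}) and~(\ref{eq:rho1}), again with $U^{\rm in}=0$, then give $\mu_0=0$ and $\varrho_0=0$. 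The homogeneous kernel is thus trivial.

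Finally I would close the loop. By the Fredholm alternative the inhomogeneous system~(\ref{eq:kombi}) has a unique solution $\{\mu,\varrho\}$, and Theorem~\ref{thm:exA} turns it into a solution of problem {\sf A}. Since condition~(ii) also guarantees that problem {\sf A} has at most one solution, the solution just produced is necessarily \emph{the} solution, which establishes both the unique solvability of~(\ref{eq:kombi}) and the uniqueness of the represented solution to problem {\sf A}, as claimed.
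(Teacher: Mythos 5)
Your proposal is correct and follows essentially the same route as the paper: verify the Fredholm second-kind structure (where condition (i) neutralizes the hypersingular entry), show the homogeneous kernel is trivial by combining $W_0=0$ from condition (iii) with Corollary~\ref{cor:uniqueA0} under condition (ii), and then close with the Fredholm alternative and the at-most-one-solution property of problem {\sf A}. The only difference is that you spell out the compactness of $T_{k_1}-T_{k_2}$ in more detail than the paper, which simply asserts it.
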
 

Note that, when (i) holds, $\alpha=1/c_1$ in (iii) and it is always
possible to find a constant $c_1$ so that~(\ref{eq:crit2KLB0}) holds
under the assumption~(\ref{eq:k1k2}). In this respect, condition (iii)
in Theorem~\ref{thm:exunA} does not introduce any additional
constraint to problem {\sf A}. A simple rule that satisfies condition
(iii) is 
\begin{equation} 
c_1=\left\{ 
\begin{array}{lll} 
e^{{\rm i}\Arg(k_2)} &\mbox{if}&\Re{\rm e}\{k_1\}\ge 0\,,\\
e^{{\rm i}(\Arg(k_2)-\pi)}&\mbox{if}&\Re{\rm e}\{k_1\}< 0\,.
\end{array}\right.
\end{equation} 
This rule gives $c_1={\rm i}$ when $(\Arg(k_1),\Arg(k_2))=(0,\pi/2)$.
It is also possible to choose $c_1=-{\rm i}$ when
$(\Arg(k_1),\Arg(k_2))=(0,\pi/2)$.

Our results, so far, extend those of~\cite[Section~4.1]{KleiMart88},
where a direct formulation of problem {\sf A} is presented
in~\cite[Eq.~(4.10)]{KleiMart88}. To see this, note that
\cite[Eq.~(4.10)]{KleiMart88} corresponds to \eqref{eq:kombi} with
$c_2=\kappa$ and $c_1=1/\kappa$. Now~(\ref{eq:kombi}) with
$c_2=\kappa$ and $c_1$ in agreement with~(\ref{eq:crit2KLB0}) provides
unique solutions over a broader range of $k_1$, $k_2$, and $\kappa$
than does~\cite[Eq.~(4.10)]{KleiMart88}. For example, if
$(\Arg(k_1),\Arg(k_2))=(0,\pi/2)$ and $\Arg(\kappa)=\pi$, then
\eqref{eq:kombi} with $c_2=\kappa$ and $c_1=\pm{\rm i}$ is guaranteed
to have a unique solution while~\cite[Eq.~(4.10)]{KleiMart88} is not.

\section{A weakly singular representation of $U$}
\label{sec:eval}

Once the solution $\{\mu,\varrho\}$ has been obtained
from~(\ref{eq:kombi}), the field $U(\myvec r)$ can be evaluated
via~(\ref{eq:U}). When $\myvec r$ is close to $\Gamma$, this could be
problematic due to the rapid variation with $\myvec r'$ in the
Cauchy-type singular kernels of $K_{k_1}$ and $K_{k_2}$
in~(\ref{eq:U1}) and~(\ref{eq:U2}). To alleviate this problem we
introduce
\begin{equation}
V(\myvec r)=\left\{
\begin{array}{ll}
U_2(\myvec r)\,, & \myvec r\in\Omega_1\,,\\
U_1(\myvec r)\,, & \myvec r\in\Omega_2\,.
\end{array}
\right.
\label{eq:V}
\end{equation}
From~(\ref{eq:W}) and~(\ref{eq:W0}) it follows that $V$ is a
null-field such that $V=0$ in $\Omega_1\cup\Omega_2$, and hence
$U=U+V$. The Cauchy-type kernel singularities in the representation of
$U+V$ cancel out and we are left with better-behaved weakly singular
kernels. In the numerical examples in Section~\ref{sec:numex} we
exploit $U=U+V$ for near-field evaluation.

\section{Electromagnetic transmission problems}
\label{sec:EMprob}

We present three electromagnetic transmission problems called problem
{\sf C}, problem ${\sf C}_0$, and problem ${\sf D}_0$. The main
problem is {\sf C}, whereas problems ${\sf C}_0$ and ${\sf D}_0$ are
needed in proofs.

The prerequisites in Section \ref{sec:pre} hold, with regions
$\Omega_1$ and $\Omega_2$ that are dielectric and non-magnetic. The
electric field is denoted $\myvec E$ and the magnetic field $\myvec
H$. The electric field is scaled such that $\myvec E=\eta_1^{-1}\myvec
E_{\rm unscaled}$, where $\eta_1=\sqrt{\mu_0/\varepsilon_1}$ is the
wave impedance of $\Omega_1$ and $\varepsilon_1$ is the permittivity
of $\Omega_1$. Furthermore, problems {\sf C}, ${\sf C}_0$, and ${\sf
  D}_0$ contain a complex parameter $\kappa$ which plays a somewhat
similar role as the parameter $\kappa$ of Section \ref{sec:A0} played
in problem {\sf A} and ${\sf A}_0$. This new $\kappa$ has the value
$\kappa=\varepsilon_2/\varepsilon_1$, where $\varepsilon_2$ is the
permittivity of $\Omega_2$. For non-magnetic materials, this is
equivalent to
\begin{equation}
\kappa=k_2^2/k_1^2.
\label{eq:kappa}
\end{equation}

\subsection{Problems {\sf C} and  ${\sf C}_0$}
\label{sec:C0}

The transmission problem {\sf C} reads: Given an incident field
$\myvec H^{\rm in}$, generated in $\Omega_1$, find $\myvec E(\myvec
r),\,\myvec H(\myvec r)$, $\myvec r\in\Omega_1\cup\Omega_2$, which,
for wavenumbers $k_1$ and $k_2$ and with $\kappa$
from~(\ref{eq:kappa}) such that
\begin{equation}
0\leq\Arg(k_1), \Arg(k_2)<\pi
\quad\mbox{and}\quad
\kappa\neq -1\,,
\label{eq:k1k2Maxkap}
\end{equation}
solve Maxwell's equations
\begin{equation}
\begin{split}
\nabla\times\myvec E(\myvec r)&= {\rm i}k_1\myvec H(\myvec r)\,,
\quad\myvec r\in\Omega_1\cup\Omega_2\,,\\
\nabla\times\myvec H(\myvec r)&=-{\rm i}k_1\myvec E(\myvec r)\,,
\quad\myvec r\in\Omega_1\,,\\
\nabla\times\myvec H(\myvec r)&=-{\rm i}k_1\kappa\myvec E(\myvec r)\,,
\quad\myvec r\in\Omega_2\,,
\end{split}
\label{eq:Max123C}
\end{equation}
except possibly at an isolated point in $\Omega_1$ where the source of
$\myvec H^{\rm in}$ is located, subject to the boundary conditions
\begin{align}
\myvec\nu\times\myvec E^+(\myvec r)&=\myvec\nu\times\myvec E^-(\myvec r)\,,
\quad\myvec r\in \Gamma\,,\label{eq:rv2C}\\
\myvec\nu\times\myvec H^+(\myvec r)&=\myvec\nu\times\myvec H^-(\myvec r)\,, 
\quad\myvec r\in \Gamma\,,\label{eq:rv1C}\\
\left(\partial_{\hat{\myvec r}}-{\rm i}k_1\right)\myvec H^{\rm sc} 
(\myvec r)&=o\left(\vert \myvec r\vert^{-1}\right),
\quad\lvert\myvec r\rvert\rightarrow\infty\,.
\label{eq:radcondC}
\end{align}
The scattered field $\myvec H^{\rm sc}$ is source free in $\Omega_1$
and defined by
\begin{equation}
\myvec H(\myvec r)=\myvec H^{\rm in}(\myvec r)+\myvec H^{\rm sc}(\myvec r)\,,
\quad\myvec r\in \Omega_1\,.
\label{eq:decomp}
\end{equation}
The condition \eqref{eq:radcondC} and decomposition \eqref{eq:decomp}
also hold for $\myvec E$. The incident field satisfies
\begin{equation}
\begin{split}
\nabla\times\myvec E^{\rm in}(\myvec r)&= {\rm i}k_1\myvec H^{\rm in}
(\myvec r)\,,\quad\myvec r\in\mathbb{R}^3\,,\\
\nabla\times\myvec H^{\rm in}(\myvec r)&=-{\rm i}k_1\myvec E^{\rm in}
(\myvec r)\,,\quad\myvec r\in\mathbb{R}^3\,,
\end{split}
\end{equation}
except at the possible isolated source point in $\Omega_1$.

The homogeneous problem ${\sf C}_0$ is problem {\sf C} with $\myvec
E^{\rm in}=\myvec H^{\rm in}=\myvec 0$.

\subsection{Problem ${\sf D}_0$}
\label{sec:D0}

The transmission problem ${\sf D}_0$ reads: find $\myvec E_W(\myvec
r),\,\myvec H_W(\myvec r)$, $\myvec r\in\Omega_1\cup\Omega_2$, which,
for a complex jump parameter $\lambda$ and for $k_1$, $k_2$, and
$\kappa$ such that \eqref{eq:k1k2Maxkap} holds, solve
\begin{equation}
\begin{split}
\nabla\times\myvec E_W(\myvec r)&= {\rm i}k_1\myvec H_W(\myvec r)\,,
\quad\myvec r\in\Omega_1\cup\Omega_2\,,\\
\nabla\times\myvec H_W(\myvec r)&=-{\rm i}k_1\kappa\myvec E_W(\myvec r)\,,
\quad\myvec r\in\Omega_1\,,\\
\nabla\times\myvec H_W(\myvec r)&=-{\rm i}k_1\myvec E_W(\myvec r)\,,
\quad\myvec r\in\Omega_2\,,
\end{split}
\label{eq:D0123}
\end{equation}
subject to the boundary conditions
\begin{align}
\lambda\kappa\myvec\nu\times \myvec E_W^+(\myvec r)&=
\myvec\nu\times\myvec E_W^-(\myvec r)\,,\quad\myvec r\in\Gamma\,,
\label{eq:RV2D0}\\
\myvec\nu\times\myvec H_W^+(\myvec r)&=
\myvec\nu\times\myvec H_W^-(\myvec r)\,,\quad\myvec r\in\Gamma\,,
\label{eq:RV1D0}\\
\left(\partial_{\hat{\myvec r}}-{\rm i}k_2\right)\myvec H_W(\myvec r)&=o
\left(\vert\myvec r\vert^{-1}\right),
\quad\lvert\myvec r\rvert\rightarrow\infty\,.
\label{eq:radcondD0}
\end{align}
The radiation condition \eqref{eq:radcondD0} also holds for $\myvec
E_W$.

\subsection{Uniqueness and existence of solutions to problem ${\sf
     C}$, ${\sf C}_0$, and ${\sf D}_0$}
\label{sec:uniqueex}

In Appendix D it is shown that when $(\Arg(k_1),\Arg(k_2))$ is in the
set of points of Figure~\ref{fig:hexagon}(a), then problem {\sf C} has
at most one solution and problem ${\sf C}_0$ has only the trivial
solution $\myvec E=\myvec H=\myvec 0$. It is also shown that when the
conditions of Proposition~\ref{prop:uniqueB0} hold for
$\{k_1,k_2,\alpha=\lambda\}$, then problem ${\sf D}_0$ has only the
trivial solution $\myvec E_W= \myvec H_W=\myvec 0$.

The stronger results for problem {\sf A}, discussed in
Section~\ref{sec:uniqueA0kappa}, carry over to problem {\sf C}. One
can prove that there exists a unique solution in finite energy norm to
problem {\sf C} in Lipschitz domains when $(\Arg(k_1),\Arg(k_2))$ is
in the set of points of Figure~\ref{fig:hexagon}(b) and $\kappa$ is
outside a certain interval on the real
axis~\cite[Proposition~7.4]{HelsRose20}.

\section{Integral representations for problem {\sf C}}
\label{sec:intrepC}

Let $\sigma_{\rm E}$, $\varrho_{\rm E}$, $\myvec M_{\rm s}$, $\myvec
J_{\rm s}$, $\varrho_{\rm M}$, and $\sigma_{\rm M}$ be six unknown,
scalar- and vector-valued, surface densities and define the four
fields
\begin{align}
\myvec E_1(\myvec r)&=
-\frac{1}{2}\boldsymbol{\cal N}_{k_1}\varrho_{\rm E}(\myvec r)
-\frac{1}{2}{\cal R}_{k_1}\left(\myvec\nu'\sigma_{\rm M}
+\myvec M_{\rm s})(\myvec r\right)\nonumber\\
&\qquad
+\frac{1}{2}\tilde{\cal S}_{k_1}(\myvec\nu'\sigma_{\rm E}
+\myvec J_{\rm s})(\myvec r)+ \myvec E^{\rm in}(\myvec r)\,,
\quad\myvec r\in\Omega_1\cup\Omega_2\,,
\label{eq:E1rep1}\\
\myvec E_2(\myvec r)&=
\frac{1}{2\kappa}\boldsymbol{\cal N}_{k_2}\varrho_{\rm E}(\myvec r)
+\frac{1}{2\kappa}{\cal R}_{k_2}\left(\myvec\nu'\sigma_{\rm M}
+\kappa\myvec M_{\rm s})(\myvec r\right)\nonumber\\
&\qquad 
-\frac{1}{2}\tilde{\cal S}_{k_2}(\kappa^{-1}\myvec\nu'\sigma_{\rm E}
+\myvec J_{\rm s})(\myvec r)\,,
\quad\myvec r\in\Omega_1\cup\Omega_2\,,
\label{eq:E2rep1}
\end{align}
\begin{align}
\myvec H_1(\myvec r)&=
\frac{1}{2}\tilde{\cal S}_{k_1}\left(\myvec\nu'\sigma_{\rm M}
+\myvec M_{\rm s})(\myvec r\right)
+\frac{1}{2}{\cal R}_{k_1}(\myvec\nu'\sigma_{\rm E}
+\myvec J_{\rm s})(\myvec r)\nonumber\\
&\qquad 
-\frac{1}{2}\boldsymbol{\cal N}_{k_1}\varrho_{\rm M}(\myvec r)
+\myvec H^{\rm in}(\myvec r)\,,
\quad\myvec r\in \Omega_1\cup \Omega_2\,,
\label{eq:H1rep11}\\
\myvec H_2(\myvec r)&=
-\frac{1}{2}\tilde{\cal S}_{k_2}\left(\myvec\nu'\sigma_{\rm M}
+\kappa\myvec M_{\rm s})(\myvec r\right)
-\frac{1}{2}{\cal R}_{k_2}(\kappa^{-1}\myvec\nu'\sigma_{\rm E}
+\myvec J_{\rm s})(\myvec r)\nonumber\\
&\qquad 
+\frac{1}{2}\boldsymbol{\cal N}_{k_2}\varrho_{\rm M}(\myvec r)\,,
\quad\myvec r\in\Omega_1\cup\Omega_2\,.
\label{eq:H2rep11}
\end{align}
The introduction of $\sigma_{\rm E}$ and $\sigma_{\rm M}$ is inspired
by the integral representations for the generalized Helmholtz
transmission problem in \cite{Vico16,VicGreFer18}.

The integral representations of the fields $\myvec E$ and $\myvec H$
for problem {\sf C} are
\begin{equation}
\myvec E(\myvec r)=\left\{
\begin{array}{ll}
\myvec E_1(\myvec r)\,, & \myvec r\in\Omega_1\,,\\
\myvec E_2(\myvec r)\,, & \myvec r\in\Omega_2\,,
\end{array}\right.
\quad
\myvec H(\myvec r)=\left\{
\begin{array}{ll}
\myvec H_1(\myvec r)\,, & \myvec r\in \Omega_1\,,\\
\myvec H_2(\myvec r)\,, & \myvec r\in \Omega_2\,.
\end{array}\right.
\label{eq:repEH}
\end{equation}

\section{Integral equations for problem {\sf C}}
\label{sec:inteqC}

For the determination of $\{\sigma_{\rm E},\varrho_{\rm E},\myvec
M_{\rm s},\myvec J_{\rm s},\varrho_{\rm M},\sigma_{\rm M}\}$ we
propose the system of second-kind integral equations on $\Gamma$
\begin{equation}
\left(I+{\bf D}{\bf Q}\right)\myvec\mu=2{\bf D}\myvec f\,.
\label{eq:EHsys}
\end{equation}
Here $\myvec\mu$ and $\myvec f$ are column vectors with six entries
each
\begin{align*}
\myvec\mu&=\left[\sigma_{\rm E};\varrho_{\rm E};\myvec M_{\rm s};
\myvec J_{\rm s};\varrho_{\rm M};\sigma_{\rm M}\right],\\
\myvec f&=\left[0;\myvec\nu\cdot\myvec E^{\rm in};
-\myvec\nu\times\myvec E^{\rm in};\myvec\nu\times\myvec H^{\rm in};
\myvec\nu\cdot\myvec H^{\rm in};0\right],
\end{align*}
${\bf Q}$ is a $6\times 6$ matrix whose non-zero operator entries
$Q_{ij}$ map scalar- or vector-valued densities to scalar or
vector-valued functions
\begin{align*}
Q_{11}&=-K_{k_1}+c_3K_{k_2}\,,\quad
Q_{12} =-\tilde S_{k_1}+c_3\kappa\tilde S_{k_2}\,,\quad
Q_{14} =\nabla\cdot({\cal S}_{k_1}-c_3\kappa {\cal S}_{k_2})\,,\\
Q_{21}&=-\myvec\nu\cdot(\tilde{\cal S}_{k_1}-c_4\tilde{\cal
  S}_{k_2})\myvec\nu'\,,\quad
Q_{22} =K_{k_1}^ {\rm A}-c_4K_{k_2}^{\rm A}\,,\\
Q_{23}&=\myvec\nu\cdot({\cal R}_{k_1}-c_4\kappa {\cal R}_{k_2})\,,\quad
Q_{24} =-\myvec\nu\cdot(\tilde{\cal S}_{k_1}-c_4\kappa\tilde{\cal S}_{k_2})\,,\\
Q_{26}&=\myvec\nu\cdot({\cal R}_{k_1}-c_4{\cal R}_{k_2})\myvec\nu'\,,\quad
Q_{31} =\myvec\nu\times(\tilde{\cal S}_{k_1}-c_5\kappa^{-1}\tilde{\cal S}_{k_2})\myvec\nu'\,,\\
Q_{32}&=-\myvec\nu\times(\boldsymbol{\cal
  N}_{k_1}-c_5\kappa^{-1}\boldsymbol{\cal N}_{k_2})\,,\quad
Q_{33} =-\myvec\nu\times({\cal R}_{k_1}-c_5{\cal R}_{k_2})\,,\\
Q_{34}&=\myvec\nu\times(\tilde{\cal S}_{k_1}-c_5\tilde{\cal S}_{k_2})\,,\quad
Q_{36} =-\myvec\nu\times({\cal R}_{k_1}-c_5\kappa^{-1}{\cal
  R}_{k_2})\myvec\nu'\,,\\
Q_{41}&=-\myvec\nu\times({\cal R}_{k_1}-c_6\kappa^{-1}{\cal
  R}_{k_2})\myvec\nu'\,,\quad
Q_{43} =-\myvec\nu\times(\tilde{\cal S}_{k_1}-c_6\kappa\tilde{\cal
  S}_{k_2})\,,\\
Q_{44}&=-\myvec\nu\times({\cal R}_{k_1}-c_6{\cal R}_{k_2})\,,\quad
Q_{45} =\myvec\nu\times(\boldsymbol{\cal N}_{k_1}-c_6\boldsymbol{\cal
  N}_{k_2})\,,\\
Q_{46}&=-\myvec\nu\times(\tilde{\cal S}_{k_1}-c_6\tilde{\cal
  S}_{k_2})\myvec\nu'\,,\quad
Q_{51} =-\myvec\nu\cdot({\cal R}_{k_1}-c_7\kappa^{-1}{\cal
  R}_{k_2})\myvec\nu'\,,\\
Q_{53}&=-\myvec\nu\cdot(\tilde{\cal S}_{k_1}-c_7\kappa \tilde{\cal
  S}_{k_2})\,,\quad
Q_{54} =-\myvec\nu\cdot({\cal R}_{k_1}-c_7{\cal R}_{k_2})\,,\\
Q_{55}&=K_{k_1}^{\rm A}-c_7K_{k_2}^{\rm A}\,,\quad
Q_{56} =-\myvec\nu\cdot(\tilde{\cal S}_{k_1}-c_7\tilde{\cal
  S}_{k_2})\myvec\nu'\,,\\
Q_{63}&=\nabla\cdot({\cal S}_{k_1}-c_8\kappa{\cal S}_{k_2})\,,\quad
Q_{65} =-\tilde S_{k_1}+c_8\kappa\tilde S_{k_2}\,,\quad
Q_{66} =-K_{k_1}+c_8K_{k_2}\,,
\end{align*}
${\bf D}$ is a diagonal $6\times 6$ matrix of scalars with non-zero
entries
\begin{equation}
\begin{split}
&D_{ii}=(1+c_{i+2})^{-1}\,,\quad i=1,\ldots,6\,,\\
c_3=&\gamma_{\rm E}c\,,\quad c_4=c_6=c\,,\quad c_5=c_7=\lambda\kappa c\,,
\quad c_8=\gamma_{\rm M}c\,,
\end{split}
\end{equation}
and $c$, $\lambda$, $\gamma_{\rm E}$, and $\gamma_{\rm M}$ are free
parameters such that
\begin{equation}
c_i \neq -1,0,\quad i=3,\ldots,8\,.
\label{eq:neq0}
\end{equation}

\subsection{Criteria for \eqref{eq:repEH} to represent a solution 
  to problem {\sf C}}

We now prove that a solution $\myvec\mu$ to \eqref{eq:EHsys}, under
certain conditions and via \eqref{eq:repEH}, represents a solution to
problem {\sf C}.
   
The fundamental solution \eqref{eq:fund} makes $\myvec E$ and $\myvec
H$ of \eqref{eq:repEH} satisfy the radiation condition
\eqref{eq:radcondC}. It remains to prove that $\myvec E$ and $\myvec
H$ satisfy Maxwell's equations \eqref{eq:Max123C} and the boundary
conditions \eqref{eq:rv2C} and \eqref{eq:rv1C}. For this we first need
to show that, under certain conditions, $\myvec E_1$ and $\myvec H_1$
of \eqref{eq:E1rep1} and \eqref{eq:H1rep11} are zero in $\Omega_2$ and
$\myvec E_2$ and $\myvec H_2$ of \eqref{eq:E2rep1} and
\eqref{eq:H2rep11} are zero in $\Omega_1$. We introduce the auxiliary
fields
\begin{equation}
\myvec E_W(\myvec r)=\left\{
\begin{array}{ll}
\myvec E_2(\myvec r)\,,        & \myvec r\in\Omega_1\,,\\
-c^{-1}\myvec E_1(\myvec r)\,, & \myvec r\in\Omega_2\,,
\end{array}\right.
\quad
\myvec H_W(\myvec r)=\left\{
\begin{array}{ll}
\myvec H_2(\myvec r)\,,        & \myvec r\in\Omega_1\,,\\
-c^{-1}\myvec H_1(\myvec r)\,, & \myvec r\in\Omega_2\,.
\end{array}\right.
\label{eq:EHW1}
\end{equation}

The fields $\myvec E_W$ and $\myvec H_W$, with $\myvec\mu$ from
\eqref{eq:EHsys}, is the unique trivial solution to problem ${\sf
  D}_0$ provided the sets $\{k_1,k_2,\alpha=\lambda\bar\gamma_{\rm
  M}\}$, $\{k_1,k_2,\alpha=\bar\gamma_{\rm E}\bar\kappa\}$, and
$\{k_1,k_2,\alpha=\lambda\}$ are such that the conditions of
Proposition~\ref{prop:uniqueB0} hold. This statement is now shown in
several steps. The fundamental solution \eqref{eq:fund} makes $\myvec
E_W$ and $\myvec H_W$ satisfy \eqref{eq:radcondD0}. Using Appendix A
in combination with \eqref{eq:EHsys} one can show that
\eqref{eq:RV2D0} and \eqref{eq:RV1D0} are satisfied. Appendix B shows
that if $\myvec\mu$ is a solution to \eqref{eq:EHsys} and if the
conditions of Proposition~\ref{prop:uniqueB0} hold for
$\{k_1,k_2,\alpha=\lambda\bar\gamma_{\rm M}\}$ and
$\{k_1,k_2,\alpha=\bar\gamma_{\rm E}\bar\kappa\}$, then
\begin{align}
\nabla\cdot\myvec E_W(\myvec r)&=0\,,
\quad\myvec r\in \Omega_1\cup\Omega_2\,,\label{eq:divEW}\\
\nabla\cdot\myvec H_W(\myvec r)&=0\,,
\quad\myvec r\in \Omega_1\cup\Omega_2\,.\label{eq:divHW}
\end{align}
Appendix C shows that if \eqref{eq:divEW} and \eqref{eq:divHW} hold,
then $\myvec E_W$ and $\myvec H_W$ satisfy \eqref{eq:D0123}. If the
conditions of Proposition~\ref{prop:uniqueB0} also hold for
$\{k_1,k_2,\alpha=\lambda\}$, then ${\sf D}_0$ only has the trivial
solution $\myvec E_W=\myvec H_W=\myvec 0$, that is,
\begin{equation}
\begin{split}
\myvec E_2(\myvec r)&=\myvec H_2(\myvec r)=
\myvec 0\,,\quad\myvec r\in \Omega_1\,,\\
\myvec E_1(\myvec r)&=\myvec H_1(\myvec r)=
\myvec 0\,,\quad\myvec r\in \Omega_2\,.
\end{split}
\label{eq:H1H2zero}
\end{equation}
By that the statement is proven.

From \eqref{eq:H1H2zero} and Appendix A we obtain the surface
densities as boundary values of the full 3D electromagnetic fields
\begin{align}
[\nabla\cdot \myvec E_1]^+(\myvec r)&=
\kappa[\nabla\cdot \myvec E_2]^-(\myvec r)=
-{\rm i}k_1\sigma_{\rm E}(\myvec r)\,,\label{eq:bvalues1}\\
\myvec\nu\cdot\myvec E_1^+(\myvec r)&=
\kappa\myvec\nu\cdot\myvec E_2^-(\myvec r)=
\varrho_{\rm E}(\myvec r)\,,\label{eq:bvalues2}\\
\myvec\nu\times\myvec E_1^+(\myvec r)&=
\myvec\nu\times\myvec E_2^-(\myvec r)=
-\myvec M_{\rm s}(\myvec r)\,,\label{eq:bvalues3}\\
\myvec\nu\times\myvec H_1^+(\myvec r)&=
\myvec\nu\times\myvec H_2^-(\myvec r)=
\myvec J_{\rm s}(\myvec r)\,,\label{eq:bvalues4}\\
\myvec\nu\cdot\myvec H_1^+(\myvec r)&=
\myvec\nu\cdot\myvec H_2^-(\myvec r)=
\varrho_{\rm M}(\myvec r)\,,\label{eq:bvalues5}\\
[\nabla\cdot \myvec H_1]^+(\myvec r)&=
[\nabla\cdot \myvec H_2]^-(\myvec r)=-{\rm i}k_1\sigma_{\rm M}(\myvec r)\,.
\label{eq:bvalues6}
\end{align}
Due to \eqref{eq:bvalues3} and \eqref{eq:bvalues4}, $\myvec E$ and
$\myvec H$ of \eqref{eq:repEH} satisfy \eqref{eq:rv2C} and
\eqref{eq:rv1C}. Appendix B shows that
\eqref{eq:bvalues1}--\eqref{eq:bvalues6} imply
\begin{equation}
\nabla\cdot\myvec E(\myvec r)=\nabla\cdot\myvec H(\myvec r)=0\,,
\quad\myvec r\in\Omega_1\cup\Omega_2\,,
\label{eq:divH0E0}
\end{equation}
when $(\Arg(k_1),\Arg(k_2))$ is in the set of points of Figure
\ref{fig:hexagon}(a). Finally, from the representations
\eqref{eq:E1rep1}--\eqref{eq:H2rep11} and the divergence condition
\eqref{eq:divH0E0}, Appendix C shows that \eqref{eq:Max123C} is
satisfied. We conclude:
\begin{theorem}
  Assume that $\{k_1,k_2,\alpha=\lambda\bar\gamma_{\rm M}\}$,
  $\{k_1,k_2,\alpha=\bar\gamma_{\rm E}\bar\kappa\}$, and $\{k_1,k_2,
  \alpha=\lambda\}$ are such that the conditions of
  Proposition~\ref{prop:uniqueB0} hold. Then a solution $\myvec\mu$ to
  \eqref{eq:EHsys} represents, via \eqref{eq:repEH}, a solution also
  to problem {\sf C}. Furthermore, \eqref{eq:repEH} and
  \eqref{eq:EHsys} correspond to a direct integral equation
  formulation of problem {\sf C} with $\myvec\mu$ linked to limits of
  $\myvec E$ and $\myvec H$ via
  \eqref{eq:bvalues1}--\eqref{eq:bvalues6}.
\label{thm:exC}
\end{theorem}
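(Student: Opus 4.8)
The plan is to confirm that the pair $\{\myvec E,\myvec H\}$ built from a solution $\myvec\mu$ of \eqref{eq:EHsys} satisfies every defining property of problem {\sf C}: the radiation condition \eqref{eq:radcondC}, Maxwell's equations \eqref{eq:Max123C}, and the two tangential transmission conditions \eqref{eq:rv2C} and \eqref{eq:rv1C}. The radiation condition is immediate because each layer potential in \eqref{eq:E1rep1}--\eqref{eq:H2rep11} is built from the outgoing fundamental solution \eqref{eq:fund}. Everything else rests on a single null-field statement, namely \eqref{eq:H1H2zero}: the ``interior'' representations $\myvec E_1,\myvec H_1$ vanish throughout $\Omega_2$ while the ``exterior'' representations $\myvec E_2,\myvec H_2$ vanish throughout $\Omega_1$. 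This is the electromagnetic analogue of the scalar device used in Sections~\ref{sec:inteqA}--\ref{sec:uniqueA}, where the auxiliary field $W$ of \eqref{eq:W} was forced to solve the homogeneous problem ${\sf B}_0$ and hence to vanish.

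First I would establish \eqref{eq:H1H2zero} by showing that the auxiliary fields $\{\myvec E_W,\myvec H_W\}$ of \eqref{eq:EHW1} constitute a solution of the homogeneous problem ${\sf D}_0$, so that the uniqueness statement for ${\sf D}_0$ (Proposition~\ref{prop:uniqueB0} applied to $\{k_1,k_2,\alpha=\lambda\}$) forces them to vanish. This requires four checks. The radiation condition \eqref{eq:radcondD0} is again automatic from \eqref{eq:fund}. The tangential conditions \eqref{eq:RV2D0} and \eqref{eq:RV1D0} follow from feeding the boundary limits of Appendix~A into the integral system \eqref{eq:EHsys}; the parameter grouping $c_3,\dots,c_8$ and the diagonal weights $D_{ii}$ are chosen precisely so that each row of $I+{\bf D}{\bf Q}$ collapses onto the corresponding ${\sf D}_0$ jump relation. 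The divergence-free identities \eqref{eq:divEW} and \eqref{eq:divHW} are supplied by Appendix~B under the Proposition~\ref{prop:uniqueB0} conditions for $\alpha=\lambda\bar\gamma_{\rm M}$ and $\alpha=\bar\gamma_{\rm E}\bar\kappa$, and with these in hand Appendix~C upgrades the representation to the curl system \eqref{eq:D0123}. The third hypothesis then yields $\myvec E_W=\myvec H_W=\myvec 0$, which is exactly \eqref{eq:H1H2zero}.

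With the null-field property in place, I would read off the surface densities as boundary traces of the surviving field halves: applying the Appendix~A limits to \eqref{eq:E1rep1}--\eqref{eq:H2rep11} after \eqref{eq:H1H2zero} gives the identities \eqref{eq:bvalues1}--\eqref{eq:bvalues6}, which simultaneously prove the ``direct formulation'' clause of the theorem. The tangential transmission conditions \eqref{eq:rv2C} and \eqref{eq:rv1C} are then immediate from \eqref{eq:bvalues3} and \eqref{eq:bvalues4}. To finish, Appendix~B converts \eqref{eq:bvalues1}--\eqref{eq:bvalues6} into the global divergence conditions \eqref{eq:divH0E0}---the step that additionally calls on $(\Arg(k_1),\Arg(k_2))$ lying in the set of Figure~\ref{fig:hexagon}(a)---and Appendix~C then turns \eqref{eq:divH0E0} together with the ansatz \eqref{eq:E1rep1}--\eqref{eq:H2rep11} into Maxwell's equations \eqref{eq:Max123C}. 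Having verified all four defining properties, the conclusion follows.

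The main obstacle is not the top-level logic, which is a clean recycling of the scalar ${\sf B}_0$ argument, but the two appendix-level computations that I would treat as the technical heart. Appendix~B must show that the weighted operator differences in ${\bf Q}$ produce genuinely divergence-free fields; this is where the relations $c_4=c_6=c$, $c_5=c_7=\lambda\kappa c$, $c_3=\gamma_{\rm E}c$, and $c_8=\gamma_{\rm M}c$ earn their keep, and it is where the three distinct Proposition~\ref{prop:uniqueB0} hypotheses (on $\lambda\bar\gamma_{\rm M}$, $\bar\gamma_{\rm E}\bar\kappa$, and $\lambda$) separately enter. Appendix~C must then show that divergence-free plus the prescribed combination of $\tilde{\cal S}_k$, $\boldsymbol{\cal N}_k$, and ${\cal R}_k$ potentials reproduces the curl relations with the correct factor $\kappa$ across $\Gamma$. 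Bookkeeping the $6\times 6$ operator matrix and its normal and tangential projections through all these limits is the part most prone to sign and scaling errors, so I would organize that computation row by row against the six defining relations of ${\sf D}_0$ and of problem {\sf C}.
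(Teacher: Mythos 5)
Your proposal follows the paper's own proof essentially step for step: the same auxiliary fields $\myvec E_W,\myvec H_W$ of \eqref{eq:EHW1} reduced to problem ${\sf D}_0$, the same division of labour among Appendices A (boundary limits and jump relations), B (divergence conditions under the three Proposition~\ref{prop:uniqueB0} hypotheses), and C (recovery of the curl equations), and the same passage through \eqref{eq:H1H2zero} and \eqref{eq:bvalues1}--\eqref{eq:bvalues6} to the boundary conditions and Maxwell's equations. You even correctly flag the one subtlety the paper itself glosses over, namely that the final divergence step \eqref{eq:divH0E0} invokes the Figure~\ref{fig:hexagon}(a) condition, which is not listed among the theorem's stated hypotheses.
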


\begin{remark}
  The surface densities in \eqref{eq:bvalues1}--\eqref{eq:bvalues6}
  can be given the following physical interpretations: $-{\rm
    i}k_1\sigma_{\rm E}$ and $-{\rm i}k_1\sigma_{\rm M}$ are the
  electric and magnetic volume charge densities at $\Gamma^+$,
  $\varrho_{\rm E}$ and $\varrho_{\rm M}$ are the equivalent electric
  and magnetic surface charge densities on $\Gamma^+$, and $\myvec
  M_{\rm s}$ and $\myvec J_{\rm s}$ are the equivalent magnetic and
  electric surface current densities on $\Gamma^+$.
\end{remark}

\section{Unique solution to  problem {\sf C} from \eqref{eq:EHsys}}
\label{sec:uniqueC}

We now prove that if there exists a solution to problem {\sf C}, then,
under certain conditions, there exists a solution $\myvec\mu$ to
\eqref{eq:EHsys} and it represents the unique solution to problem {\sf
  C}. Three conditions are referred to
\begin{itemize}
\item[(i)] The conditions in \eqref{eq:neq0} hold.
\item[(ii)] $(\Arg(k_1),\Arg(k_2))$ is in the set of points of
  Figure~\ref{fig:hexagon}(a).
\item[(iii)] $\{k_1,k_2,\alpha=\lambda\bar\gamma_{\rm M}\}$,
  $\{k_1,k_2,\alpha=\bar\gamma_{\rm E}\bar\kappa\}$, and $\{k_1, k_2,
  \alpha=\lambda\}$ are such that the conditions of
  Proposition~\ref{prop:uniqueB0} hold.
\end{itemize}
Let $\myvec\mu_{0}$ be a solution to the homogeneous version of
\eqref{eq:EHsys} and assume that (i), (ii), and (iii) hold. Since
(iii) holds, $\myvec\mu_{\rm 0}$ represents a solution to problem {\sf
  C}$_0$, according to Theorem \ref{thm:exC}. Since (ii) holds, this
solution is the trivial solution $\myvec E=\myvec H=\myvec 0$,
according to Section \ref{sec:uniqueex}. The limits of fields in
\eqref{eq:bvalues1}--\eqref{eq:bvalues6} are then zero and hence
$\myvec\mu_0=\myvec 0$. Then \eqref{eq:EHsys} has at most one solution
$\myvec\mu$. Since $\myvec\mu$ is linked to limits of $\myvec E$ and
$\myvec H$ via \eqref{eq:bvalues1}--\eqref{eq:bvalues6} it follows
that if problem {\sf C} has a solution, then via
\eqref{eq:bvalues1}--\eqref{eq:bvalues6} this solution gives
a $\myvec\mu$ that solves \eqref{eq:EHsys}. We conclude:
\begin{theorem}
  Assume that there exists a solution to problem {\sf C} and that
  condition (ii) holds. Then this solution is unique. If conditions
  (i) and (iii) also hold, then there exists a unique solution
  $\myvec\mu$ to \eqref{eq:EHsys} and this solution represents via
  \eqref{eq:repEH} the unique solution to problem {\sf C}.
\label{thm:exunC}
\end{theorem}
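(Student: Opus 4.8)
The plan is to combine the Fredholm alternative for the second-kind system \eqref{eq:EHsys} with the already-established representation theorem (Theorem~\ref{thm:exC}) and the uniqueness results of Section~\ref{sec:uniqueex}. The structure mirrors the scalar argument used for Theorem~\ref{thm:exunA}: first prove that the \emph{homogeneous} system admits only the zero solution, deduce existence/uniqueness of $\myvec\mu$ from Fredholm theory, and then match the solution of problem {\sf C} (assumed to exist) to a solution of \eqref{eq:EHsys} via the boundary-value relations \eqref{eq:bvalues1}--\eqref{eq:bvalues6}.

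First I would check that condition (i), namely \eqref{eq:neq0}, guarantees that \eqref{eq:EHsys} is of the form $(I+{\bf D}{\bf Q})\myvec\mu=2{\bf D}\myvec f$ where ${\bf D}{\bf Q}$ is compact on the appropriate product of density spaces over the smooth surface $\Gamma$. Each nonzero operator entry $Q_{ij}$ is a \emph{difference} of two layer-potential operators with the same kernel structure but different wavenumbers $k_1,k_2$; the leading (wavenumber-independent) singularities cancel in these differences, leaving weakly singular, hence compact, operators. The factors $(1+c_{i+2})^{-1}$ in ${\bf D}$ are finite and nonzero precisely because of \eqref{eq:neq0}, so multiplication by ${\bf D}$ preserves compactness and the Fredholm alternative applies.

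Next I would run the injectivity argument for the homogeneous equation. Let $\myvec\mu_0$ solve $(I+{\bf D}{\bf Q})\myvec\mu_0=\myvec 0$. Because (iii) holds, Theorem~\ref{thm:exC} applies and tells me that $\myvec\mu_0$, inserted into the representation \eqref{eq:repEH}, yields fields $\myvec E,\myvec H$ that solve the homogeneous problem {\sf C}$_0$ (the incident field drops out since the right-hand side is zero). Invoking (ii) and the uniqueness statement of Section~\ref{sec:uniqueex}, problem ${\sf C}_0$ has only $\myvec E=\myvec H=\myvec 0$. Feeding these vanishing fields back through the boundary-trace identities \eqref{eq:bvalues1}--\eqref{eq:bvalues6} forces every component of $\myvec\mu_0$ to vanish, so $\myvec\mu_0=\myvec 0$ and the homogeneous system is injective. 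By the Fredholm alternative, \eqref{eq:EHsys} then has \emph{at most} one solution $\myvec\mu$ for any right-hand side.

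Finally I would establish the correspondence with problem {\sf C}. Assuming a solution $\{\myvec E,\myvec H\}$ to problem {\sf C} exists, condition (ii) already gives that it is the \emph{unique} such solution. To produce the matching $\myvec\mu$, I would take the boundary limits prescribed in \eqref{eq:bvalues1}--\eqref{eq:bvalues6} and \emph{define} $\myvec\mu$ from the true field; the direct-formulation identities guarantee this $\myvec\mu$ satisfies \eqref{eq:EHsys}. Combined with the injectivity above, $\myvec\mu$ is then the unique solution of \eqref{eq:EHsys}, and Theorem~\ref{thm:exC} confirms that reinserting it into \eqref{eq:repEH} reproduces the unique solution of problem {\sf C}. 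The main obstacle I anticipate is the compactness bookkeeping in the first step: one must verify, entry by entry, that every $Q_{ij}$ is genuinely a difference whose principal symbols cancel — the mixed entries involving $\nabla\cdot({\cal S}_{k_1}-c\kappa{\cal S}_{k_2})$ and the hypersingular-looking combinations $\myvec\nu\cdot(\tilde{\cal S}_{k_1}-\cdots)\myvec\nu'$ require care, since here the scaling factors $c_3,\dots,c_8$ must be exactly the ones listed so that the strongly singular parts, not merely the weakly singular parts, subtract out. This is where the specific choice of the parameter pattern $c_4=c_6=c$, $c_5=c_7=\lambda\kappa c$, etc., does the real work, and it is what makes \eqref{eq:EHsys} a \emph{compact-perturbation-of-identity} rather than merely a second-kind equation in name.
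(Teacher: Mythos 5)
Your core argument coincides with the paper's: injectivity of $I+{\bf D}{\bf Q}$ follows by feeding a homogeneous solution $\myvec\mu_0$ through Theorem~\ref{thm:exC} (using (iii)), the triviality of problem ${\sf C}_0$ (using (ii) and Section~\ref{sec:uniqueex}), and the trace identities \eqref{eq:bvalues1}--\eqref{eq:bvalues6}; existence of $\myvec\mu$ is then obtained by defining it from the boundary limits of the assumed solution to problem {\sf C}. That is exactly the paper's proof.

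Where you diverge is in the opening compactness/Fredholm layer, and there you have misjudged both its necessity and its validity. It is unnecessary because the only conclusion you extract from the Fredholm alternative is ``at most one solution,'' which already follows from injectivity of $I+{\bf D}{\bf Q}$ with no compactness whatsoever; the existence half of the theorem is deliberately conditional on problem {\sf C} being solvable and is supplied by the direct-formulation identities, not by Fredholm surjectivity. (Contrast this with the scalar Theorem~\ref{thm:exunA}, where the paper \emph{does} invoke the Fredholm alternative and correspondingly must impose $c_2=\kappa$ to make the hypersingular difference compact.) It is also doubtful as stated: several entries of ${\bf Q}$ are \emph{not} compact for general admissible parameters. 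For instance $Q_{14}=\nabla\cdot({\cal S}_{k_1}-c_3\kappa{\cal S}_{k_2})$ has its Cauchy-type principal singularity cancel only if $c_3\kappa=1$, which the choices \eqref{eq:parameter2}--\eqref{eq:parameter1} do not generally enforce, and the paper itself notes in Section~\ref{sec:F2} that the reduced system \eqref{eq:threedens} ``contains a Cauchy-type singular difference of operators.'' So the step you flag as ``where the real work is done'' is neither needed nor, entry by entry, true; the real work lives in Theorem~\ref{thm:exC} and in the uniqueness results of Appendix~D. Dropping the compactness paragraph leaves a proof that is correct and identical to the paper's.
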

\begin{remark}
  From \eqref{eq:bvalues1}, \eqref{eq:bvalues6}, and \eqref
  {eq:divH0E0} it is seen that $\sigma_{\rm E}$ and $\sigma_{\rm M}$
  are zero. Despite this, $\sigma_{\rm E}$ and $\sigma_{\rm M}$ are
  needed in \eqref{eq:EHsys} to guarantee uniqueness. Often, however,
  one can omit $\sigma_{\rm E}$ and $\sigma_{\rm M}$ from
  \eqref{eq:EHsys} and still get the correct unique solution.
\label{rem:div}
\end{remark}

\subsection{Determination of uniqueness parameters}

The system \eqref{eq:EHsys} contains the free parameters $\lambda$,
$\gamma_{\rm E}$, $\gamma_{\rm M}$, and $c$. Unique solvability of
\eqref{eq:EHsys} requires that the conditions of
Proposition~\ref{prop:uniqueB0} hold for the sets
$\{k_1,k_2,\alpha=\lambda\bar\gamma_{\rm M}\}$,
$\{k_1,k_2,\alpha=\bar\gamma_{\rm E}\bar\kappa\}$, and
$\{k_1,k_2,\alpha=\lambda\}$ while the choice of $c$ is restricted
by~(\ref{eq:neq0}). Because of their role in ensuring unique
solvability of \eqref{eq:EHsys}, we refer to $\{\lambda,\gamma_{\rm
  E},\gamma_{\rm M},c\}$ as {\it uniqueness parameters}.

Generally, there are many parameter choices for which the conditions
of Proposition~\ref{prop:uniqueB0} and~(\ref{eq:neq0}) hold for a
given $\{k_1,k_2\}$ satisfying \eqref{eq:k1k2}. A valid choice, which
also works well numerically, when $\Arg(k_1)=0$ and
$\pi/4\leq\Arg(k_2)\leq\pi/2$ is
\begin{equation}
\lambda=e^{-{\rm i}\Arg(k_2)},\quad
\gamma_{\rm E}=\kappa^{-1}e^{{\rm i}(\Arg(k_2)-\pi)},\quad
\gamma_{\rm M}=1\,,\quad
c=\lambda^{-1},
\label{eq:parameter2}
\end{equation}
and when $\Arg(k_1)=0$ and $\pi/2\leq\Arg(k_2)\leq 3\pi/4$
\begin{equation}
\lambda=e^{{\rm i}(\pi-\Arg(k_2))},\quad
\gamma_{\rm E}=\kappa^{-1}e^{{\rm i}\Arg(k_2)},\quad
\gamma_{\rm M}=1\,,\quad
c=\lambda^{-1}.
\label{eq:parameter1}
\end{equation}

\section{2D limits}
\label{sec:twoD}

As a first numerical test of our formulations we consider, in
Section~\ref{sec:numex}, the 2D transverse magnetic (TM) transmission
problem where an incident TM wave is scattered from an infinite
cylinder. This problem is independent of the $z$-coordinate and we
introduce the vector $r=(x,y)$, the unit tangent vector
$\tau=(\tau_x,\tau_y)$, and the unit normal vector
$\nu=(\nu_x,\nu_y)$, where $(\tau_x,\tau_y,0)=\hat{\myvec
  z}\times(\nu_x,\nu_y,0)$ and $\hat{\myvec z}$ is the unit vector in
the $z$-direction. The incident wave has polarization $\myvec H^{\rm
  in}(r)=\hat{\myvec z}H^{\rm in}(r)$, which implies $\myvec M_{\rm
  s}=\hat{\myvec z}M$, $\myvec J_{\rm s}=\tau J$, $\varrho_{\rm M}=0$,
and $\sigma_{\rm M}=0$.

The integral representations \eqref{eq:U1}, \eqref{eq:U2}, and
\eqref{eq:E1rep1}--\eqref{eq:H2rep11}, as well as the systems
\eqref{eq:kombi} and \eqref{eq:EHsys}, are transferred to two
dimensions by exchanging the fundamental solution \eqref{eq:fund} for
the 2D fundamental solution
\begin{equation}
\Phi_k(r,r')=\frac{\rm i}{4}H_0^{(1)}(k|r-r'|)\,,\quad r,r'\in\mathbb{R}^2,
\label{eq:Phi}
\end{equation}
where $H_0^{(1)}$ is the zeroth order Hankel function of the first kind.

\subsection{Integral representations in two dimensions}
\label{sec:intrep2D}

Since $\sigma_{\rm E}$ is zero, see Remark \ref{rem:div}, the 2D
representation of the field $\myvec H$ in \eqref{eq:repEH}, to be used
in evaluation of the magnetic field, is
\begin{equation}
H(r)=\left\{
\begin{array}{ll}
\dfrac{1}{2}\tilde S_{k_1}M(r)-\dfrac{1}{2} K_{k_1}J(r)+H^{\rm in}(r)\,,& 
r\in\Omega_1\,,\\
-\dfrac{\kappa}{2}\tilde S_{k_2}M(r)+\dfrac{1}{2}K_{k_2}J(r)\,,&
r\in\Omega_2\,.
\end{array}\right.
\label{eq:H22D}
\end{equation}
By letting $U=H$, $U^{\rm in}=H^{\rm in}$, $\mu=-J$, $\varrho=-{\rm
  i}k_1M$, and $\kappa=k_2^2/k_1^2$ in the scalar representation
\eqref{eq:U} it becomes identical to \eqref{eq:H22D}. According to
Section~\ref{sec:eval} one may add null-fields to \eqref{eq:H22D}.
That gives the representation
\begin{equation}
H(r)=\frac{1}{2}(\tilde S_{k_1}-\kappa\tilde S_{k_2})M(r)
-\frac{1}{2}(K_{k_1}- K_{k_2})J(r)+H^{\rm in}(r)\,,
\quad r\in\Omega_1\cup \Omega_2,
\label{eq:Hnull}
\end{equation}
which is to prefer for evaluations at points $r$ close to $\Gamma$.

\subsection{Integral equations with four, three, and two densities}

In the TM problem the system \eqref{eq:EHsys} becomes
\begin{equation}
\left(I+\tilde{\bf D}\tilde{\bf Q}\right)\tilde{\myvec\mu}
=2\tilde{\bf D}\tilde{\myvec f}\,.
\label{eq:EHsysTM}
\end{equation}
Here $\tilde{\myvec\mu}$ and $\tilde{\myvec f}$ are column vectors
with four entries each
\begin{align*}
\tilde{\myvec\mu}&=\left[\sigma_{\rm E};\varrho_{\rm E};M;J\right],\\
\tilde{\myvec f}&=\left[0;{\rm i}k_1^{-1}\partial_\tau H^{\rm in};
{\rm i}k_1^{-1}\partial_\nu H^{\rm in};-H^{\rm in}\right],
\end{align*}
$\tilde{\bf Q}$ is a $4\times 4$ matrix with non-zero scalar operator
entries
\begin{align*}
\tilde Q_{11}&=-K_{k_1}+c_3K_{k_2}\,,\quad
\tilde Q_{12} =-\tilde S_{k_1}+c_3\kappa\tilde S_{k_2}\,,\quad
\tilde Q_{14} =-C_{k_1}+c_3\kappa C_{k_2}\,,\\
\tilde Q_{21}&=-(\tilde S_{k_1}-c_4\tilde S_{k_2})\nu\cdot\nu'\,,\quad
\tilde Q_{22} =K_{k_1}^{\rm A}-c_4K_{k_2}^{\rm A}\,,\\
\tilde Q_{23}&=C_{k_1}^{\rm A}-c_4\kappa C_{k_2}^{\rm A}\,,\quad
\tilde Q_{24} =-(\tilde S_{k_1}-c_4\kappa \tilde S_{k_2})\nu\cdot\tau'\,,\\
\tilde Q_{31}&=(\tilde S_{k_1}-c_5\kappa^{-1}\tilde S_{k_2})
\tau\cdot\nu'\,,\quad
\tilde Q_{32} =-C_{k_1}^{\rm A}+c_5\kappa^{-1}C_{k_2}^{\rm A}\,,\\
\tilde Q_{33}&=K_{k_1}^{\rm A}-c_5K_{k_2}^{\rm A}\,,\quad
\tilde Q_{34} =(\tilde S_{k_1}-c_5\tilde S_{k_2})\tau\cdot\tau'\,,\\
\tilde Q_{41}&=C_{k_1}-c_6\kappa^{-1}C_{k_2}\,,\quad
\tilde Q_{43} =\tilde S_{k_1}-c_6\kappa \tilde S_{k_2}\,,\quad
\tilde Q_{44} =-K_{k_1}+c_6 K_{k_2}\,,
\end{align*}
$\tilde{\bf D}$ is a diagonal $4\times 4$ matrix of scalars with
non-zero entries
\begin{equation*}
\tilde D_{ii}=(1+c_{i+2})^{-1}\,,\quad i=1,2,3,4\,,
\end{equation*}
and
\begin{equation}
\begin{split}
C_k\sigma(r)&=
2\int_\Gamma (\partial_{\tau'}\Phi_k)(r,r')\sigma(r')\,{\rm d}\Gamma',
\quad r\in\Gamma\,,\\
C_k^{\rm A}\sigma(r)&=
2\int_\Gamma (\partial_{\tau}\Phi_k)(r,r')\sigma(r')\,{\rm d}\Gamma',
\quad r\in\Gamma.
\end{split}
\end{equation}

If we omit $\sigma_{\rm E}$, see Remark \ref{rem:div}, the system
\eqref{eq:EHsysTM} reduces to
\begin{equation}
\left(I+\hat{\bf D}\hat{\bf Q}\right)\hat{\myvec\mu}
=2\hat{\bf D}\hat{\myvec f}\,.
\label{eq:threedens}
\end{equation}
Here $\hat{\bf Q}$ and $\hat{\bf D}$ are $\tilde{\bf Q}$ and
$\tilde{\bf D}$ with the first row and column deleted, $\hat{\myvec
  f}$ is $\tilde{\myvec f}$ with the first entry deleted, and
$\hat{\myvec\mu}$ contains the three densities $\{\varrho_{\rm
  E},M,J\}$.

A third alternative is to only use the densities $M$ and $J$. The
integral representation \eqref{eq:U} and system \eqref{eq:kombi} are
now suitable, where the change of variables in
Section~\ref{sec:intrep2D} makes \eqref{eq:U} equal to \eqref{eq:H22D}
and \eqref{eq:kombi} equal to
\begin{equation}
\begin{bmatrix}I+\beta_2(K_{k_1}^{\rm A}-c_2K_{k_2}^{\rm A})
&\beta_2{\rm i}k_1^{-1}(T_{k_1}-c_2\kappa^{-1}T_{k_2})\\
\beta_1(\tilde S_{k_1}-c_1\kappa\tilde S_{k_2})&I-\beta_1(K_{k_1}-c_1K_{k_2})\end{bmatrix}
\begin{bmatrix}
M\\
J
\end{bmatrix}=
2\begin{bmatrix}
\beta_2{\rm i}k_1^{-1}\partial_\nu H^{\rm in}\\
-\beta_1H^{\rm in}
\end{bmatrix}.
\label{eq:kombi2}
\end{equation}
If the conditions in Theorem~\ref{thm:exunA} hold, then
\eqref{eq:kombi2} has a unique solution $\{M,J\}$. Via \eqref{eq:H22D}
it represents the unique solution to the 2D TM problem.

\section{Test domains and discretization}
\label{sec:geomdisc}

This section reviews domains and discretization schemes that are used
for numerical tests in the next section.

\begin{figure}[t]
\centering
\includegraphics[height=35mm]{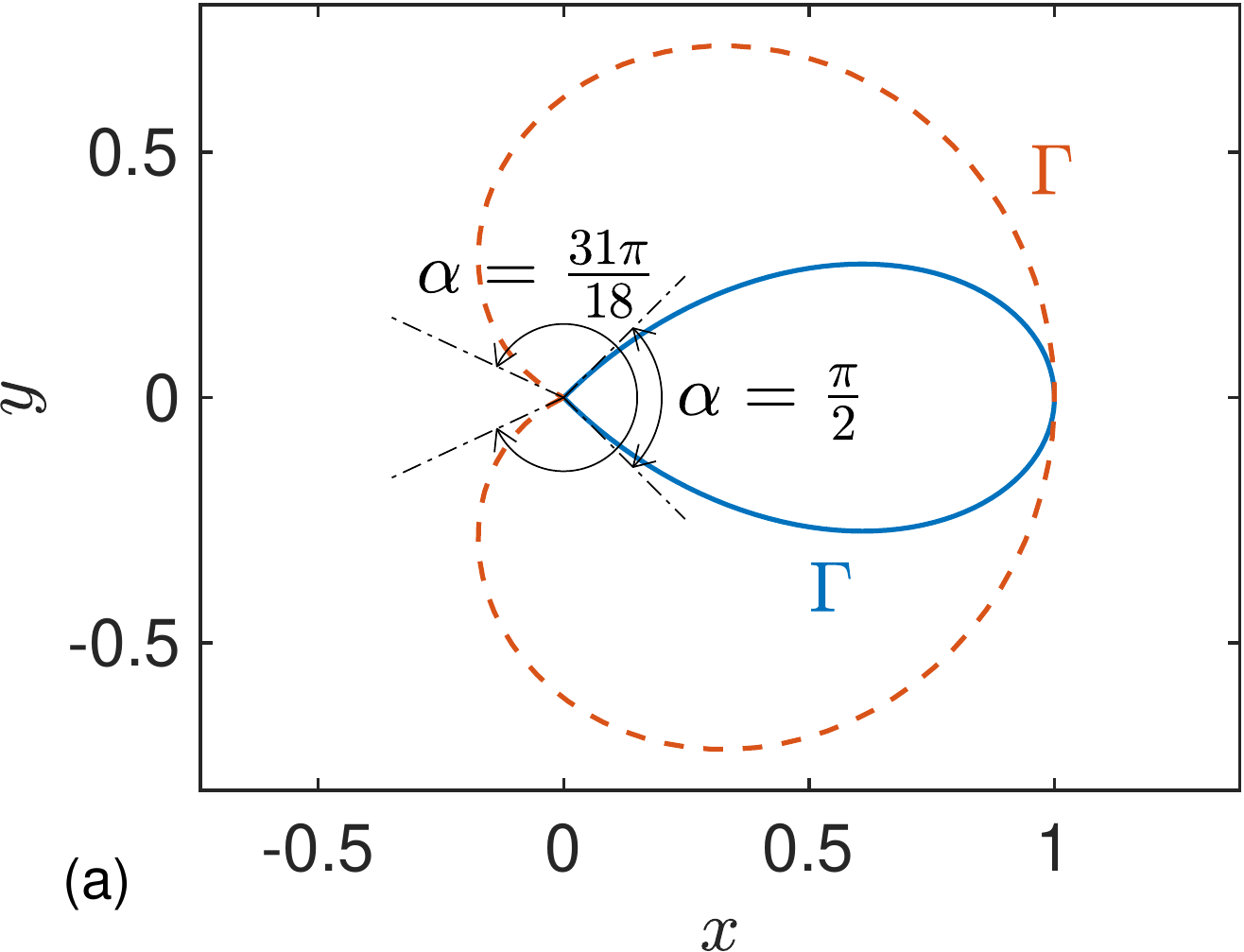}
\includegraphics[height=35mm]{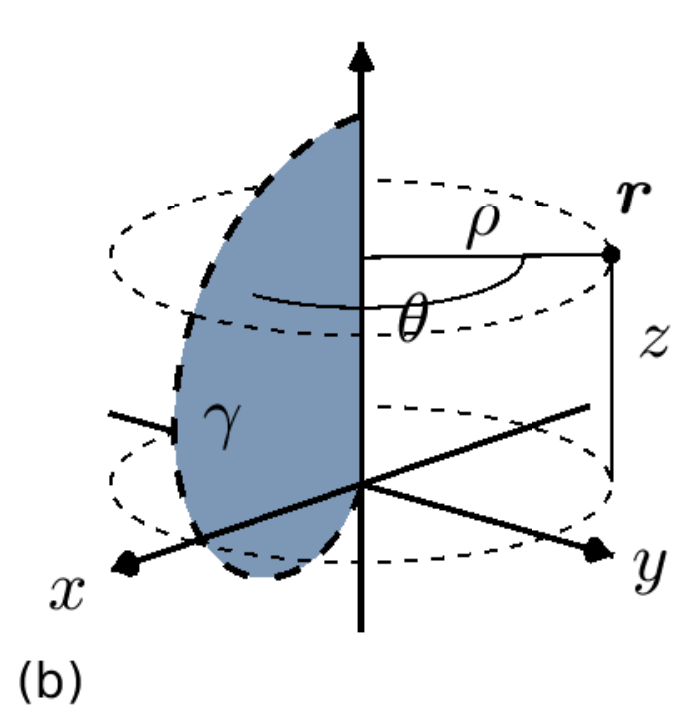}
\includegraphics[height=35mm]{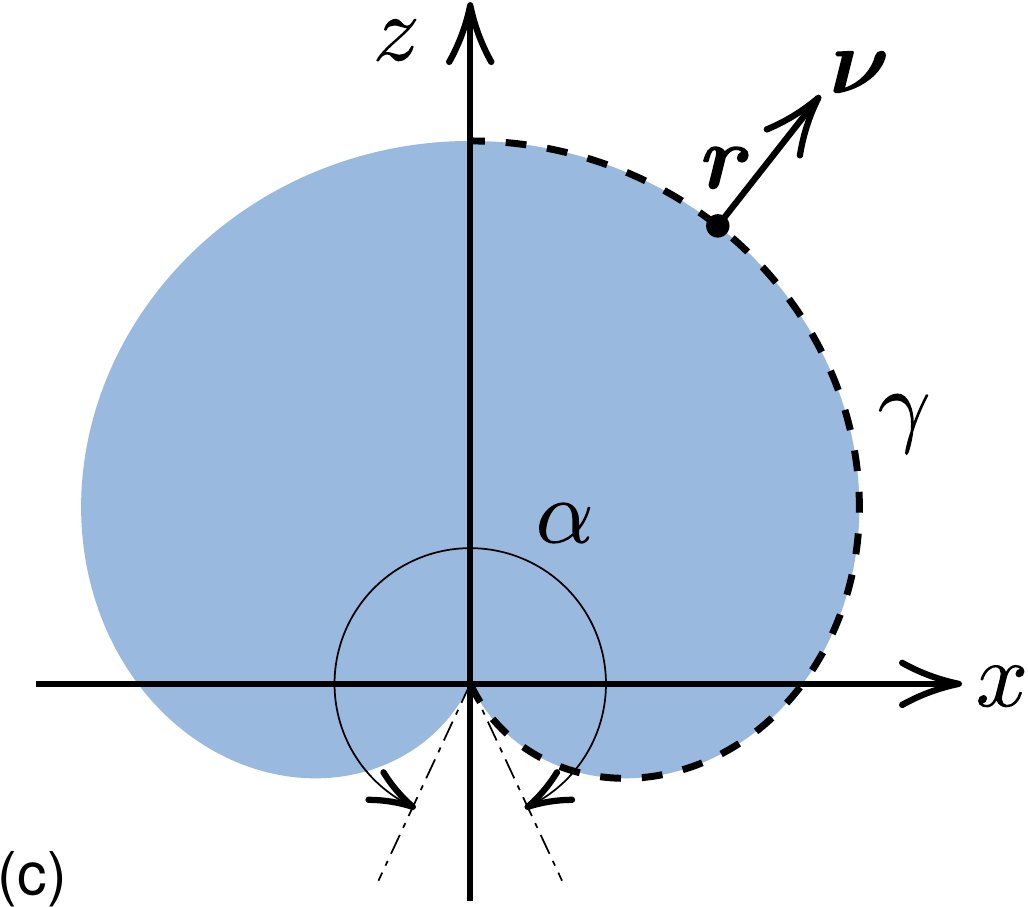}
\caption{\sf Non-smooth test domains: (a) boundaries $\Gamma$ of
  2D domains with corner opening angles $\alpha=\pi/2$ (solid blue)
  and $\alpha=31\pi/18$ (dashed orange); (b) cylindrical coordinates
  $(\rho,\theta,z)$ of a point $\myvec r$ on the surface of an
  axisymmetric object with generating curve $\gamma$; (c) cross
  section of the object generated by $\gamma$ with conical point
  opening angle $\alpha=31\pi/18$.}
\label{fig:amoeba12}
\end{figure}

\subsection{The 2D one-corner object and the 3D ``tomato''}
\label{sec:droptoma}

Numerical tests in two dimensions involve a one-corner object whose
boundary $\Gamma$ is parameterized as
\begin{equation}
r(s)=\sin(\pi s)\left(\cos((s-0.5)\alpha),\sin((s-0.5)\alpha)\right)\,,
\quad s\in[0,1]\,,
\label{eq:gamma2D}
\end{equation}
and where $\alpha$ is a corner opening angle. See
Figure~\ref{fig:amoeba12}(a) for illustrations. 
  
Numerical tests in three dimensions involve an object whose surface
$\Gamma$ is created by revolving the generating curve $\gamma$,
parameterized as
\begin{equation}
\myvec r(s)=
\sin(\pi s)\left(\sin((0.5-s)\alpha),0,\cos((0.5-s)\alpha)\right)\,,
\quad s\in[0,0.5]\,,
\label{eq:gamma3D}
\end{equation}
around the $z$-axis. For $\alpha>\pi$ this object resembles a
``tomato''. See Figure~\ref{fig:amoeba0} and
Figure~\ref{fig:amoeba12}(b,c) for illustrations with
$\alpha=31\pi/18$. 

The reason for testing integral equations in axisymmetric domains,
rather than in general domains, is the availability of efficient
high-order solvers. Use of axisymmetric domains and solvers as a
robust test-bed for new integral equation reformulations of scattering
problems is contemporary common
practice~\cite{EpsGreNei19,LaiOneil19}.

\subsection{RCIP-accelerated Nyström discretization schemes}
\label{sec:disc}

Nyström discretization, relying on composite Gauss--Legendre
quadrature, is used for all our systems of integral equations. Large
discretized linear systems are solved iteratively using GMRES. In the
presence of singular boundary points which call for intense mesh
refinement, the Nyström scheme is accelerated by recursively
compressed inverse preconditioning (RCIP)~\cite{Hels18}. The RCIP acts
as a fully automated, geometry-independent, and fast direct local
solver and boosts the performance of the original Nyström scheme to
the point where problems on non-smooth $\Gamma$ are solved with the
same ease as on smooth $\Gamma$. Accurate evaluations of layer
potentials close to their sources on $\Gamma$ are accomplished using
variants of the techniques first presented in~\cite{Hels09}.

The schemes used in the numerical examples are not entirely new. For
2D problems we use the scheme in~\cite[Section~11.3]{HelsKarl19},
relying on 16-point composite quadrature. For 3D problems we use a
modified unification of the schemes in~\cite{HelsKarl17}
and~\cite{HelsPerf18}, relying on 32-point composite quadrature. A key
feature in the scheme of~\cite{HelsKarl17} is an FFT-accelerated
separation of variables, pioneered by~\cite{YouHaoMar12} and used also
in~\cite{EpsGreNei19,LaiOneil19}.

An important technique in~\cite{HelsKarl17} is the split of the
numerator in $\Phi_k(\myvec r,\myvec r')$ of~(\ref{eq:fund}) into
parts that are even and odd in $\lvert\myvec r-\myvec r'\rvert$.  Let
$G(k,{\myvec r},{\myvec r}')$ be one of the $2\pi$-periodic kernels of
Section~\ref{sec:laypot}. Azimuthal Fourier coefficients
\begin{equation}
G_n=
\frac{1}{\sqrt{2\pi}}\int_{-\pi}^{\pi}e^{-{\rm i}n(\theta-\theta')}
G(k,{\myvec r},{\myvec r}')\,{\rm d}(\theta-\theta')\,,
\quad n=0,\pm 1,\pm 2,\ldots\,,
\label{eq:GF}
\end{equation}
are, for $\myvec r$ and $\myvec r'$ close to each other, computed in
different ways depending on the parity of these parts. When $\Im{\rm
  m}\{k\}$ is small, the split
\begin{equation}
e^{{\rm i}k\lvert\myvec r-\myvec r'\rvert}=
        \cos(k\lvert\myvec r-\myvec r'\rvert)
+{\rm i}\sin(k\lvert\myvec r-\myvec r'\rvert)
\label{eq:split1}
\end{equation}
is efficient for $\Phi_k(\myvec r,\myvec r')$. When $\Im{\rm m}\{k\}$
is large, the terms on the right hand side of~(\ref{eq:split1}) can be
much larger in modulus than the function on the left hand side. Then
numerical cancellation takes place. To fix this problem for large
$\Im{\rm m}\{k\}$, not encountered in~\cite{HelsKarl17}, we introduce
a bump-like function
\begin{equation}
\chi(k,\lvert\myvec r-\myvec r'\rvert)=
e^{-\left(\Im{\rm m}\{k\}{\lvert\myvec r-\myvec r'\rvert/4.6}\right)^8}\,, 
\end{equation}
modify the split~(\ref{eq:split1}) to
\begin{equation}
e^{{\rm i}k\lvert\myvec r-\myvec r'\rvert}=
(1-\chi)e^{{\rm i}k\lvert\myvec r-\myvec r'\rvert}
+\chi\cos(k\lvert\myvec r-\myvec r'\rvert)
+{\rm i}\chi\sin(k\lvert\myvec r-\myvec r'\rvert)\,,
\label{eq:split2}
\end{equation}
and compute $G_n$ of~(\ref{eq:GF}) with techniques (direct transform
or convolution) appropriate for parts of $G(k,{\myvec r},{\myvec r}')$
associated with each of the terms on the right hand side
of~(\ref{eq:split2}).

\section{Numerical examples}
\label{sec:numex}

The systems~(\ref{eq:EHsys}), (\ref{eq:EHsysTM}),
(\ref{eq:threedens}), and (\ref{eq:kombi2}) and the representations
(\ref{eq:repEH}), (\ref{eq:H22D}), and (\ref{eq:Hnull}) are now put to
the test. In all examples we take $k_1$ real and positive,
$\varepsilon_1=1$, and $\varepsilon_2=-1.1838$. This parameter
combination satisfies the plasmonic condition \eqref{eq:plasmonic} and
has been used in previous work on 2D surface plasmon
waves~\cite{Annsop16,HelsKarl18,HelsKarl19}. In situations involving
non-smooth surfaces, it may happen that solutions for
$\varepsilon_2=-1.1838$ do not exist. We then compute limit solutions
as $\varepsilon_2$ approaches $-1.1838$ from above in the complex
plane. Such limit solutions, discussed in the context of Laplace
transmission problems in~\cite[Section~2.2]{HelsPerf18}, have boundary
traces that may best be characterized as lying in fractional-order
Sobolev spaces~\cite{HelsRose20} and are given a downarrow
superscript. For example, the limit of the field $\myvec H$ is denoted
$\myvec H^\downarrow$. The uniqueness parameters
$\{\lambda,\gamma_{\rm E},\gamma_{\rm M},c\}$, needed in
(\ref{eq:EHsys}), (\ref{eq:EHsysTM}), and (\ref{eq:threedens}), are
chosen according to~(\ref{eq:parameter1}). The uniqueness parameters
needed in~(\ref{eq:kombi2}) are chosen as $\{c_1,c_2\}=\{-{\rm
  i},\kappa\}$.

Our codes are implemented in {\sc Matlab}, release 2018b, and executed
on a workstation equipped with an Intel Core i7-3930K CPU and 64 GB of
RAM. When assessing the accuracy of computed field quantities we most
often adopt a procedure where to each numerical solution we also
compute an overresolved reference solution, using roughly 50\% more
points in the discretization of the system under study. The absolute
difference between these two solutions is denoted the {\it estimated
  absolute error}. Throughout the examples, field quantities are
computed at $10^6$ field points on a rectangular Cartesian grid in the
computational domains shown in the figures.

\subsection{Unique solvability on the unit circle}
\label{sec:C2}

We compute condition numbers of the discretized system matrices in
(\ref{eq:EHsysTM}), (\ref{eq:threedens}), and~(\ref{eq:kombi2}). The
boundary $\Gamma$ is the unit circle and $k_1$ is swept through the
interval $[0,10]$. Recall that the systems~(\ref{eq:EHsysTM})
and~(\ref{eq:kombi2}) are guaranteed to be free from wavenumbers for
which the solution is not unique (false eigenwavenumbers) while the
system (\ref{eq:threedens}) is not.

Condition number analysis of 2D limits of 3D systems on the unit
circle is a revealing test for detecting if a given system of integral
equations has false eigenwavenumbers when the plasmonic condition
holds. For example, in~\cite[Figure~9]{HelsKarl19} it is shown that
the original Müller system and the ``$\myvec E$-system''
of~\cite{VicGreFer18} exhibit several false eigenwavenumbers in such a
test.

\begin{figure}[t]
\centering
   \includegraphics[height=47mm]{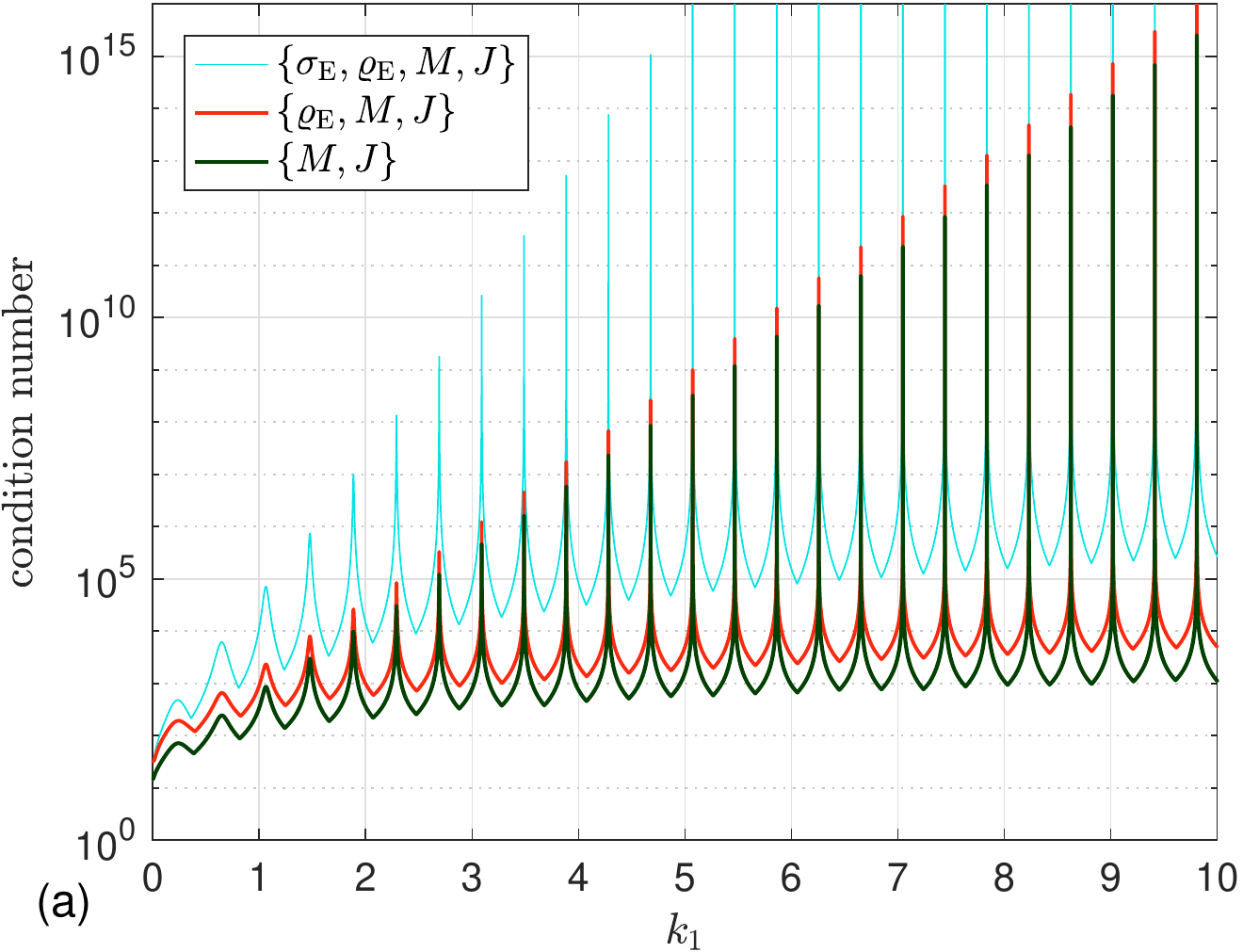}
   \includegraphics[height=47mm]{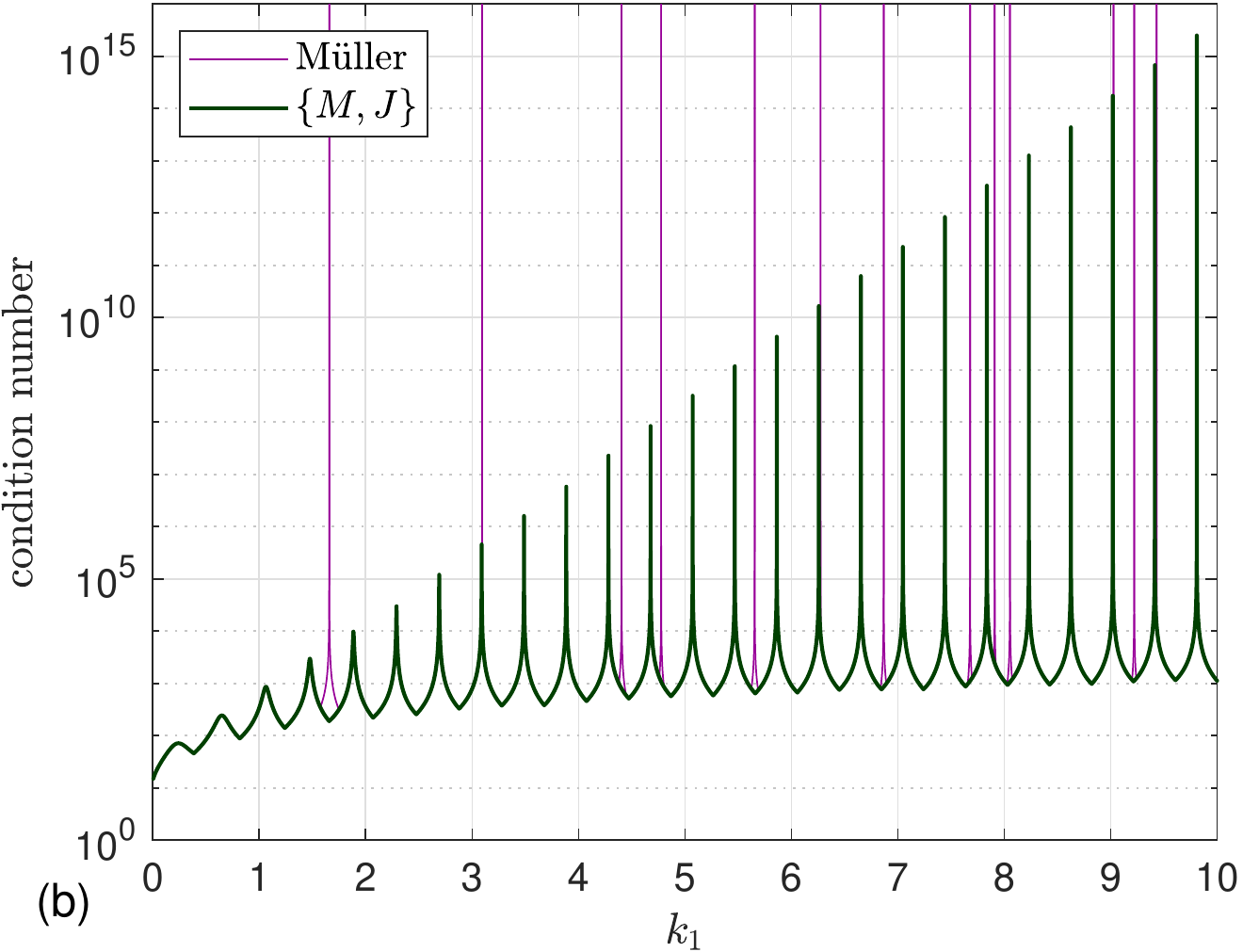}
\caption{\sf Condition numbers of system matrices on the unit
  circle, $\varepsilon_1=1$, $\varepsilon_2=-1.1838$, and
  $k_1\in[0,10]$: (a) the systems~(\ref{eq:EHsysTM}),
  (\ref{eq:threedens}), and~(\ref{eq:kombi2}); (b) the Müller system
  and a repeat of the bottom curve in (a).}
\label{fig:cond2}
\end{figure}

Figure~\ref{fig:cond2}(a) shows results obtained with
(\ref{eq:EHsysTM}), (\ref{eq:threedens}), and~(\ref{eq:kombi2}) using
768 discretizations points on $\Gamma$ and approximately $20,\!700$
values of $k_1\in[0,10]$. The regularly recurring high peaks
correspond to true eigenwavenumbers just below the positive $k_1$-axis
(weakly damped dynamic surface plasmons). One can see that neither the
four-density system (\ref{eq:EHsysTM}) nor the two-density
system~(\ref{eq:kombi2}) exhibits any false eigenwavenumbers, as
expected, and that~(\ref{eq:kombi2}) is the best conditioned system.
Furthermore, which is more remarkable, the three-density system
(\ref{eq:threedens}) also appears to be free from false
eigenwavenumbers. For comparison, Figure~\ref{fig:cond2}(b) shows
condition numbers of the original Müller system, corresponding to
$\{c_1,c_2\}=\{1,\kappa\}$ in \eqref{eq:kombi2}. Here one can see 13
false eigenwavenumbers. Some distance away from these wavenumbers the
results from the Müller system and~(\ref{eq:kombi2}) with
$\{c_1,c_2\}=\{-{\rm i},\kappa\}$ overlap.

\subsection{Field accuracy for the 2D one-corner object}
\label{sec:F2}

An incident plane wave with $\myvec H^{\rm in}(r)=\hat{\myvec
  z}e^{{\rm i}k_1d\cdot r}$, $k_1=18$, and direction of propagation
$d=\left(\cos(\pi/4),\sin(\pi/4)\right)$ is scattered against the 2D
one-corner object of Section~\ref{sec:droptoma}. The corner opening
angle is $\alpha=\pi/2$. A number of $800$ discretization points is
placed on $\Gamma$ and the performance of the three systems
(\ref{eq:EHsysTM}), (\ref{eq:threedens}), (\ref{eq:kombi2}) are
compared.

\begin{figure}[t]
\centering
   \includegraphics[height=46mm]{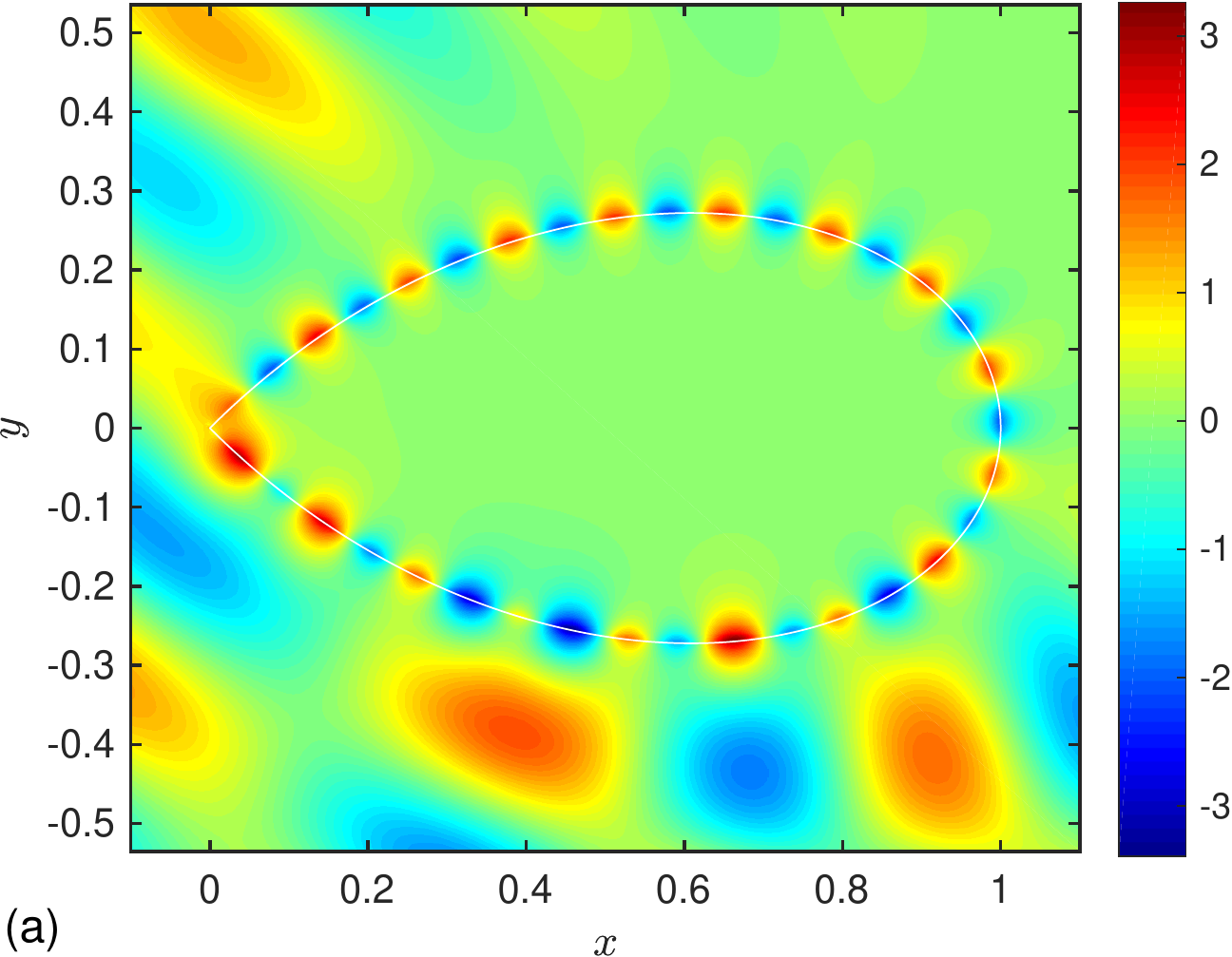}
   \includegraphics[height=46mm]{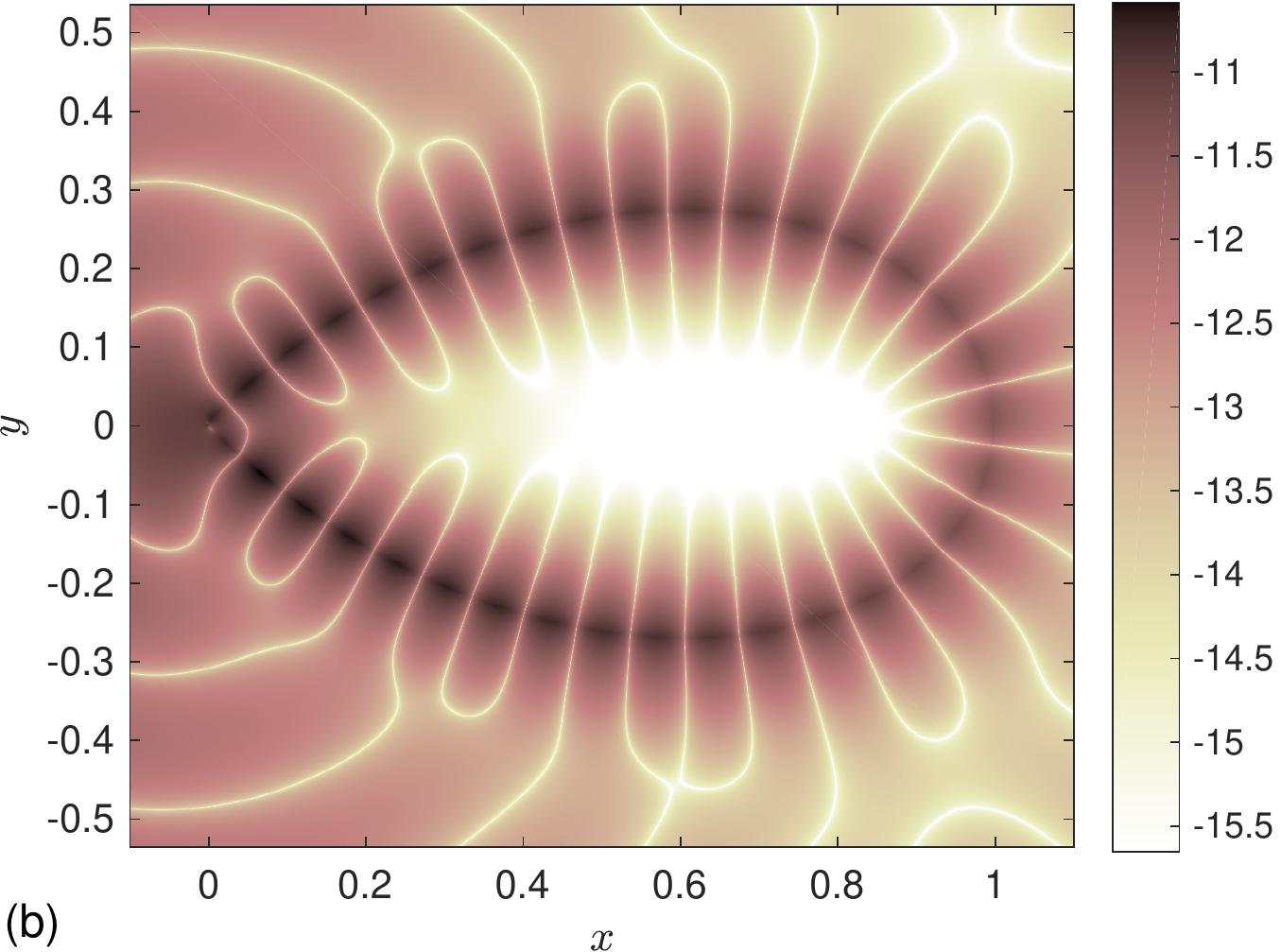}
   \includegraphics[height=46mm]{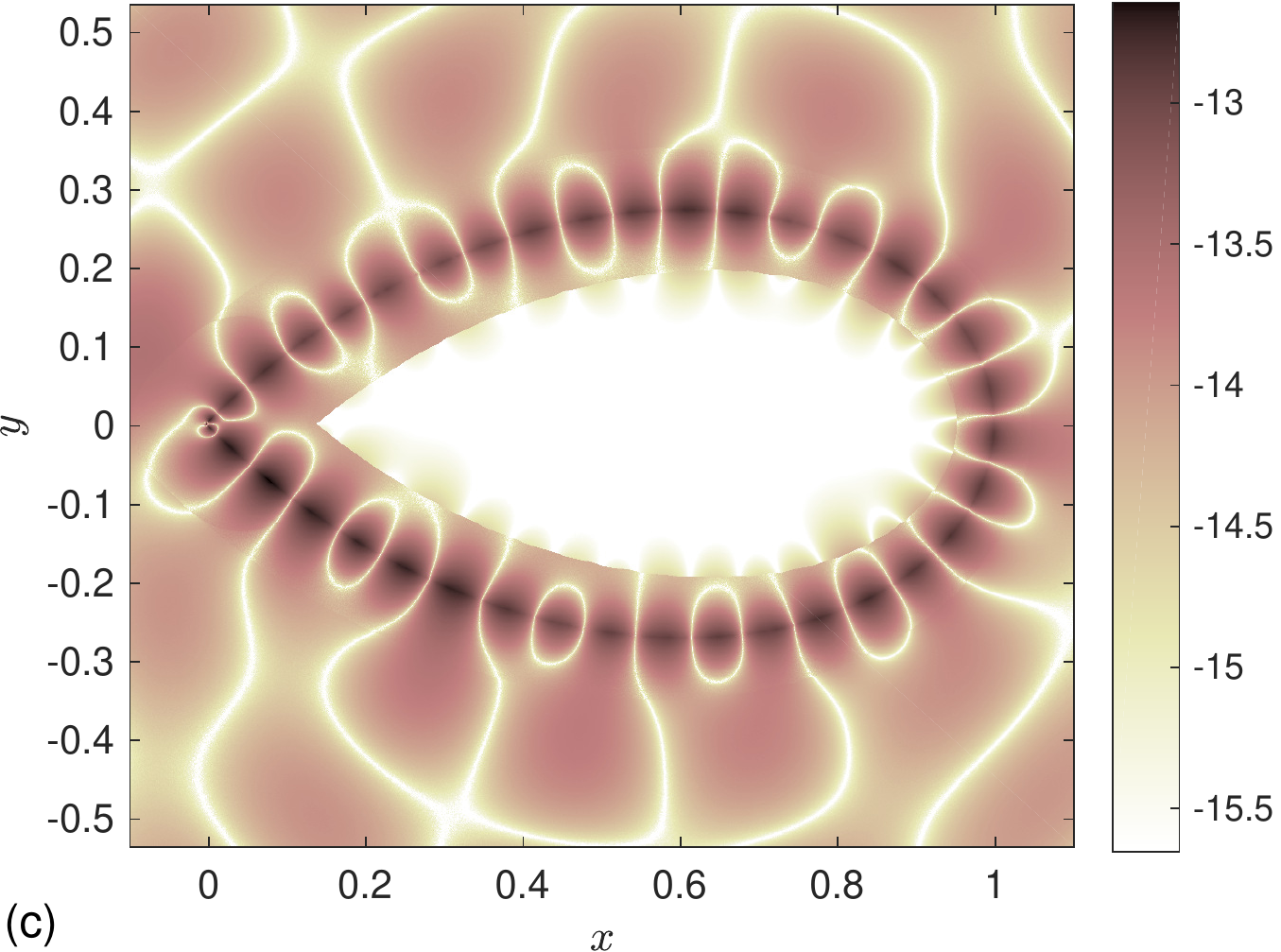}
   \includegraphics[height=46mm]{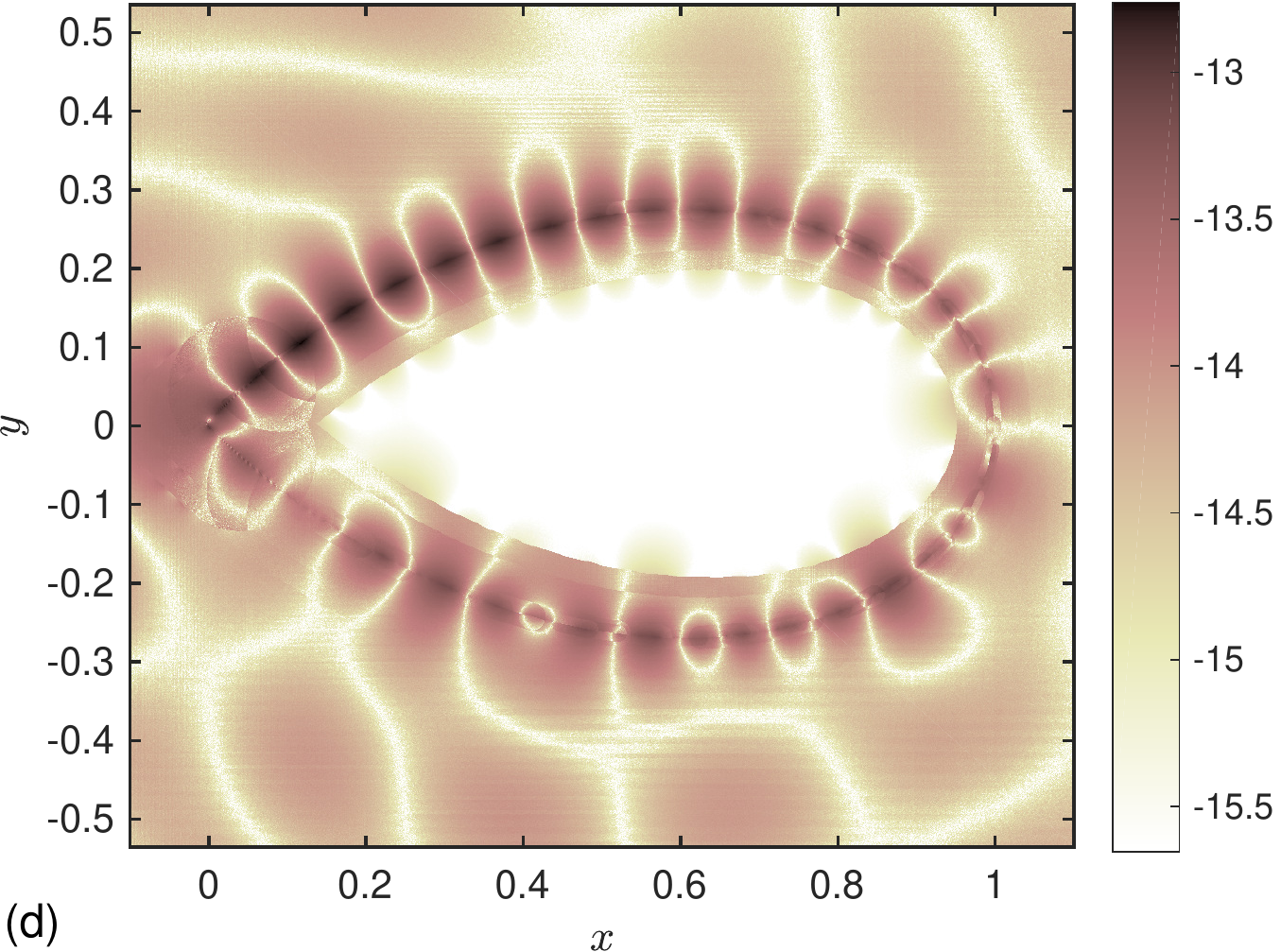}
\caption{\sf The field $H^\downarrow(r,0)$ on the 2D one-corner object with
   $\varepsilon_1=1$, $\varepsilon_2=-1.1838$, and $k_1=18$: (a) the
   field $H^\downarrow(r,0)$ itself; (b,c,d) $\log_{10}$ of estimated
   absolute field error using the systems (\ref{eq:EHsysTM}),
   (\ref{eq:threedens}), and~(\ref{eq:kombi2}), respectively.}
\label{fig:field2}
\end{figure}

Figure~\ref{fig:field2}(a) shows the total magnetic field $\Re{\rm
  e}\{H^\downarrow(r)\}$, see~(\ref{eq:timedep}), and
Figures~\ref{fig:field2}(b,c,d) show $\log_{10}$ of the
estimated absolute error obtained with (\ref{eq:EHsysTM}),
(\ref{eq:threedens}), and (\ref{eq:kombi2}), respectively. The number
of GMRES iterations required to solve the discretized linear systems
is 266 for (\ref{eq:EHsysTM}), 154 for (\ref{eq:threedens}), and 143
for (\ref{eq:kombi2}). The absolute errors for the systems
(\ref{eq:EHsysTM}) and (\ref{eq:threedens}) are estimated using the
solution from (\ref{eq:kombi2}) as reference.

It is interesting to observe, in Figure~\ref{fig:field2}, that the
field accuracy is high for all three systems. The number of digits
lost is in agreement with what could be expected for computations on
the unit circle, considering the condition numbers shown in
Figure~\ref{fig:cond2} and assuming that $k_1$ is not close to a true
eigenwavenumber. Note also that~(\ref{eq:kombi2}) is a system of
Fredholm second-kind integral equations with operator differences that
are compact on smooth $\Gamma$ -- a property often sought for in
integral equation modeling of PDEs. The system (\ref{eq:threedens}),
on the other hand, contains a Cauchy-type singular difference of
operators. Still, the performance of the two systems is very similar.

\subsection{Unique solvability on the unit sphere}
\label{sec:C3}

We repeat the experiment of Section~\ref{sec:C2}, but now on the unit
sphere using the system~(\ref{eq:EHsys}). Inspired by the good
performance of the system~(\ref{eq:threedens}), reported above and
where $\sigma_{\rm E}$ is omitted, we omit both $\sigma_{\rm E}$ and
$\sigma_{\rm M}$ from~(\ref{eq:EHsys}) to get a six-scalar-density
system. Again, there is noo proof that this system has a unique
solution, but every solution to the time harmonic Maxwell's equations
corresponds to a solution to this system.

\begin{figure}[t]
\centering 
  \includegraphics[height=47mm]{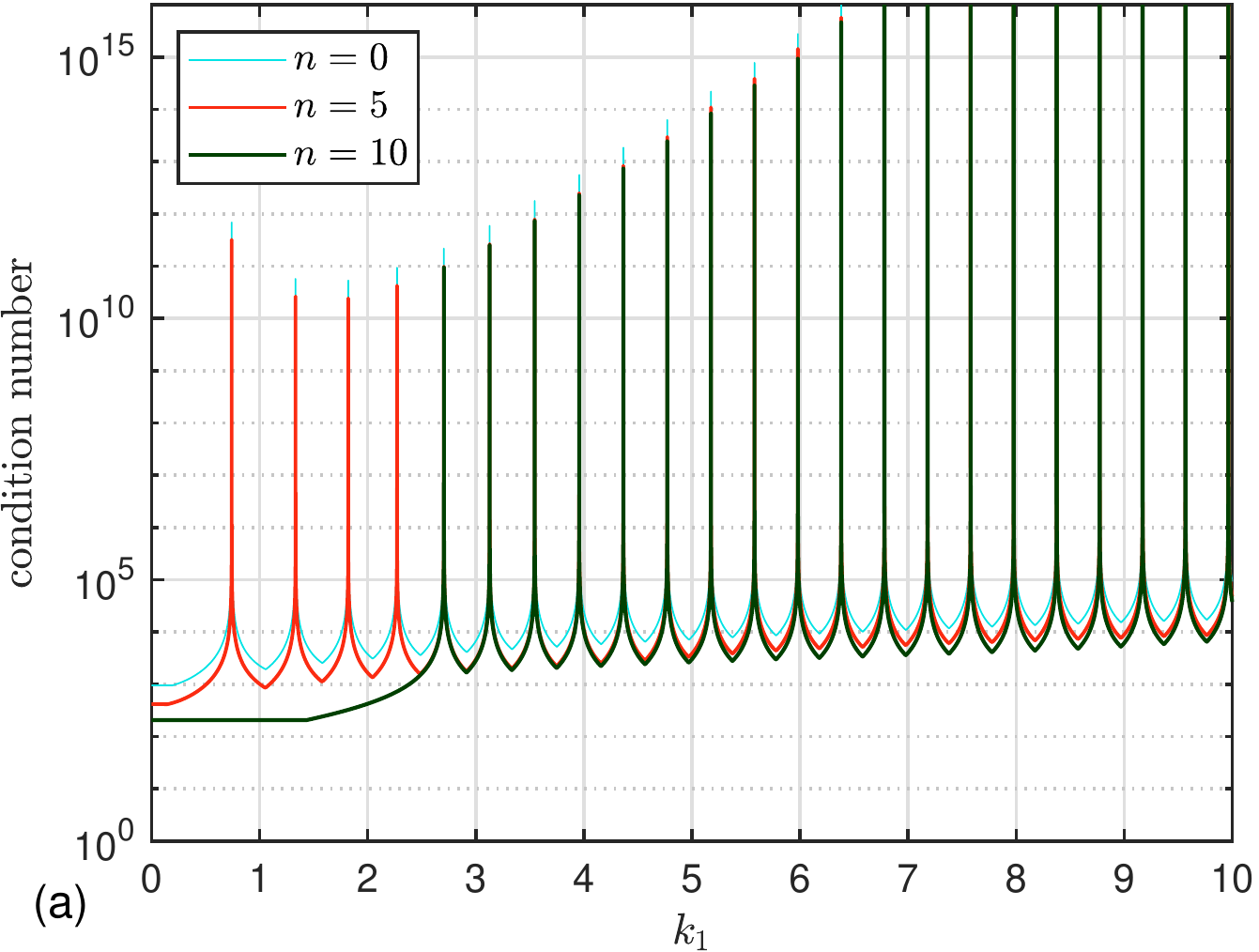}
  \includegraphics[height=47mm]{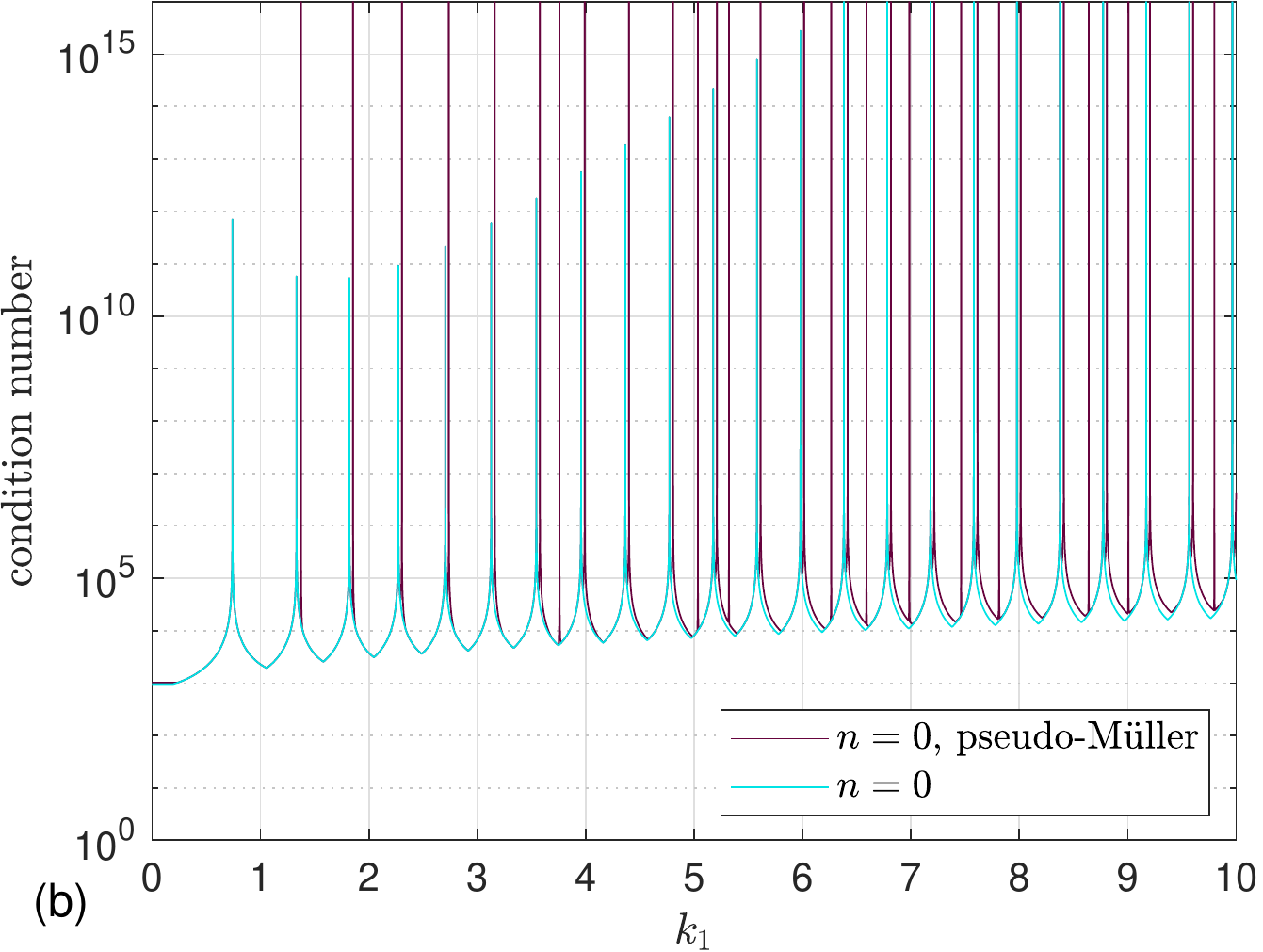}
\caption{\sf Condition numbers of system matrices on the unit 
  sphere, $\varepsilon_1=1$, $\varepsilon_2=-1.1838$, and
  $k_1\in[0,10]$: (a) azimuthal modes $n=0,5,10$ of the
  system~(\ref{eq:EHsys}) with $\sigma_{\rm E}$ and $\sigma_{\rm M}$
  omitted; (b) azimuthal mode $n=0$ of the pseudo-Müller system and a
  repeat of the top curve in (a).}
\label{fig:cond3}
\end{figure}

The Fourier--Nyström scheme of~\cite{HelsKarl17}, see
Section~\ref{sec:disc}, decomposes the reduced system~(\ref{eq:EHsys})
into a sequence of smaller, modal, systems on the generating curve
$\gamma$. Figure~\ref{fig:cond3}(a) shows result for the azimuthal
modes $n=0,5,10$, with 768 discretization points on $\gamma$, and with
approximately $3,\!500$ values of $k_1\in[0,10]$. No false
eigenwavenumbers can be seen. For comparison,
Figure~\ref{fig:cond3}(b) shows results for a six-scalar-density
variant of the Müller system. The original four-scalar-density Müller
system \cite[p. 319]{Muller69} uses the surface current densities
$\myvec M_{\rm s}$ and $\myvec J_{\rm s}$ and contains compact
differences of hypersingular operators. These operator differences are
hard to implement in three dimensions, even though it definitely is
possible on axisymmetric surfaces~\cite{LaiOneil19}. Our variant of
the Müller system is derived from the original Müller system via
integration by parts and relating the surface divergence of $\myvec
M_{\rm s}$ and $\myvec J_{\rm s}$ to $\varrho_{\rm M}$ and
$\varrho_{\rm E}$, see \cite[Eqs.~(36) and~(35)]{HelsKarl17}. This
corresponds to omitting both $\sigma_{\rm E}$ and $\sigma_{\rm M}$
from~(\ref{eq:EHsys}) and setting $c_4=c_6=1$, and $c_5=c_7=\kappa$.
Figure~\ref{fig:cond3}(b) shows that this pseudo-Müller system
exhibits at least $32$ false eigenwavenumbers for $k_1\in[0,10]$.

\subsection{Field accuracy for the 3D ``tomato''}
\label{sec:F3}

An incident linearly polarized plane wave with $\myvec E^{\rm
  in}(\myvec r)=\hat{\myvec x}e^{{\rm i}k_1z}$ and $k_1=5$ is
scattered against the 3D ``tomato'' of Section~\ref{sec:droptoma}. The
conical point opening angle is $\alpha=31\pi/18$. The same
six-scalar-density version of the system~(\ref{eq:EHsys}) is used as
in Section~\ref{sec:C3}. Only two azimuthal modes, $n=-1$ and $n=1$,
are present in this problem and the Fourier coefficients of the
surface densities of these modes are either identical or have opposite
signs. Therefore only one modal system needs to be solved numerically.

\begin{figure}[t]
\centering 
  \includegraphics[height=51mm]{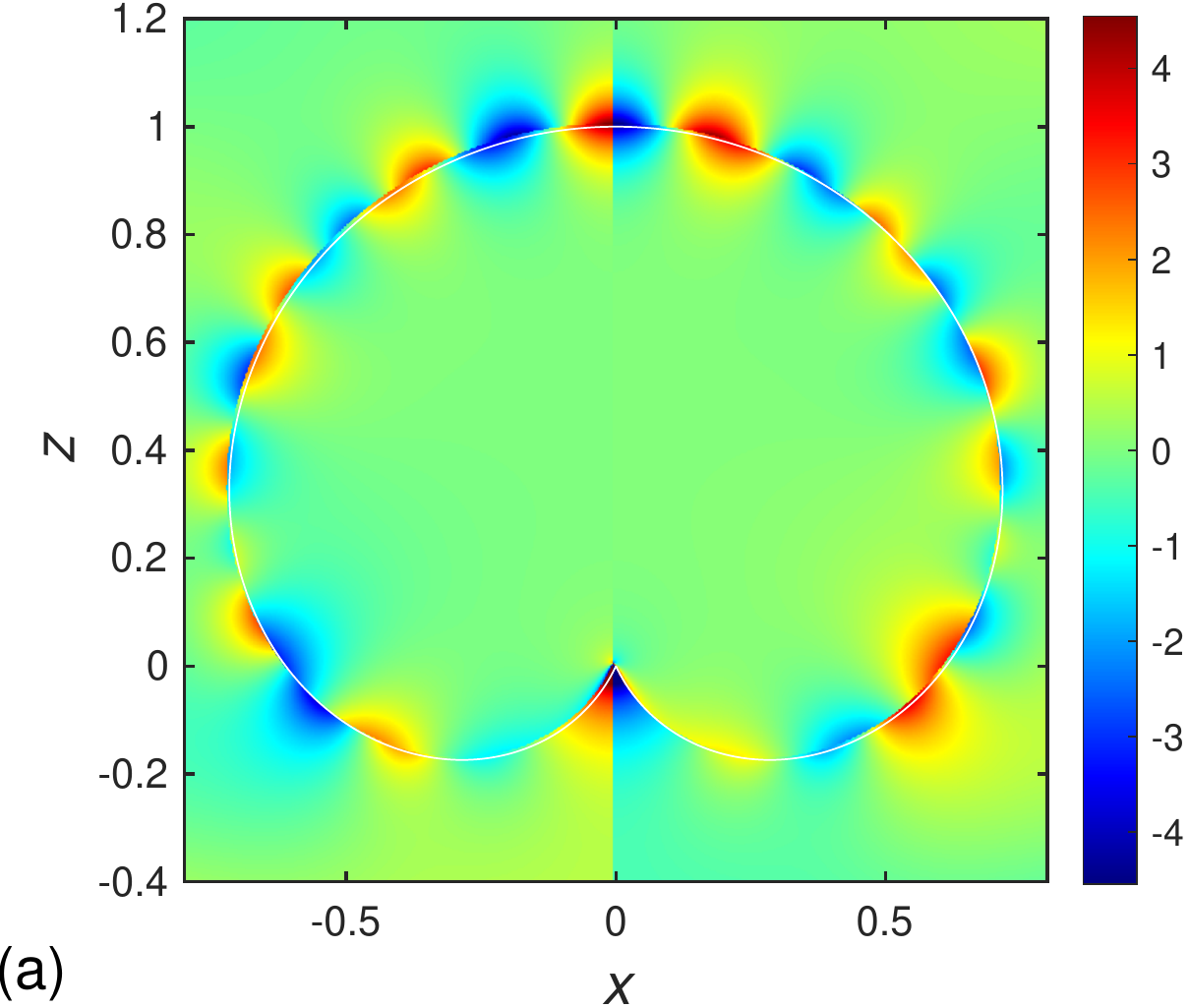}
  \includegraphics[height=51mm]{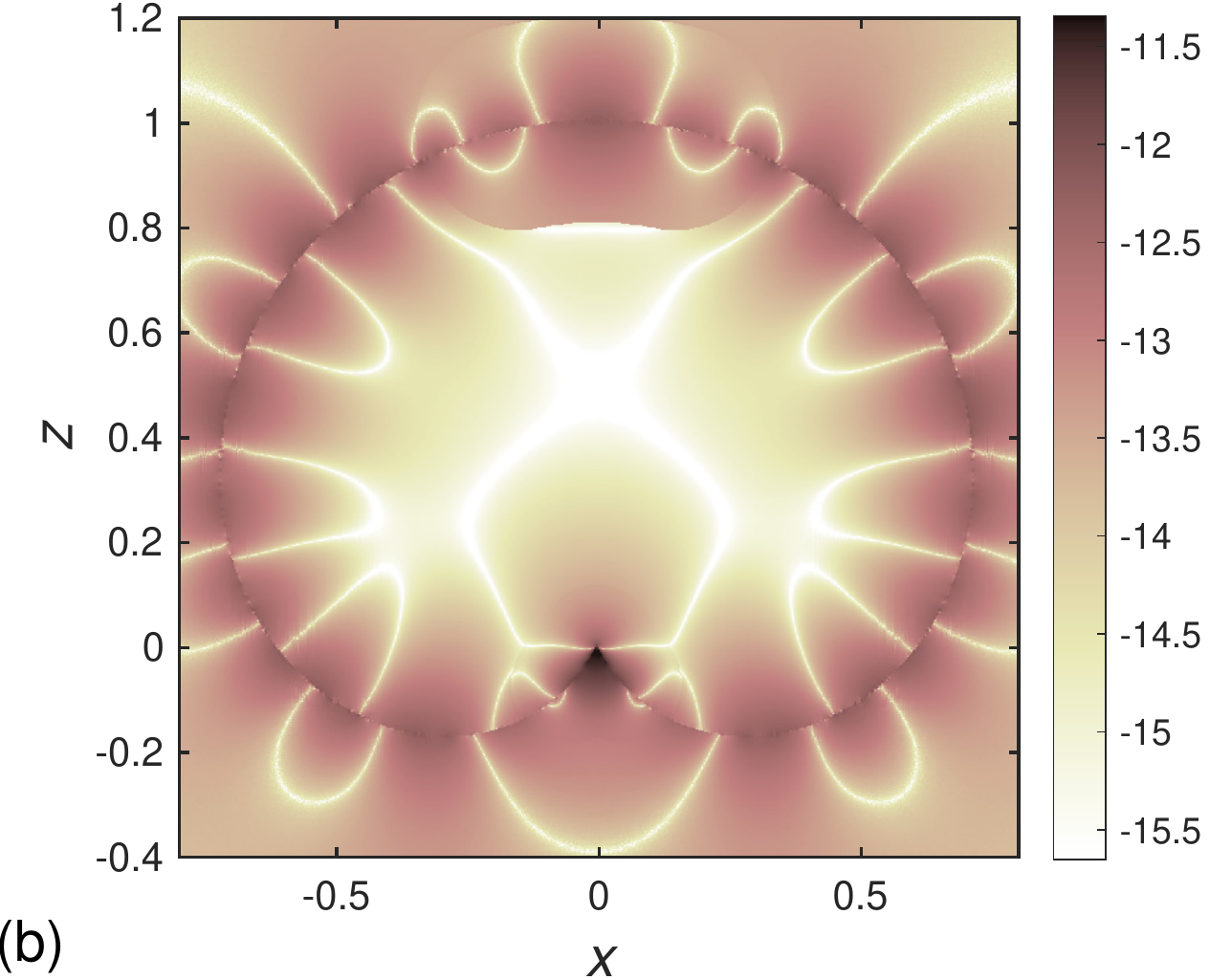}
  \includegraphics[height=51mm]{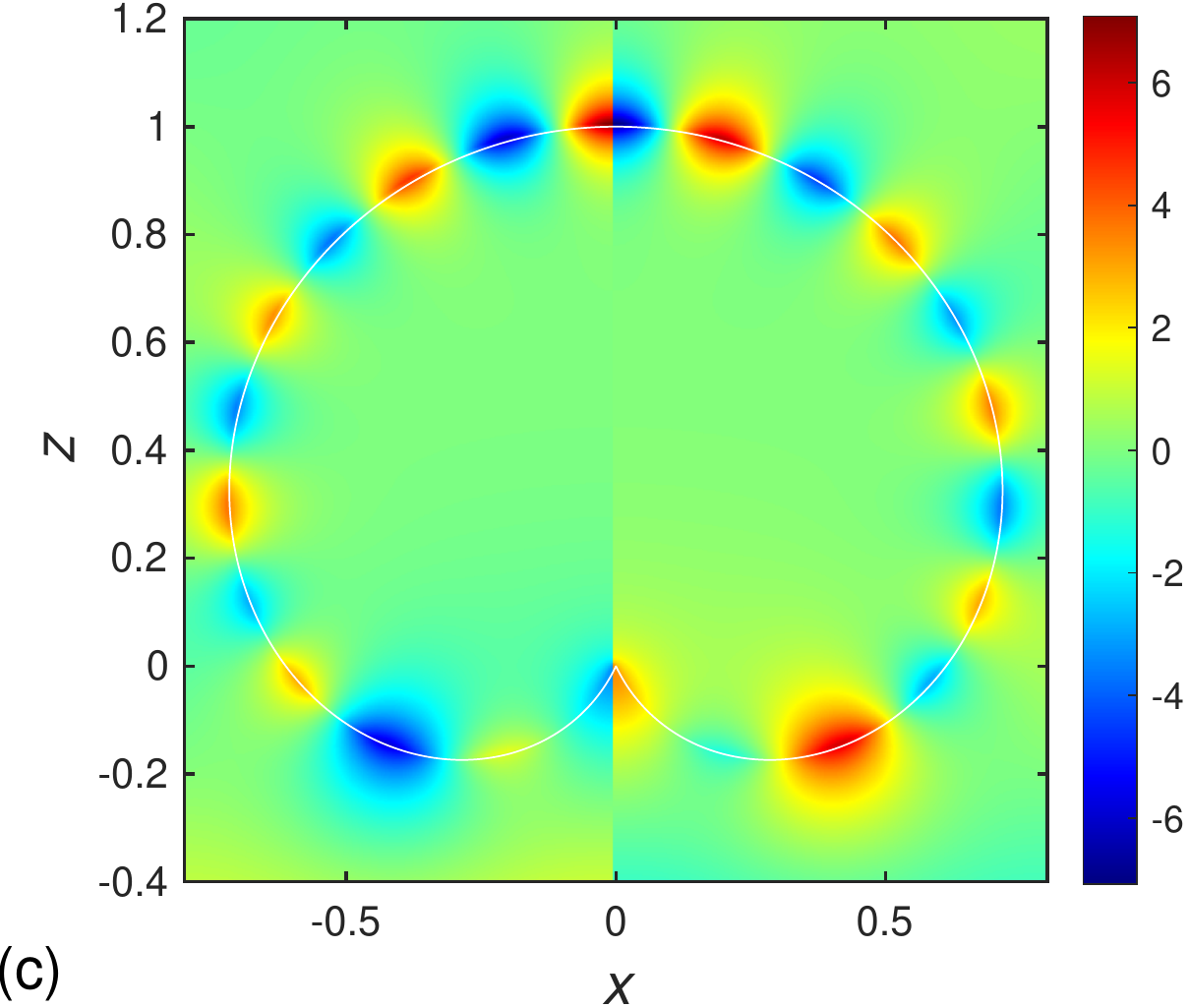}
  \includegraphics[height=51mm]{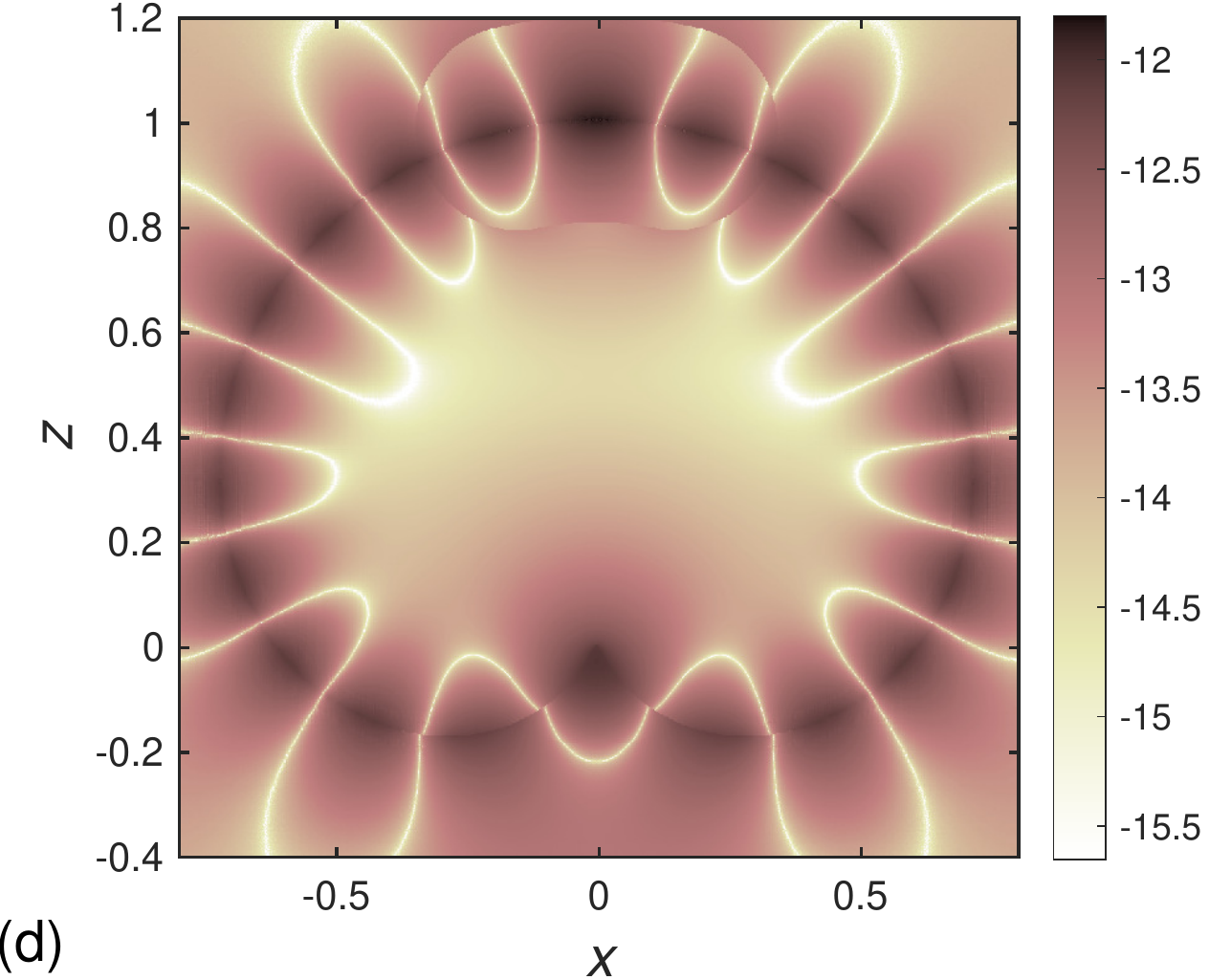}
\caption{\sf Field images on a cross section of the 3D ``tomato'' 
  subjected to an incident plane wave $\myvec E^{\rm in}(\myvec
  r)=\hat{\myvec x} e^{{\rm i}k_1z}$ and with $\varepsilon_1=1$,
  $\varepsilon_2=-1.1838$, and $k_1=5$: (a) the field $E_\rho(\myvec
  r,0)$ with colorbar range set to $[-4.55,4.55]$ ; (b) $\log_{10}$ of
  estimated absolute field error in $E_\rho(\myvec r,0)$; (c) the
  field $H_\theta(\myvec r,0)$; (d) $\log_{10}$ of estimated absolute
  field error in $H_\theta(\myvec r,0)$.}
\label{fig:field3}
\end{figure}

Figure~\ref{fig:field3} shows the electric field in the
$\rho$-direction, $E_\rho(\myvec r,0)$, and the magnetic field in the
$\theta$-direction, $H_\theta(\myvec r,0)$, on the cross section in
Figure~\ref{fig:amoeba12}(c). The results are obtained with $576$
discretization points on the generating curve $\gamma$ and with 242
GMRES iterations. Since the field $E_\rho(\myvec r,0)$ is singular at
the origin, the colorbar range in Figure~\ref{fig:field3}(a) is
restricted to the most extreme values of $E_\rho(\myvec r,0)$ away
from the origin. The precision shown in Figure~\ref{fig:field3}(b,d)
is consistent with the condition numbers of Figure~\ref{fig:cond3}(a)
in the sense discussed in Section~\ref{sec:F2}. We conclude by noting
that Figure~\ref{fig:field3} clearly shows an accurately computed
surface plasmon wave on a non-smooth 3D object in a setup with
negative permittivity ratio. To simulate such surface waves is the
ultimate goal of this work.

\section{Conclusions}

A new system of Fredholm second-kind integral equations is presented
for an electromagnetic transmission problem involving a single
scattering object. Our work can be seen as an extension of the work by
Kleinman and Martin~\cite{KleiMart88} on direct methods for scalar
transmission problems. Thanks to the introduction of certain
uniqueness parameters, our new system gives unique solutions for a
wider range of wavenumber combinations than do other systems of
integral equations for Maxwell's equations, for example the original
Müller system. In particular, unique solutions are guaranteed for
smooth scatterers under the plasmonic condition \eqref{eq:plasmonic},
which is of great interest in physical and engineering applications.

The favorable properties of our new system extend beyond what can be
proven rigorously. In a numerical example, a reduced version of the
system in combination with a high-order Fourier--Nyström
discretization scheme is shown to produce accurate field images of a
surface plasmon wave on a non-smooth axisymmetric scatterer.

\section*{Acknowledgement}

\noindent
We thank Andreas Rosén (formerly Andreas Axelsson) for many useful
conversations. This work was supported by the Swedish Research Council
under contract 621-2014-5159.

\section*{Appendix}

\renewcommand{\theequation}{A.\arabic{equation}}
\setcounter{equation}{0}

\subsection*{A. Boundary limits of $\myvec E$ and  $\myvec H$}

The relations in Section~\ref{sec:limits} give the following limits at
$\Gamma$ for the integral representations of $\myvec E$ and $\myvec H$
in \eqref{eq:E1rep1}--\eqref{eq:H2rep11}:
\begin{align}
[\nabla\cdot\myvec E_1]^\pm&=
\mp\frac{{\rm i}k_1}{2}\sigma_{\rm E}
-\frac{{\rm i}k_1}{2}\tilde{\cal S}_{k_1}\varrho_{\rm E}
+\frac{1}{2}\nabla\cdot\tilde{\cal S}_{k_1}
(\myvec\nu'\sigma_{\rm E}+\myvec J_{\rm s})\,,\\
\myvec\nu\cdot\myvec E_1^\pm&=\pm\frac{1}{2}\varrho_{\rm
  E}-\frac{1}{2}\myvec\nu\cdot\boldsymbol{\cal
  N}_{k_1}\varrho_{\rm E}-\frac{1}{2}\myvec \nu\cdot {\cal
  R}_{k_1}(\myvec \nu'\sigma_{\rm M}+\myvec M_{\rm s})
\nonumber\\
&\qquad\qquad
+\frac{1}{2}\myvec\nu\cdot \tilde{\cal S}_{k_1}
(\myvec\nu'\sigma_{\rm E}+\myvec J_{\rm s})
+\myvec\nu\cdot\myvec E^{\rm in},
\end{align}
\begin{align}
\myvec\nu\times\myvec E_1^\pm&=\mp\frac{1}{2}\myvec M_{\rm
  s}-\frac{1}{2}\myvec \nu\times\boldsymbol{\cal
  N}_{k_1}\varrho_{\rm E}-\frac{1}{2}\myvec \nu\times{\cal
  R}_{k_1}(\myvec\nu'\sigma_{\rm M}+\myvec M_{\rm s})
\nonumber\\
&\qquad\qquad
+\frac{1}{2}\myvec\nu\times\tilde{\cal S}_{k_1}(\myvec
\nu'\sigma_{\rm E}+\myvec J_{\rm s})+\myvec\nu\times\myvec
E^{\rm in},\\
\myvec\nu\times\myvec H_1^\pm&=
\pm\frac{1}{2}\myvec J_{\rm s}
+\frac{1}{2}\myvec\nu\times\tilde{\cal S}_{k_1}
(\myvec\nu'\sigma_{\rm M}+\myvec M_{\rm s})
+\frac{1}{2}\myvec\nu\times{\cal R}_{k_1}
(\myvec\nu'\sigma_{\rm E}+\myvec J_{\rm s})\nonumber\\
&\qquad\qquad
-\frac{1}{2}\myvec\nu\times\boldsymbol{\cal N}_{k_1}
\varrho_{\rm M}+\myvec\nu\times \myvec H^{\rm in},\\
\myvec\nu\cdot\myvec H_1^\pm&=\pm\frac{1}{2}\varrho_{\rm
  M}+\frac{1}{2}\myvec\nu\cdot\tilde{\cal
  S}_{k_1}(\myvec\nu'\sigma_{\rm M}+\myvec M_{\rm s})
+\frac{1}{2}\myvec\nu\cdot{\cal R}_{k_1}
(\myvec \nu'\sigma_{\rm E}+\myvec J_{\rm s})\nonumber\\
&\qquad\qquad
-\frac{1}{2}\myvec\nu\cdot\boldsymbol{\cal N}_{k_1}
\varrho_{\rm M}+\myvec\nu\cdot \myvec H^{\rm in},\\
[\nabla\cdot\myvec H_1]^\pm&=
\mp\frac{{\rm i}k_1}{2}\sigma_{\rm M}
+\frac{1}{2}\nabla\cdot\tilde{\cal S}_{k_1}
(\myvec\nu'\sigma_{\rm M}+\myvec M_{\rm s})
-\frac{{\rm i}k_1}{2}\tilde{\cal S}_{k_1}\varrho_M\,,\\
[\nabla\cdot\myvec E_2]^\pm&=
\pm\frac{{\rm i}k_1}{2\kappa}\sigma_{\rm E}
+\frac{{\rm i}k_1}{2}\tilde{\cal S}_{k_2}\varrho_{\rm E}
-\frac{1}{2}\nabla\cdot\tilde{\cal S}_{k_2}
(\kappa^{-1}\myvec\nu'\sigma_{\rm E}+\myvec J_{\rm s})\,,\\
\myvec\nu\cdot\myvec E_2^\pm&=
\mp\frac{1}{2\kappa}\varrho_{\rm E}
+\frac{1}{2\kappa}\myvec\nu\cdot\boldsymbol{\cal N}_{k_2}
\varrho_{\rm E}
+\frac{1}{2\kappa}\myvec\nu\cdot{\cal R}_{k_2}
(\myvec\nu'\sigma_{\rm M}+\kappa\myvec M_{\rm s})
\nonumber\\
&\qquad\qquad 
-\frac{1}{2}\myvec\nu\cdot\tilde{\cal S}_{k_2}
(\kappa^{-1}\myvec\nu'\sigma_{\rm E}+\myvec J_{\rm s})\,,\\
\myvec\nu\times\myvec E_2^\pm&=
\pm\frac{1}{2}\myvec M_{\rm s}
+\frac{1}{2\kappa}\myvec\nu\times\boldsymbol{\cal N}_{k_2}
\varrho_{\rm E}+\frac{1}{2\kappa}\myvec\nu\times{\cal R}_{k_2}
(\myvec\nu'\sigma_{\rm M}+\kappa\myvec M_{\rm s})
\nonumber\\
&\qquad\qquad
-\frac{1}{2}\myvec\nu\times\tilde{\cal S}_{k_2}
(\kappa^{-1}\myvec\nu'\sigma_{\rm E}+\myvec J_{\rm s})\,,\\
\myvec\nu\times\myvec H_2^\pm&=
\mp\frac{1}{2}\myvec J_{\rm s}
-\frac{1}{2}\myvec\nu\times\tilde{\cal S}_{k_2}
(\myvec\nu'\sigma_{\rm M}+\kappa\myvec M_{\rm s})
\nonumber\\
&\qquad
-\frac{1}{2}\myvec\nu\times{\cal R}_{k_2}
 (\kappa^{-1}\myvec\nu'\sigma_{\rm E}
+\myvec J_{\rm s})
+\frac{1}{2}\myvec\nu\times\boldsymbol{\cal N}_{k_2}
\varrho_{\rm M}\,,\\
\myvec\nu\cdot\myvec H_2^\pm&=
\mp\frac{1}{2}\varrho_{\rm M}
-\frac{1}{2}\myvec\nu\cdot\tilde{\cal S}_{k_2}(\myvec\nu'\sigma_{\rm M}
+\kappa\myvec M_{\rm s})
\nonumber\\
&\qquad
-\frac{1}{2}\myvec\nu\cdot{\cal R}_{k_2}
(\kappa^{-1}\myvec\nu'\sigma_{\rm E}+\myvec J_{\rm s})
+\frac{1}{2}\myvec\nu\cdot\boldsymbol{\cal N}_{k_2}
\varrho_{\rm M}\,,\\
[\nabla\cdot\myvec H_2]^\pm&=
\pm\frac{{\rm i}k_1}{2}\sigma_{\rm M}
-\frac{1}{2}\nabla\cdot\tilde{\cal S}_{k_2}(\myvec\nu'\sigma_{\rm M}
+\kappa\myvec M_{\rm s})
+\frac{{\rm i}k_1}{2}\kappa\tilde{\cal S}_{k_2}\varrho_M\,.
\end{align}

\renewcommand{\theequation}{B.\arabic{equation}}
\setcounter{equation}{0}

\subsection*{B. Divergence conditions}

The derivations of the conditions for \eqref{eq:divEW},
\eqref{eq:divHW}, and \eqref{eq:divH0E0} to hold are all very similar.
For this reason we only present a detailed derivation of the condition
for \eqref{eq:divHW} to hold.

The fields $\myvec E_W$ and $\myvec H_{W}$ are defined through
\eqref{eq:EHW1}, \eqref{eq:E1rep1}--\eqref{eq:H2rep11}, and the
solution to \eqref{eq:EHsys}. Appendix A and \eqref{eq:EHsys} give the
relations on $\Gamma$
\begin{align}
\lambda\kappa\myvec \nu\times\myvec E_W^+&=
\myvec\nu\times\myvec E_W^-\,,\label{eq:RVA1}\\
\lambda\kappa\myvec \nu\cdot\myvec H_W^+&=
\myvec\nu\cdot\myvec H_W^-\,,\label{eq:RVA2}\\
\gamma_{\rm M}[\nabla\cdot\myvec H_W]^+&=
[\nabla\cdot\myvec H_W]^-\,.\label{eq:RVA3}
\end{align}
By combining the surface divergence of
\eqref{eq:RVA1} with \eqref{eq:RVA2} we get
\begin{equation}
\lambda\kappa({\rm i}k_1\myvec\nu\cdot\myvec
H_2^+-\myvec\nu\cdot[\nabla\times\myvec E_2)]^+)=
{\rm i}k_1\myvec\nu\cdot\myvec H_1^-
-\myvec\nu\cdot[\nabla\times \myvec E_1)]^-\,,
\end{equation}
where we have used $\myvec\nu\cdot(\nabla\times\myvec\nu\times(\myvec\nu\times
\myvec E_i))=-\myvec \nu\cdot(\nabla\times\myvec E_i)$, $i=1,2$.
By \eqref{eq:E1rep1}--\eqref{eq:H2rep11} and limits in Appendix
A this leads to
\begin{multline}
\lambda\kappa\left(\kappa^{-1}\myvec\nu\cdot
[\nabla(\nabla\cdot{\cal S}_{k_2})]^+(\myvec\nu'\sigma_{\rm M}
+\kappa\myvec M_{\rm s})
-{\rm i}k_1\myvec\nu\cdot\boldsymbol{\cal N}_{k_2}\varrho_{\rm M}
+{\rm i}k_1\varrho_{\rm M}\right)\\
=-\myvec\nu\cdot[\nabla(\nabla\cdot {\cal S}_{k_1})]^-
(\myvec\nu'\sigma_{\rm M}+\myvec M_{\rm s})
+{\rm i}k_1\myvec\nu\cdot\boldsymbol{\cal N}_{k_1}\varrho_{\rm M}
+{\rm i}k_1\varrho_{\rm M}\,.
\label{eq:surfdiv}
\end{multline}
A comparison of \eqref{eq:surfdiv} with the limits $\myvec
\nu\cdot[\nabla(\nabla\cdot\myvec H_1)]^-$ and $\myvec
\nu\cdot[\nabla(\nabla\cdot\myvec H_2)]^+$ gives
\begin{equation}
\lambda\myvec\nu\cdot[\nabla(\nabla\cdot\myvec H_2)]^+=
\myvec\nu\cdot[\nabla(\nabla\cdot\myvec H_1)]^-\,.
\label{eq:divcond2}
\end{equation} 
Let $\psi_W=\nabla\cdot \myvec H_W$, with $\myvec H_W$ from
\eqref{eq:EHW1}. The fundamental solution \eqref{eq:fund} and the
boundary conditions \eqref{eq:RVA3} and \eqref{eq:divcond2} make
$\psi_W$ satisfy
\begin{equation}
\left\{
\begin{array}{ll}
\Delta\psi_W(\myvec r)+k_2^2\psi_W(\myvec r)=0\,,& \myvec r\in\Omega_1\,,\\
\Delta\psi_W(\myvec r)+k_1^2\psi_W(\myvec r)=0\,,& \myvec r\in\Omega_2\,,\\
\gamma_{\rm M} \psi_W^+(\myvec r)=\psi_W^-(\myvec r)\,,& \myvec r\in\Gamma\,,\\
\lambda\myvec\nu\cdot[\nabla\psi_W]^+(\myvec r)=
\myvec\nu\cdot[\nabla\psi_W]^-(\myvec r)\,,& \myvec r\in\Gamma\,,\\
\left(\partial_{\hat{\myvec r}}-{\rm i}k_2\right)\psi_W(\myvec r)=
o\left(\vert\myvec r\vert^{-1}\right)\,,& \vert\myvec r\vert\to\infty\,.
\end{array}\right.
\label{eq:psiWeq}
\end{equation}
By rescaling $\psi_W$ in $\Omega_1$, problem \eqref{eq:psiWeq} becomes
identical to problem ${\sf B}_0$ with $\alpha=\lambda\bar{\gamma}_{\rm
  M}/\vert{\gamma}_{\rm M}\vert^2$. Thus if
$\{k_1,k_2,\alpha=\lambda\bar{\gamma}_{\rm M}\}$ is such that the
conditions of Proposition~\ref{prop:uniqueB0} hold, then
\eqref{eq:psiWeq} only has the trivial solution $\nabla\cdot\myvec
H_W=0$ for $\myvec r\in\Omega_1\cup\Omega_2$.

The condition for $\nabla\cdot\myvec E_W=0$ is that the set
$\{k_1,k_2,\alpha=\bar\gamma_{\rm E}\bar\kappa\}$ is such that the
conditions of Proposition~\ref{prop:uniqueB0} hold. The condition for
\eqref{eq:divH0E0} to hold is that $(\Arg(k_1),\Arg(k_2))$ is in the
set of points of Figure \ref{fig:hexagon}(a).

\renewcommand{\theequation}{C.\arabic{equation}}
\setcounter{equation}{0}

\subsection*{C. Fulfillment of Maxwell's equations}

We show that $\myvec E$ and $\myvec H$ of \eqref{eq:repEH} satisfy
\eqref{eq:Max123C} and that $\myvec E_W$ and $\myvec H_W$ of
\eqref{eq:EHW1} satisfy \eqref{eq:D0123} if $\nabla\cdot\myvec
E_i(\myvec r)=\nabla\cdot\myvec H_i(\myvec r)=0$, $i=1,2$, and $\myvec
r\in\Omega_1\cup\Omega_2$.

The rotation of \eqref{eq:H1rep11} and \eqref{eq:H2rep11} can be
written
\begin{align}
\nabla\times\myvec H_1(\myvec r)&=
\frac{{\rm i}k_1}{2}{\cal R}_{k_1}(\myvec\nu'\sigma_{\rm M}
+\myvec M_{\rm s})(\myvec r)
-\frac{{\rm i}k_1}{2}\tilde{\cal S}_{k_1}(\myvec\nu'\sigma_{\rm E}
+\myvec J_{\rm s})(\myvec r)
\nonumber\\
+\frac{1}{2}\nabla&(\nabla\cdot{\cal S}_{k_1}(\myvec\nu'\sigma_{\rm E}
+\myvec J_{\rm s}))(\myvec r)+\nabla\times \myvec H^{\rm in}(\myvec r)\,,
\quad\myvec r\in \Omega_1\cup \Omega_2\,,
\label{eq:H1rep11NN}\\
\nabla\times\myvec H_2(\myvec r)&=
-\frac{{\rm i}k_1}{2}{\cal R}_{k_2}(\myvec\nu'\sigma_{\rm M}
+\kappa\myvec M_{\rm s})(\myvec r)
+\frac{{\rm i}k_1}{2}\tilde{\cal S}_{k_2}
(\myvec\nu'\sigma_{\rm E}+\kappa\myvec J_{\rm s})(\myvec r)
\nonumber\\
&-\frac{1}{2}\nabla(\nabla\cdot{\cal S}_{k_2}
(\kappa^{-1}\myvec\nu'\sigma_{\rm E}+\myvec J_{\rm s}))(\myvec r)\,,
\quad\myvec r\in\Omega_1\cup\Omega_2\,.
\label{eq:H2rep11NN}
\end{align}
If $\nabla\cdot\myvec E_i=0$, $i=1,2$, it follows from
\eqref{eq:E1rep1} and \eqref{eq:E2rep1} that
\begin{align}
\tilde{\cal S}_{k_1}\varrho_{\rm E}(\myvec r)
-\nabla\cdot{\cal S}_{k_1}(\myvec \nu'\sigma_{\rm E}
+\myvec J_{\rm s})(\myvec r)&=0\,,
\quad\myvec r\in \Omega_1\cup\Omega_2\,,\label{eq:E1rep1mod}\\
\tilde{\cal S}_{k_2}\varrho_{\rm E}(\myvec r)
-\nabla\cdot{\cal S}_{k_2}(\kappa^{-1}\myvec\nu'\sigma_{\rm E}
+\myvec J_{\rm s})(\myvec r)&=0\,,
\quad\myvec r\in\Omega_1\cup\Omega_2\,.\label{eq:E2rep1mod}
\end{align}
The Amp\`ere law  
\begin{equation}
\begin{split}
\nabla\times\myvec H_1(\myvec r)&=-{\rm i}k_1\myvec E_1(\myvec r)\,,
\quad\myvec r\in\Omega_1\cup\Omega_2\,,\\
\nabla\times\myvec H_2(\myvec r)&=-{\rm i}k_1\kappa\myvec E_2(\myvec r)\,,
\quad\myvec r\in\Omega_1\cup\Omega_2\,,
\end{split}
\label{eq:Max24}
\end{equation}
now follows by combining \eqref{eq:H1rep11NN} and \eqref{eq:E1rep1mod}
with \eqref{eq:E1rep1}, and by combining \eqref{eq:H2rep11NN} and
\eqref{eq:E2rep1mod} with \eqref{eq:E2rep1}. The Faraday law
\begin{equation}
\nabla\times\myvec E_i(\myvec r)={\rm i}k_1\myvec H_i(\myvec r)\,,
\quad i=1,2\,,
\quad\myvec r\in\Omega_1\cup\Omega_2\,,
\label{eq:Max13}
\end{equation}
follows in the same manner from $\nabla\cdot\myvec H_i=0$, $i=1,2$,
and by combining the rotation of \eqref{eq:E1rep1} with
\eqref{eq:H1rep11} and the rotation of \eqref{eq:E2rep1} with
\eqref{eq:H2rep11}. From \eqref{eq:Max24} and \eqref{eq:Max13} it
follows that $\myvec E$ and $\myvec H$ of \eqref{eq:repEH} satisfy
\eqref{eq:Max123C} and that $\myvec E_W$ and $\myvec H_W$ of
\eqref{eq:EHW1} satisfy \eqref{eq:D0123}.

\renewcommand{\theequation}{D.\arabic{equation}}
\setcounter{equation}{0}

\subsection*{D. Uniqueness for problems {\sf C}, ${\sf C}_0$, and 
  ${\sf D}_0$}

We sketch a proof that problem ${\sf C}_0$ has only the trivial
solution and that problem {\sf C} has at most one solution by relating
these problems to problem ${\sf A}_0$ and {\sf A}. We also justify
that the criteria for problem ${\sf D}_0$ to only have the trivial
solution are the same as the criteria in
Proposition~\ref{prop:uniqueB0} that make problem ${\sf B}_0$ only
have the trivial solution.

Let $S_R$ be a sphere of radius $R$ with outward unit normal $\myvec
n$. Assume that $S_R$ is sufficiently large to contain $\Gamma$ and
let $\Omega_{1,R}=\{\myvec r\in\Omega_1:\vert\myvec r\vert < R\}$.
From Gauss' theorem we obtain energy relations for problem ${\sf A}_0$
and problem ${\sf C}_0$
\begin{equation}
\int_{S_R}(U\nabla\bar U)\cdot\myvec n\,{\rm d}S=
\int_{\Omega_{1,R}}\left(\vert\nabla U\vert^2
-\bar k_1^2\vert U\vert^2\right){\rm d}v+
\int_{\Omega_{2}}\left(\bar\kappa^{-1}\vert\nabla U\vert^2-\bar k_1^2\vert
U\vert^2\right){\rm d}v\,,
\label{eq:unablau}
\end{equation}
\begin{multline}
-{\rm i}\bar k_1\int_{S_R}(\bar{\myvec E}\times{\myvec H})\cdot\myvec n
\,{\rm d}S=
\int_{\Omega_{1,R}}\left(\vert k_1\vert^2\vert\myvec E\vert^2-\bar k_1^2\vert 
\myvec H\vert^2\right){\rm d}v\\
+\int_{\Omega_{2}}\left(\vert k_1\kappa\vert^2\bar{\kappa}^{-1}
\vert\myvec E\vert^2-\bar k_1^2\vert\myvec H\vert^2\right){\rm d}v\,.
\label{eq:barEH}
\end{multline}
The right hand sides of \eqref{eq:unablau} and \eqref{eq:barEH} are
equivalent. By using techniques similar to those in
\cite[pp.~309--310]{KleiMart88} and \cite[p.~1434]{KresRoac78} it
follows that when $(\Arg(k_1),\Arg(k_2))$ is in the set of points of
Figure \ref{fig:hexagon}(a), then $U=0$ and $\myvec E=\myvec H=\myvec
0$ in $\Omega_1\cup \Omega_2$. Standard arguments give that problem
{\sf C} has at most one solution when problem ${\sf C}_0$ only has the
trivial solution.

In the same manner as above the energy relation for problem ${\sf
  D}_0$ is shown to be equivalent to the energy relation for problem
${\sf B}_0$. We can again use \cite[pp.~309--310]{KleiMart88} and
\cite[p.~1434]{KresRoac78} to find the criteria that lead to $W=0$ and
$\myvec H_W=\myvec E_W=\myvec 0$. These are the criteria for the set
$\{k_1,k_2,\alpha=\lambda\}$ in Proposition~\ref{prop:uniqueB0}.

\begin{small}

\end{small}


\begin{thebibliography}{00}
  
\bibitem{Axels06} A. Axelsson, ``Transmission problems for Maxwell's
  equations with weakly Lipschitz interfaces'', Math. Methods Appl.
  Sci., {\bf 29}, 665--714 (2006).
   
\bibitem{Annsop16} A.-S. Bonnet-Ben Dhia and C. Carvalho and L.
  Chesnel and P. Ciarlet, ``On the use of perfectly matched layers at
  corners for scattering problems with sign-changing coefficients'',
  J. Comput. Phys., {\bf 322}, 224--247 (2016).

\bibitem{ColtKres83} D. Colton and R. Kress, \emph{Integral equation
    methods in scattering theory}, John Wiley \& Sons Inc., New York,
  1983.
  
\bibitem{ColtKres98} D. Colton and R. Kress, \emph{Inverse acoustic and
    electromagnetic scattering theory}, 2nd ed., Appl. Math. Sci.,
  vol. 93, Springer-Verlag, Berlin, 1998.

\bibitem{EpsGreNei13} C. L. Epstein, L. Greengard, and M. O'Neil,
  ``Debye sources and the numerical solution of the time harmonic
  {M}axwell equations II'', Comm. Pure Appl. Math., {\bf 66},
  753--789 (2013).
   
\bibitem{EpsGreNei19} C. L. Epstein, L. Greengard, and M. O'Neil, ``A
  high-order wideband direct solver for electromagnetic scattering
  from bodies of revolution'', J. Comput. Phys., {\bf 387}, 205--229
  (2019).
  
\bibitem{Hels09} J. Helsing, ``Integral equation methods for elliptic
  problems with boundary conditions of mixed type'', J. Comput.
  Phys., {\bf 228}, 8892--8907 (2009).
  
\bibitem{Hels18} J. Helsing, ``Solving integral equations on
  piecewise smooth boundaries using the RCIP method: a tutorial'',
  arXiv:1207.6737v9 [physics.comp-ph] (revised 2018).

\bibitem{HelsKarl17} J. Helsing and A. Karlsson, ``Resonances in
  axially symmetric dielectric objects'', IEEE Trans. Microw. Theory
  Tech., {\bf 65}, 2214--2227 (2017).
  
\bibitem{HelsKarl18} J. Helsing and A. Karlsson, ``On a Helmholtz
  transmission problem in planar domains with corners'', J. Comput.
  Phys., {\bf 371}, 315--332 (2018).
  
\bibitem{HelsKarl19} J. Helsing and A. Karlsson, ``Physical-density
  integral equation methods for scattering from multi-dielectric
  cylinders'', J. Comput. Phys., {\bf 387}, 14--29 (2019).
  
\bibitem{HelsPerf18} J. Helsing and K.-M. Perfekt, ``The spectra of
  harmonic layer potential operators on domains with rotationally
  symmetric conical points'', J. Math. Pures Appl., {\bf 118},
  235--287 (2018).
  
\bibitem{HelsRose20} J. Helsing and A. Rosén, ``Dirac integral
  equations for dielectric and plasmonic scattering'',
  arXiv:1911.00788 [math.AP] (2019).
  
\bibitem{Homola08} J. Homola, ``Surface plasmon resonance sensors for
  detection of chemical and biological species'', Chem. Rev., {\bf
    108}, 462--493 (2008).

\bibitem{KirsHett15} A. Kirsch and F. Hettlich, \emph{The mathematical
    theory of time-harmonic Maxwell's equations}, Appl. Math. Sci.,
  vol. 190, Springer, Cham, 2015.
  
\bibitem{KleiMart88} R.E. Kleinman and P.A. Martin, ``On single
  integral equations for the transmission problem of acoustics'', SIAM
  J. Appl. Math., {\bf 48}, 307--325 (1988).
  
\bibitem{KresRoac78} R. Kress and G.F. Roach. ``Transmission problems
  for the Helmholtz equation'', J. Math. Phys., {\bf 19}, 1433--1437
  (1978).
    
\bibitem{LaiOneil19} J. Lai and M. O'Neil, ``An FFT-accelerated direct
  solver for electromagnetic scattering from penetrable axisymmetric
  objects'', J. Comput. Phys., {\bf 390}, 152--174 (2019).
  
\bibitem{LiFuShank18} J. Li, X. Fu, and B. Shanker, ``Decoupled
  potential integral equations for electromagnetic scattering from
  dielectric objects'', IEEE Trans. Antennas Propag., {\bf 67},
  1729--1739 (2018).
  
\bibitem{LuTsGuHo19} X. Luo, D. Tsai, M. Gu, and M. Hong,
  ``Extraordinary optical fields in nanostructures: from
  sub-diffraction-limited optics to sensing and energy conversion''
  Chem. Soc. Rev., {\bf 48}, 2458--2494 (2019).
  
\bibitem{MautHarr77} J.R. Mautz and R.F. Harrington.
  ``Electromagnetic scattering from a homogeneous body of
  revolution'', Tech. Rep. TR-77-10, Dept. of electrical and computer
  engineering, Syracuse Univ., New York, (1977).
  
\bibitem{Muller69} C. Müller, {\it Foundations of the Mathematical
    Theory of Electromagnetic Waves.} Berlin, Springer-Verlag, 1969.
  
\bibitem{Raeth88} H. Raether, {\it Surface plasmons on smooth and rough
    surfaces and on gratings}, vol. 111 of {\it Springer tracts in
    modern physics}, Springer, Berlin, 1988.
  
\bibitem{TaskOija06} M. Taskinen and P. Ylä-Oijala, ``Current and
  charge integral equation formulation'', IEEE Trans. Antennas
  Propag., {\bf 54}, 58--67 (2006).
  
\bibitem{TzarSihv18} D. Tzarouchis and A. Sihvola, ``Light scattering
  by a dielectric sphere: perspectives on the Mie resonances'', Appl.
  Sci., {\bf 8}, Article no. 184 (2018).
  
\bibitem{Vico16} F. Vico, M. Ferrando-Bataller, T. Jiménez, and D.
  Sánchez-Escuderos, ``A non-resonant single source augmented integral
  equation for the scattering problem of homogeneous lossless
  dielectrics'', 2016 IEEE Int. Symp. Antennas Propag. (APSURSI),
  745--746 (2016).
  
\bibitem{VicGreFer18} F. Vico, L. Greengard, and M. Ferrando.
  ``Decoupled field integral equations for electromagnetic scattering
  from homogeneous penetrable obstacles'', Comm. Part. Differ. Equat.
  {\bf 43}, 159--184 (2018).
  
\bibitem{YouHaoMar12} P. Young, S. Hao, and P.G. Martinsson, ``A
  high-order Nyström discretization scheme for boundary integral
  equations defined on rotationally symmetric surfaces'', J. Comput.
  Phys., {\bf 231}, 4142--4159 (2012).

\end{thebibliography}
\end{document}